\newcommand{\best}[1]{\textbf{#1}}
\newcommand{\vbes}[1]{\textbf{#1}}
\newcommand{\Nat}{\ensuremath{\mathbb{N}}}
\newcommand{\power}[1]{\ensuremath{2^{#1}}}
\newcommand{\sema}[1]{\llbracket #1 \rrbracket}
\newcommand{\values}{\ensuremath{\mathcal{V}}}
\newcommand{\vars}{\ensuremath{\mathit{Vars}}}
\newcommand{\assignment}{\ensuremath{\nu}}
\newcommand{\assignments}[1]{\ensuremath{\mathit{Assignments}}(#1)}
\newcommand{\assmt}[1]{\mathbf{#1}}
\newcommand{\FOL}[1]{\ensuremath{\mathit{FOL}}(#1)}
\newcommand{\FOLX}{\ensuremath{\mathit{FOL}}}
\newcommand{\QF}[1]{\ensuremath{\mathit{QF}}(#1)}
\newcommand{\QFX}{\ensuremath{\mathit{QF}}}
\newcommand{\FOLentails}{\ensuremath{\models}}
\newcommand{\FOLentailsT}[1]{\ensuremath{\models_{#1}}}
\newcommand{\states}{\ensuremath{\mathcal{S}}}
\newcommand{\sys}{\ensuremath{\mathit{Sys}}}
\newcommand{\env}{\ensuremath{\mathit{Env}}}
\newcommand{\plays}{\mathit{Plays}}
\newcommand{\win}{\mathit{Win}}
\newcommand{\parity}{\mathit{Parity}}
\newcommand{\symgame}{\mathcal{G}}
\newcommand{\dom}{\ensuremath{\mathit{dom}}}
\newcommand{\Loc}{\mathit{Loc}}
\newcommand{\linit}{l_{\mathit{init}}}
\newcommand{\strat}{\sigma}
\newcommand{\templogic}{\ensuremath{\mathit{RP{-}LTL}}}
\newcommand{\progvars}{\ensuremath{\mathbb{X}}}
\newcommand{\inputs}{\ensuremath{\mathbb{I}}}
\newcommand{\specvars}{\ensuremath{\progvars \cup \inputs \cup \progvars'}}
\newcommand{\true}{\ensuremath{\top}}
\newcommand{\false}{\ensuremath{\bot}}
\newcommand{\lang}[1]{\mathcal{L}(#1)}
\newcommand{\tobool}[1]{\widehat{#1}}
\newcommand{\atoms}{\ensuremath{\mathit{Atoms}}}
\newcommand{\AP}{\ensuremath{\mathit{AP}}}
\newcommand{\tslmt}{\text{TSL-MT}}
\newcommand{\preds}{\ensuremath{\mathcal{P}}}
\newcommand{\qinit}{q_{\mathit{init}}}
\newcommand{\UNSAT}{\ensuremath{\mathsf{UNSAT}}}
\newcommand{\OPEN}{\ensuremath{\mathsf{OPEN}}}
\newcommand{\SAFETY}{\ensuremath{\mathsf{SAFETY}}}
\newcommand{\verdict}{\ensuremath{\mathit{Verdict}}}
\newcommand{\Verdicts}{\{\UNSAT,\SAFETY,\OPEN\}}
\newcommand{\as}{\mathsf{A}}
\newcommand{\ga}{\mathsf{G}}
\newcommand{\agset}{\{\as,\ga\}}
\newcommand{\formF}{F}
\newcommand{\formE}{E}
\newcommand{\Imp}{\ensuremath{\mathit{Imp}}}
\newcommand{\Curr}{\ensuremath{\mathit{Curr}}}
\newcommand{\DedInv}{\ensuremath{\mathit{ImpInv}}}
\newcommand{\toform}[1]{\llparenthesis #1 \rrparenthesis}
\newcommand{\subform}{\ensuremath{\mathit{Closure}}}
\newcommand{\boolcomb}{\ensuremath{\mathit{BoolComb}}}
\newcommand{\BC}{\ensuremath{\mathit{BC}}}
\newcommand{\stateformula}{\ensuremath{\mathit{Formula}}}
\newcommand{\transform}{\textsc{Transform}}
\newcommand{\ruleset}{\ensuremath{\templogic}$-$\mathit{Rules}}
\newcommand{\rulename}[1]{\textbf{\textsc{(#1)}}}
\newcommand{\nextextend}{\rulename{$\LTLnext$-Ext}}
\newcommand{\substunsat}{\rulename{UNSAT}}
\newcommand{\substunsatF}{\rulename{UNSAT-$\LTLfinally$}}
\newcommand{\substtrue}{\rulename{Subst-$\true$}}
\newcommand{\substfalse}{\rulename{Subst-$\false$}}
\newcommand{\simplifyimp}{\rulename{Simplify-$\rightarrow$}}
\newcommand{\simplifyand}{\rulename{Simplify-$\land$}}
\newcommand{\simplifynn}{\rulename{Simplify-Non-Nested}}
\newcommand{\propagateassump}{\rulename{Propagate-Assump}}
\newcommand{\propagateG}{\rulename{Propagate-$\LTLglobally$}}
\newcommand{\propagateW}{\rulename{Propagate-$\LTLweakuntil$}}
\newcommand{\geninv}{\rulename{Gen-Inv}}
\newcommand{\geninvp}{\rulename{Gen-Inv-P}}
\newcommand{\genreach}{\rulename{Gen-Reach}}
\newcommand{\chainimp}{\rulename{Chain-Imp}}
\newcommand{\chainimpG}{\rulename{Chain-Imp-$\LTLglobally$}}
\newcommand{\chainimpF}{\rulename{Chain-Imp-$\LTLfinally$}}
\newcommand{\chainimpN}{\rulename{Chain-Imp-$\LTLnext$}}
\newcommand{\joinimp}{\rulename{Join-Imp}}
\newcommand{\mapstonn}{\mapsto_{non-nested}}
\newcommand{\expand}{\ensuremath{\mathit{expand}}}
\newcommand{\expandSet}{\ensuremath{\mathit{exps}}}
\newcommand{\propagationPreds}{\ensuremath{\mathit{PropPreds}}}
\newcommand{\propagate}{\ensuremath{\mathit{propagate}}}
\newcommand{\generatedPreds}{\ensuremath{\mathit{GenPreds}}}
\newcommand{\applyRules}{\ensuremath{\mathit{applyRules}}}
\newcommand{\nextState}{\ensuremath{\mathit{nextState}}}
\newcommand{\toolname}{\texttt{tslmt2rpg}}
\newcommand{\upd}[2]{\ensuremath{\llbracket#1\leftarrowtail#2\rrbracket}}
\newcommand{\plang}[1]{\mathcal{L}_P(#1)}
\newcommand{\uniquerun}{\mathit{uniqueRun}}
\newcommand{\xiruns}{{(\assignments{\progvars \cup \inputs})}^\omega}
\newcommand{\apruns}{{(\power{\AP(\tobool{\varphi})})}^\omega}
\newcommand{\concretize}{\mathit{concretize}}
\begin{document}

\title{Translation of Temporal Logic for Efficient Infinite-State Reactive Synthesis (Full Version)}


\author{Philippe Heim}
\orcid{0000-0002-5433-8133}
\affiliation{%
  \institution{CISPA Helmholtz Center for Information Security}
  \country{Germany}
}
\email{philippe.heim@cispa.de}

\author{Rayna Dimitrova}
\orcid{0009-0006-2494-8690}
\affiliation{%
  \institution{CISPA Helmholtz Center for Information Security}
  \country{Germany}
}
\email{dimitrova@cispa.de}


\begin{abstract}
Infinite-state reactive synthesis has attracted significant attention in recent years,  which has led to the emergence of novel symbolic techniques for solving infinite-state games.
Temporal logics featuring variables over infinite domains offer an expressive high-level specification language for infinite-state reactive systems.  
Currently, the only way to translate these temporal logics into symbolic games is by naively encoding the specification to use techniques designed for the Boolean case.
An inherent limitation of this approach is that it results in games in which the semantic structure of the temporal and first-order constraints present in the formula is lost.
There is a clear need for techniques that leverage this information in the translation process to speed up solving the generated games. 

In this work, we propose the first approach that addresses this gap. 
Our technique constructs a monitor incorporating first-order and temporal reasoning at the formula level, enriching the constructed game with semantic information that leads to more efficient solving.  We demonstrate that thanks to this,  our method outperforms the state-of-the-art techniques across a range of benchmarks.
\end{abstract}


\maketitle

\section{Introduction}\label{sec:intro}
Program synthesis,  that is,  the automatic construction of programs from high-level specifications, is a classical problem in computer science.
Reactive systems,  characterized by their ongoing, unbounded interaction with a potentially adversarial environment are typically specified  using linear temporal logic (LTL)~\cite{Pnueli77}.
Then, the task of \emph{reactive synthesis}~\cite{PnueliR89} is to synthesize a system implementation that reacts to all possible inputs from the environment in a way that ensures that all possible resulting executions satisfy the given specification.
Traditionally,  reactive synthesis methods have focused on systems with Boolean variables, such as finite-state controllers,  protocols, or hardware circuits.
However, the prevalence of programs that integrate complex control mechanisms and data transformations has led to the recent active development of specification languages and synthesis techniques for reactive systems with domains beyond Boolean.
Temporal Stream Logic (TSL)~\cite{FinkbeinerKPS19} is an extension of LTL with  updates and predicates over arbitrary function terms, capable of expressing requirements on data transformations.
An implementation is required to satisfy the TSL specification for all possible instantiations of the data processing functions.
Restricting the set of possible instantiations requires adding explicit assumptions to the specification,  which increases its complexity.
This has lead to the development of TSL modulo Theories (TSL-MT)~\cite{FinkbeinerHP22},   which extends TSL with first-order theories,  and thus enables the use of interpreted functions and predicate symbols.
The synthesis from TSL-MT specifications has been studied in~\cite{ChoiFPS22, MaderbacherB22}. These techniques extend the approach to TSL synthesis proposed in~\cite{FinkbeinerKPS19}, which is based on abstracting the TSL specification into a propositional LTL formula and invoking a method for finite-state reactive synthesis.
Thus, the control synthesis task is off-loaded to the finite-state reactive synthesis method, and Syntax-Guided Synthesis (SyGuS)~\cite{AlurBJMRSSSTU13} or SMT-based analysis are used for reasoning about data.
These methods implement the interaction between the reactive synthesis and the data layer as a refinement loop that adds assumptions to the propositional LTL formula.  
This can increase the size of the specification significantly.
Furthermore, it requires solving a finite-state reactive synthesis instance from scratch with a new specification every time assumptions are added.

To address the above limitation of the techniques in~\cite{ChoiFPS22, MaderbacherB22},  a direct approach to solving reactive synthesis games over non-Boolean domains was proposed recently in~\cite{HeimD24}.
The procedure in~\cite{HeimD24} circumvents the abstraction by lifting the game-solving  methods to a symbolic representation of the infinite-state game.  However,  the method and tool from~\cite{HeimD24} assume that the synthesis problem is directly specified as a so-called \emph{reactive program game}.
In principle,  an \tslmt\ formula $\varphi$ can be translated into a \emph{reactive program game} by first converting $\varphi$ to a propositional LTL formula $\tobool{\varphi}$ and then applying the standard procedure of constructing a deterministic $\omega$-automaton for $\tobool{\varphi}$ and the resulting synthesis game~\cite{Duret-LutzRCRAS22}.  Finally,  the first-order predicates from $\varphi$ are reintroduced to obtain a reactive program game.
A key drawback of this workflow is that once the specification is Booleanized into $\tobool{\varphi}$, the subsequent game construction does not take advantage of any theory reasoning.  One of the reasons is that the first-order formulas represented by the propositions are hidden from this construction,  but, more crucially,  we are currently lacking techniques that are able to make use of this semantic information on the temporal formula level.

In this paper,  we aim to fill this gap. 
We propose the first approach that augments with first-order reasoning the translation of first-order temporal logic specifications to two-player games encoding the reactive program synthesis problem. 
We introduce a new logic, called \emph{Reactive Program Linear Temporal Logic} (\templogic), which generalizes \tslmt\ by allowing constraints on ``next-step'' variables.  
Then,  we define the notion of a \emph{monitor for an \templogic\ formula} which provides a principled way of adding semantic information to the synthesis games constructed from \templogic\ formulas. 
This approach is \emph{generally applicable},  and independent of the game construction,  the monitor construction, and the method used subsequently for solving the resulting synthesis game.  
Our main contribution is a procedure for the  construction of a monitor for an \templogic\ specification.
Inspired by  constructions for the translation
of LTL to automata~\cite{EsparzaKS20, Vardi94},  the states of the monitor are collections of \templogic\ formulas, which enables semantics-based simplification operations.
A key distinguishing feature of our approach is that the monitor construction integrates temporal and first-order reasoning in order to perform extensive simplifications.  More concretely,  it establishes invariants and ranking arguments (by employing SMT and CHC solvers,  as well as fixpoint computation engines) that enable such simplifications for non-trivial specifications.

\paragraph{Contributions}
Our contributions  can be summarized as follows.
\begin{itemize}
\item We introduce the \emph{temporal logic \templogic\ } for the specification of reactive programs whose variables have domains beyond  Boolean.
\item  We define the notion of \emph{monitor for an \templogic\ formula $\varphi$}, which can be combined with any symbolic game for $\varphi$ to simplify the resulting synthesis game.
\item We describe a procedure for constructing a monitor for a  given \templogic\ formula. The monitor construction employs first-order and temporal logical reasoning on the formula level to derive information about the (un)-satisfiability of sub-formulas that enables the simplification of the resulting synthesis game.
\item We demonstrate that a prototype implementation of our method leads to performance improvement over the state of the art across a range of benchmarks.
\end{itemize}
The proofs of all formal claims can be found in \Cref{sec:approofs}.

In the next section, we give a high-level overview of our approach and describe several motivating examples demonstrating its use cases and the challenges we address.

\section{Overview and Motivating Examples}\label{sec:examples}
We study the problem of synthesizing reactive programs operating over variables with potentially unbounded domains.  
Since reactive systems execute in an ongoing interaction with their environment,  the requirements that the synthesized system must satisfy are specified in a temporal logic.  
We consider one such specification language,  termed  \emph{Reactive Program Linear Temporal Logic} (\templogic), which generalizes LTL.
To express properties of  the data transformations the program performs over time, we equip the logic with quantifier-free atomic formulas in a given first-order logical theory. 
These formulas express constraints over the program's \emph{input variables} $\inputs$ and the \emph{program variables} $\progvars$ (which include the program's outputs).  Furthermore,  formulas can refer to the variables $\progvars'$ (which are ``primed'' copies of the variables $\progvars$) representing the value of the program variables at the next step of the execution.  The latter feature allows us to express variable assignments and to relate the values of variables over multiple time steps. 

As a simple example,  consider the \templogic\ formula
\[
\LTLglobally (e > 0 \rightarrow x' \leq  x+2 \land \LTLnext\LTLnext(x + y > 10))
\land 
\LTLeventually (x \geq 42)\]
over input variable $e$ and program variables $x$ and $y$, all ranging over the integers.
Here, $\LTLglobally$ is the LTL \emph{``globally''} operator stating that its argument should hold forever on, from the current step of the execution. 
The LTL \emph{``eventually''} operator  $\LTLeventually$ states that its argument should hold eventually at some point in the future. 
The remaining temporal operator $\LTLnext$ (\emph{``next''}) refers to the next step of the execution.
The above specification requires that whenever the value of the input variable $e$ is positive,  the value of $x$ in the next time point should be at most the current value of $x$ plus $2$,  and that in two time steps, $x+y > 10$ should hold. The conjunct $\LTLeventually (x \geq 42)$ requires that $x \geq 42$ is eventually true.
The program that regardless of the input always increments $x$ by $2$ and assigns the value $10-x$ to $y$, is one possible program that satisfies this specification for any initial $x$ and $y$.

The problem of checking if an \templogic\ formula $\varphi$ is realizable and if so synthesizing an implementation, is undecidable.
We can obtain a sound but incomplete method by adapting the classical automata-theoretic approach for LTL synthesis.  
To this end,  we transform $\varphi$ into an LTL formula $\tobool{\varphi}$ by replacing all atomic formulas (such as $e >0$,  $x' \leq x+2$,  and so on)  with Boolean propositions. 
We then use  $\tobool{\varphi}$ to construct a two-player game between the system and the environment.  In the resulting game, we replace back the propositions with the original formulas, thus obtaining a game that encodes the synthesis problem for $\varphi$.  This is an infinite-state game represented symbolically.  A number of techniques and some tools exist for solving (classes of) such games.  
In \Cref{sec:experiments} we reference one such tool for solving a class of games called \emph{reactive program games}.

The outlined reduction of the synthesis problem for \templogic\ to the problem of solving an infinite-state game has the fundamental limitation that the process of constructing the game is oblivious to the first-order formulas that the propositions represent.  Due to that, any reductions and simplifications that could take advantage of this information are not possible. 
Next in this section,  we present several examples where such semantic reductions can have a significant impact on the resulting game, and hence,  on the capability of game-solving techniques to fulfil their task.

\subsection{Motivating Examples}
 
\subsubsection{Detecting Unsatisfiable Specifications and Vacuous Requirements}

Our first two examples are of formulas that contain requirements that are unsatisfiable, or that are implied by other requirements in the specification.
As identifying this fact requires reasoning about first-order constraints in a temporal context,  it remains undetected in the translation outlined above. 
This results in synthesis games where recovering this information is hard for the game-solving procedures. This is also evidenced in our experimental evaluation presented in \Cref{sec:experiments}.

\begin{example}\label{ex:unsat-motivating}
Consider the following specification of a reactive program with input variable $e$ and program variables $x$ and $y$,  all of which have integer type:
\[\begin{array}{ll}
\varphi_{\sf unsat} := e > 0  \land x = 0 \rightarrow & 
\big(
x' = 0 \land \LTLnext\LTLglobally(x' = x + 1 \lor x' = x+y)  \land
\LTLeventually(x \leq  -10000) \land 
\\&\phantom{\big(}
 y' = e \land \LTLnext\LTLglobally(y' = y) \land \LTLnext(y > 0) 
 \big).
\end{array}
\]
The formula $\varphi_{\sf unsat}$ contains the assumption that at the first step the input $e$ is positive and $x=0$.
It requires that initially $x$ is assigned value $0$ (conjunct $x' = 0$) and from the next step on,  either $x$ is incremented by $1$ or $y$ is added to $x$ (conjunct $\LTLnext\LTLglobally(x' = x + 1 \lor x' = x+y$).
Further, it contains the requirement that eventually $x \leq  -10000$ holds, which could only be satisfied if $y$ is at some point negative after the first step. This, however contradicts the requirement that $y$ is initially assigned the positive value of $e$ (that is, $y'=e$), and from the next step on,  $y$ should remain unchanged (conjunct $\LTLnext\LTLglobally(y' = y)$).  Thus,   $\varphi_{\sf unsat}$ is unsatisfiable,  and hence unrealizable.
This, however, remains undetected when constructing the synthesis game for the propositional LTL formula $\tobool{\varphi_{\sf unsat}}$.

Establishing the above observations  formally, can be achieved by showing that in any reactive program where the assumption $e > 0 \land x=0$ and the requirements 
\[
x' = 0 \land \LTLnext\LTLglobally(x' = x + 1 \lor x' = x+y)  \land
y' = e \land \LTLnext\LTLglobally(y' = y) \land \LTLnext(y > 0) 
\]
are satisfied,  the property $x \geq -9999 \land y \geq 1$ is an \emph{inductive invariant} from the second step on.
\end{example}

The reasoning in \Cref{ex:unsat-motivating} is also applicable  in cases where some sub-formula is unsatisfiable,  but the overall specification can be realized by an appropriate program.  Then, we would like to simplify the synthesis game by construction, pruning away ``losing'' choices for the system.

\begin{example}\label{ex:vacuous-motivating}
Consider the following specification of a reactive program with input variable $e$ and program variables $x$ and $y$,  all of which have integer type.
\[\begin{array}{ll}
\varphi_{\sf vac} := & 
\LTLglobally(y > 0 \rightarrow x' = y) \land
\LTLglobally\big(\neg(y > 0) \rightarrow x' = x+ 1 -y\big) \land
\LTLfinally(x > 10000)
\land \\&
\LTLglobally(e > 10000 \rightarrow y' = e) \land
\LTLglobally\big(\neg(e > 10000) \rightarrow y' = 0\big).
\end{array}
\]
The formula $\varphi_{\sf vac}$ requires that whenever the value of program variable $y$ is positive,  $x$ is assigned $y$,
and otherwise, $x$ is assigned $x + 1 -y$.
Further, it contains the requirement that eventually $\LTLfinally(x > 10000)$.
This latter requirement is \emph{vacuous} in the context of $\varphi_{\sf vac}$, that is,  it is implied by the other requirements.
To see this, note that the requirements on $y$ imply that for any value of the input variable $e$,  the only way $y$ could be assigned a positive value is if this value is greater than $10000$,  and otherwise $y$ is assigned $0$. 
This,  together with  the requirements on $x'$ implies that either $x$ is eventually assigned a value greater than $10000$, or it keeps being incremented. 
In both cases eventually $x > 10000$ must hold.

In the construction of the synthesis game for the formula $\varphi_{\sf vac}$,  we would like to simplify the game by making use of the information that some requirement is vacuous. This, however, remains undetected when considering the propositional formula $\tobool{\varphi_{\sf vac}}$.

The fact that the requirement $\LTLfinally(x > 10000)$ is vacuous in the context of $\varphi_{\sf vac}$ can be established by showing that  every execution in a reactive program satisfying 
\[
\begin{array}{l}
\LTLglobally(y > 0 \rightarrow x' = y) \land
\LTLglobally\big(\neg(y > 0) \rightarrow x' = x+ 1 -y\big) \land 
\\
\LTLglobally(e > 10000 \rightarrow y' = e) \land
\LTLglobally\big(\neg(e > 10000) \rightarrow y' = 0\big)
\end{array}
\]
eventually reaches a state satisfying $x > 10000$,  that is by
proving a \emph{reachability property}.
\end{example}

\subsubsection{Winning Condition Simplification}

How complex the synthesis game is,  depends not only on the game structure, but also on the winning condition, that is on the type of property that the system player must enforce.  
Invariant properties are preferable to more complex ones such as, for example, repeated reachability. 
In our next example,  a repeated reachability requirement is implied by the conjuncts of the formula that specify invariant properties. Detecting this enables a simplification of the resulting game.

\begin{example}\label{ex:GF-motivating}
Consider the following specification of a reactive program with input variable $e$ and program variables $x$ and $c$,  all of which have integer type.
\[\begin{array}{ll}
\varphi_{\sf simplify} := & 
\LTLglobally(c' = 0 \lor c' =1)  \land \LTLglobally(\LTLnext(c = 1)  \rightarrow e =1) \land \\&
\LTLglobally(c=0 \rightarrow x' = x+ 1) \land 
\LTLglobally\big(\neg(c=0) \rightarrow x' = 10000\big) \land 
\LTLglobally \LTLfinally(x \geq 10000).
\end{array}
\]
The formula $\varphi_{\sf simplify}$ requires that infinitely often $x \geq 10000$ must hold (conjunct $\LTLglobally \LTLfinally(x \geq 10000)$), which translates to a B\"uchi winning condition in the resulting infinite-state synthesis game. 
However, a closer look at $\varphi_{\sf simplify}$  reveals that the requirement $\LTLglobally \LTLfinally(x \geq 10000)$ is vacuous, that is, implied by the other conjuncts.
To see this,  observe that whenever $\neg(c = 0)$ holds, then $x$ is assigned the desired value $10000$,  and otherwise it is incremented by $1$.  This entails that from some point on, $x$ must always be greater than or equal to $10000$.  
Establishing this,   would allow us to simplify the synthesis game  to one with a safety winning condition resulting from the remaining requirements.

To show that the requirement $\LTLglobally \LTLfinally(x \geq 10000)$  is vacuous in $\varphi_{\sf simplify}$,  we can establish that from \emph{every state,  for all possible executions} in any reactive program satisfying the formula
\[
\LTLglobally(c' = 0 \lor c' =1)  \land \LTLglobally(\LTLnext(c = 1)  \rightarrow e =1) \land
\LTLglobally(c=0 \rightarrow x' = x+ 1) \land 
\LTLglobally\big(\neg(c=0) \rightarrow x' = 10000\big)
\]
a state satisfying $x \geq 10000$ is eventually reached. 
This implies that all executions of any such implementation satisfy $\LTLglobally \LTLfinally(x \geq 10000)$ and hence it is sufficient to consider the synthesis game induced by the rest of the requirements,  which results in a simpler winning condition.

\end{example}

\subsection{Challenges and Our Approach}
As the examples above demonstrated,  taking advantage of the concrete \templogic\ formula prior to applying the realizability check and synthesis procedure,  necessitates the application of first-order reasoning in a temporal context.  More concretely,  this involves deriving invariants and reachability properties that are implied by parts of the specification under certain executions. 

To this end,  we propose augmenting the constructed game with a \emph{monitor for the \templogic\ formula} that unrolls and tracks the formula along the system executions. 
The crux of the monitor construction is the derivation of implied invariants and reachability properties,  which enable global specification analysis on the formula level.
For instance, this allows us to derive the invariant in \Cref{ex:unsat-motivating} and the reachability properties in \Cref{ex:vacuous-motivating} and \Cref{ex:GF-motivating}.
Based on the formula unrolling and the computed additional properties, the monitor assigns verdicts such as \UNSAT, which enables pruning of the respective sub-game, or  \SAFETY,  which allows for simplifying the game's  winning condition.
It assigns a verdict \UNSAT\ in \Cref{ex:unsat-motivating},  and the verdict \SAFETY\ allows us to transform the synthesis games in \Cref{ex:vacuous-motivating} and \Cref{ex:GF-motivating} to safety games.

\section{Preliminaries}\label{sec:preliminaries}
In the following, we introduce the notation and concepts we use, including the logic LTL, which has the same temporal operators as our logic \templogic{}. 
Furthermore, we briefly introduce two-player turn-based games, as this is the semantic framework on which our synthesis method builds.

\subsection{Propositional Linear Temporal Logic and $\omega$-Automata}

For a set $\Sigma$, $\Sigma^\omega$ denotes the infinite sequences of elements of $\Sigma$.
For $\sigma =  a_0a_1\ldots \in \Sigma^\infty$ and $i, j \in \Nat$ with $i \leq j$, we define $\sigma[i]:=a_i$ and $\sigma[i,j]:=a_i\ldots a_j$.

Let  \AP\ be a set of Boolean propositions.  The logic LTL~\cite{Pnueli77} over \AP\ is defined by the  grammar
$\psi ::= 
p \mid 
\lnot  \psi \mid 
 \psi_1 \land  \psi_2 \mid 
\LTLnext  \psi \mid 
 \psi_1 \LTLuntil  \psi_2, 
$
where $p \in \AP$. 
LTL formulas are interpreted over infinite sequences in $(\power{\AP})^\omega$ and we denote the satisfaction of $\psi$ by $\sigma \in (\power{\AP})^\omega$ as $\sigma\models\psi$.  We refer the reader to \cite{BaierK08} for the definition.
The language represented by $\psi$ consists of the infinite words that satisfy $\psi$, formally $\lang{\psi} : =\{\sigma \in (\power{\AP})^\omega \mid \sigma \models \psi\}$.

A \emph{deterministic parity automaton} (DPA) is a tuple $\mathcal{A} = (Q,\Sigma, q_0,\delta,\lambda)$ where $Q$ is a finite set of states,  $\Sigma$ is a finite alphabet,  $q_0 \in Q$ is the initial state,  $\delta : Q \times \Sigma \to Q$ is a total transition function and $\lambda : Q \to \Nat$ is a coloring function that maps states to a finite set of colors.
An infinite word $a_0a_1a_2\ldots\in \Sigma^\omega$ is \emph{accepted} by $\mathcal A$ if the highest number occurring infinitely often in the sequence $\lambda(q_0)\lambda(q_1)\lambda(q_2) \ldots$ where $q_{i+1} = \delta(q_i,a_i)$ for every $i \in \Nat$, is even. We denote with $\lang{\mathcal A}\subseteq \Sigma^\omega$ the set of infinite words accepted by $\mathcal A$. 
For every LTL formula $\psi$ over $\AP$, there exists a DPA with alphabet $\power{\AP}$, and with $\power{\power{\mathcal{O}(|\psi|)}}$ states and $\power{\mathcal{O}(|\psi|)}$ colors such that $\lang{\mathcal A} = \lang{\psi}$\cite{Piterman07,EsparzaKRS22}.

\subsection{First-Order Logic}

Let $\values$ be the set of all values of arbitrary types and $\vars$ be the set of all variables.
For variables $X \subseteq \vars$, a function $\assignment: X \to \values$ is called an \emph{assignment to} $X$.
We denote the set of assignments to $X$ as $\assignments{X}$.
$\assignment_1\uplus \assignment_2$ denotes the combination of two assignments $\assignment_1, \assignment_2$ to disjoint variables.

We denote the set of all first-order formulas as $\FOLX$ and by $\QFX$ the set of all quantifier-free formulas in $\FOLX$.
Let $\alpha \in \FOLX$ be a formula and $X = \{x_1,\ldots,x_n\} \subseteq \vars$ be a set of variables.
We write $\alpha(X)$ to denote that the free variables of $\alpha$ are a subset of $X$.  
We also denote with $\FOL{X}$ and $\QF{X}$ the set of formulas (respectively quantifier-free formulas) whose free variables belong to $X$.
For a quantifier $Q \in \{\exists, \forall\}$, we write $Q X.\alpha$ as a shortcut for $Q x_1.\ldots Q x_n.\alpha$.  
For variables $y_1, \dots, y_n$, we write $\alpha[x_1 \mapsto y_1, \dots, x_n \mapsto y_n]$ for $\alpha$ with all $x_i$ replaced simultaneously by $y_i$.
For a formula $\alpha(X)$ and assignment $\assignment \in \assignments{X}$ we denote 
entailment of $\alpha$ by $\assignment$ in the \emph{first-order theory} $T$ by $\assignment \FOLentailsT{T} \alpha$.
For an exposition on first-order logic and first-order theories, we refer the reader to \cite{BradleyM07}.

For a set $Y \subseteq \vars$ of variables,  we denote with $Y'$ the  \emph{primed version of $Y$} such that 
$Y' : = \{y' \mid y \in Y\} \subseteq \vars$ and $Y \cap Y' = \emptyset$.
If $\assignment \in \assignments{Y}$ is an assignment to the variables in $Y$,  we define the assignment $\assignment' \in \assignments{Y'}$ over $Y'$ such that $\assignment'(y') = \assignment(y)$ for all $y \in Y$.
Given two assignments $\assignment_1,\assignment_2 \in \assignments{Y}$ to the variables in $Y$, 
we define $\langle \assignment_1, \assignment_2 \rangle := \assignment_1 \uplus \assignment_2'$.

\subsection{Two-Player Turn-Based Graph Games}

A \emph{two-player game graph} is a tuple $G = (V,V_\env,V_\sys,\tau)$ where $V = V_\env \uplus V_\sys$ are the vertices,  partitioned between the environment player (\emph{player $\env$}) and the system player (\emph{player $\sys$}), and $\tau \subseteq (V_\env \times V_\sys) \cup (V_\sys \times V_\env)$ is the \emph{transition relation}.
A \emph{play} in $G$ is a sequence $\xi \in V^\omega$ where $(\xi[i],\xi[i+1])\in\tau$ for all $i \in \Nat$.
A \emph{strategy for player~$p$} is a function 
$\sigma: V^*V_{p} \to V$ where $\sigma(\xi\cdot v) = v'$ implies $(v,v') \in \tau$.
A play $\xi$ is \emph{consistent with $\sigma$} if $\xi[i+1] = \sigma(\xi[0,i])$ for every $i \in \Nat$ where $\xi[i] \in V_p$.
$\plays_G(v,\sigma)$ is the set of all plays in $G$ starting in $v$ and  consistent with strategy $\sigma$.

A \emph{winning condition} in $G$ is a set $\Omega \subseteq V^\omega$. A \emph{two-player turn-based game} is a pair $(G,\Omega)$, where $G$ is a game graph and $\Omega$ is a wining condition for player~$\sys$.
A sequence $\xi \in V^\omega$ is \emph{winning for player~$\sys$} if and only if $\xi \in \Omega$,  and is \emph{winning for player~$\env$} otherwise.  
The \emph{winning region $W_p(G,\Omega)$ of player $p$ in $(G,\Omega)$} is the set of all vertices $v$ from which player $p$ has a strategy $\strat$ such that every play in $\plays_G(v,\sigma)$ is winning for player $p$. 
A strategy $\sigma$ of player $p$ is \emph{winning} if for every $v\in W_p(G,\Omega)$, every play in $\plays_G(v,\sigma)$ is winning for player~$p$.

\section{Linear Temporal Logic for Reactive Programs}\label{sec:logic_and_realizability}
In this section, we first introduce the temporal logic \templogic\ and define the realizability and  synthesis problems for this logic. 
We then present the general synthesis workflow, which transforms an \templogic\ formula into a symbolic synthesis game and then solves this game.
\subsection{\templogic\ as a Specification Language for Reactive Programs}
In this section we define Reactive Program Linear Temporal Logic (\templogic),  a temporal logic for the specification of reactive programs.
\templogic\ extends LTL by replacing the Boolean propositions used in LTL by quantifier-free first-order formulas.
Since \templogic\ is specifically targeted at specifying the behavior of reactive programs,  these first-order formulas are over variables representing the program's input,  current and next state.  More concretely,   we consider the set of formulas $\QF{\specvars}$, where $\progvars$ is the set of \emph{program variables},   $\inputs$ is the set of \emph{input variables} and $\progvars'$ is the set of variables representing the values of $\progvars$ at the \emph{next step} of the execution.
The use of $\progvars'$ allows us to specify relations between the current and next states of a program, such as, for example, modeling assignments to program variables. The next definition formalizes the syntax of \templogic.

\begin{definition}[\templogic\ Syntax]\label{def:rpltl-syntax}
Let $\progvars,\inputs$ and $\progvars'$ be mutually disjoint finite sets of variables such that $\progvars' = \{x' \mid x \in \progvars\}$.
\emph{Reactive Program Linear Temporal Logic (\templogic)} is defined by
\[ \varphi ::=
\alpha \mid
\lnot \varphi \mid
\varphi_1 \land \varphi_2 \mid
\LTLnext \varphi \mid
\varphi_1 \LTLuntil \varphi_2,
\]
where $\alpha \in \QF{\specvars}$.
We denote with $\templogic(\specvars)$ the set of \templogic\ formulas over the set of variables $ \specvars$.
For $\varphi \in \templogic(\specvars)$, we denote with
$\atoms(\varphi) \subseteq \QF{\specvars}$ the set of \emph{atomic formulas} appearing in $\varphi$.
\end{definition}

\templogic\ formulas are interpreted over infinite sequences of valuations of the variables $\progvars \cup \inputs$, representing the values of the program variables and the input variables at a given point in time. The next-step variables $\progvars'$ are interpreted by the next element of the sequence (which always exists since we consider infinite sequences). Hence, \templogic\ allows us to describe properties where the values of the program variables $\progvars$ are related over time.

\begin{example}
	Consider again the first example of an \templogic\ formula given in \Cref{sec:intro}.  There,  the  variable $x'$ denotes the value of the program variable $x$ at the next step.  
	The formula requires that for every point in time in an infinite sequence of valuations to the variables $\{e,x,y\}$,   if $e > 0$ holds true at the current step,  then the value of $x$ at the next step must be less than or equal to the value of the expression $x + 2$ at the current time step.
\end{example}

The next definition provides the formal semantics.

For the rest of the section, we fix a first-order logic theory $T$.

\begin{definition}[\templogic\ Semantics]\label{def:rpltl-semantics}
Let $\varphi \in \templogic(\specvars)$ be an \templogic\ formula and let
$\rho \in {(\assignments{\progvars \cup \inputs})}^\omega$ be an infinite sequence.
We say that $\rho$ \emph{satisfies} $\varphi$,
denoted as $\rho \models \varphi$, iff
    \[\begin{array}{lll}
        & \rho \models \alpha &:\Leftrightarrow \langle \rho[0], \rho[1] \rangle \FOLentailsT{T} \alpha,  \text{ for } \alpha \in  \QF{\specvars} \\
        & \rho \models \lnot \varphi &:\Leftrightarrow \rho \not\models \varphi \\
        & \rho \models \varphi_1 \land \varphi_2 &:\Leftrightarrow \rho \models \varphi_1 \text{ and } \rho \models \varphi_2 \\
        & \rho \models \LTLnext \varphi &:\Leftrightarrow \rho_{+1} \models \varphi \\
        & \rho \models \varphi_1 \LTLuntil \varphi_2 &:\Leftrightarrow \exists j \in \Nat \text{ s.t. } \rho_{+j} \models \varphi_2 \text{ and } \forall i < j.~ \rho_{+i} \models \varphi_1\\
    \end{array}
    \]
    where for every $k \in \Nat$ we define
    $\rho_{+k} [n] := \rho[n + k,\infty)$.
\end{definition}

The constants $\false$~(false),  $\true$~(true) and the remaining Boolean operators are derived in the standard way.
The temporal operators
``eventually'' $\LTLeventually$,
``globally'' $\LTLglobally$ and
``weak until'' $\LTLweakuntil$,  can be derived as
$\LTLeventually \varphi := \true \LTLuntil \varphi$,
$\LTLglobally \varphi := \neg(\LTLeventually \neg \varphi)$,  and
$\varphi_1 \LTLweakuntil \varphi_2 := (\varphi_1 \LTLuntil \varphi_2) \lor \LTLglobally \varphi_1$.
If an \templogic\ formula does not contain $\LTLuntil$ and $\LTLfinally$ when in negation normal form, then it is \emph{(syntactic) safety formula}~\cite{Sistla94}.

We define the \emph{language of infinite sequences represented by a formula $\varphi \in \templogic(\specvars)$} as
$\lang{\varphi} := \{\rho \in {(\assignments{\progvars \cup \inputs})}^\omega \mid \rho \models \varphi\}$.
If $\lang{\varphi} = \emptyset$ we say that $\varphi$ is \emph{unsatisfiable}, if  $\lang{\varphi} = (\assignments{\progvars \cup \inputs})^\omega$ we say that $\varphi$ is \emph{valid}, and if $\lang{\varphi} \not= \emptyset$, we say that $\varphi$ is \emph{satisfiable}.

\paragraph{\templogic\ Expressivity.}

As \templogic\ formulas allow us to relate values over time by expressing relationships between $\progvars$ and $\progvars'$, it is possible to use the variables from $\progvars$ as registers. Hence, in a sufficiently expressive theory domain,  e.g. linear integer arithmetic,  it is straightforward to encode computation formalisms like e.g. WHILE-programs or counter machines. Furthermore, since \templogic\ formulas are interpreted over infinite sequences, it is easily possible to express properties like e.g. a program terminates (or reaches a given state) on all inputs.

If the atomic formulas are allowed to use  $<$ and $=$,  and the variables in $\progvars$ range over the integers,  there exist \templogic\ formulas $\varphi$ such that $\lang{\varphi}$ is not $\omega$-regular,  which follows from the result for constraint LTL established in~\cite{DemriD07}. 
On the other hand,  \templogic\ generalizes LTL on the assertion level, and LTL does not capture all $\omega$-regular languages.  
Thus,  to specify in \templogic\ an arbitrary $\omega$-regular property represented by an automaton on infinite words,  we need,  in general,  to use additional variables to encode the states of the automaton and its transition relation. 
The standard way to increase the expressive power of LTL (or LTL modulo theories) is via combination with regular expressions~\cite{EisnerF06} (or regular expressions modulo theories~\cite{VeanesBES23}), which capture better the higher-level declarative aspects of specifications.
While such an extension of \templogic\  is worth investigating in the future,  in this paper we focus on \templogic, as it lifts LTL,  which underlies the defacto standard specification language for reactive synthesis~\cite{TLSF}.

\paragraph{\templogic\ Realizability.}
The core notion in reactive synthesis is \emph{realizability}.
Intuitively, a temporal formula is realizable, if there exists a system that can react to any input from the environment over infinitely many rounds,  such that the resulting infinite execution satisfies the formula.
We can formalize this as the statement that there \emph{exists a function} mapping the initial assignment to $\progvars$ and the sequence of inputs so far to the next assignment to the program variables $\progvars$, such that all possible sequences induced by this function satisfy the formula.
Formally,  we define the notion of \emph{realizability} of an \templogic\ formula $\varphi$ as follows.

\begin{definition}[\templogic\ Realizability]\label{def:realizability}
A formula $\varphi \in \templogic(\specvars)$ is called \emph{realizable} if and only if there exists a function
$\sigma: \assignments{\progvars} \times \assignments{\inputs}^+ \to \assignments{\progvars}$ such that for every infinite sequence $\mathit{input} \in \assignments{\inputs}^\omega$ of assignments to $\inputs$ and initial assignment $\mathit{init} \in \assignments{\progvars}$, for the infinite word $\rho \in {(\assignments{\progvars \cup \inputs})}^\omega$ defined as
$\rho[0] := \mathit{init} \uplus \mathit{input}[0] $ and $\rho[n] := \sigma(\mathit{init}, \mathit{input}[0,n-1]) \uplus \mathit{input}[n]$ for $n > 0$,  $\rho \models \varphi$ holds.
\end{definition}

\begin{tcolorbox}[
  colback=gray!0!white,
  colframe=gray!50!black,
  title={Realizability and Synthesis Problem from \templogic{}}]
The \emph{realizability problem} is checking whether a formula $\Phi \in \templogic(\specvars)$ is realizable.
The \emph{synthesis problem} is finding a witness for the realizing function (i.e. the system).
\end{tcolorbox}

One way of representing this witness is as \emph{reactive program}~\cite{HeimD24}.
Intuitively, this is a program that reads the inputs $\inputs$ and updates its program variables (a superset of $\progvars$),  repeating this indefinitely.

\label{sec:logic} 
\subsection{\templogic\  Realizability Checking via Symbolic Game Solving}
In the standard automata-theoretic approach to the synthesis of finite-state reactive systems from LTL specifications~\cite{LuttenbergerMS20},  the specification is translated to a deterministic parity automaton that is interpreted as a two-player turn-based game between the synthesized system and its environment.
A winning strategy for the system player from a dedicated initial vertex corresponds to a finite-state implementation that realizes the specification. If, on the other hand,  the environment wins from the initial vertex,  the specification is unrealizable.

To perform realizability checking and synthesis from \templogic\ specifications,  we can adopt a similar flow.
The key difference is that instead of Boolean atomic propositions,
\templogic\ formulas are over variables with possibly infinite domains.
To address this,  we translate \templogic\  specifications to deterministic parity automata with \emph{symbolically represented transition functions},  which are then interpreted as \emph{symbolic game structures}.
In that way,  we reduce the realizability checking and synthesis problems for \templogic\  to the problem of solving a symbolically represented infinite-state game.
Next in this section, we describe this reduction in detail, after which discuss its fundamental limitation, which our proposed approach addresses as presented in the rest of the paper.

\subsubsection{Constructing Automata from \templogic\  Formulas}\label{sec:formula-to-automaton}

We now explain the first step, which is the translation of \templogic\ formulas to automata. 
We remark that this translation is separate from and independent of the monitor construction that we present in subsequent sections.  
Here,  we utilize the respective construction for LTL specifications. To this end, we interpret the given \templogic\ formula $\varphi$ as an LTL formula $\tobool{\varphi}$ by replacing the atomic sub-formulas elements of $\QF{\specvars}$ by Boolean propositions.
More concretely,  we construct the LTL formula  $\tobool{\varphi}$ from $\varphi$ by replacing each atomic formula in $\atoms(\varphi)$ with a unique propositional variable.
Then,  we obtain a DPA $\mathcal{A}_{\tobool{\varphi}}$ with alphabet $\power{\AP(\tobool{\varphi})}$,  where $\AP(\tobool{\varphi})$ are the atomic propositions in $\tobool{\varphi}$.
The language of $\mathcal{A}_{\tobool{\varphi}}$ is $\lang{\tobool{\varphi}}$.
In practice,  this can be done using standard methods such as those implemented in~\cite{Duret-LutzRCRAS22, Duret-LutzLFMRX16}.

Intuitively,  by replacing back the propositions from $\AP(\tobool{\varphi})$ by the original atomic formulas $\atoms(\varphi)$ in the transition function of the DPA $\mathcal{A}_{\tobool{\varphi}}$, we can obtain a symbolic DPA representing the language  $\lang{\varphi}$.
In the next subsection we use this idea to directly interpret $\mathcal{A}_{\tobool{\varphi}}$ as a \emph{symbolically represented} two-player graph game encoding the realizability problem for  $\varphi$.

\subsubsection{Symbolic Infinite-State Games for \templogic\ Realizability and Synthesis}\label{sec:automaton-to-game}

As we explained above,  due to the possibly infinite domains of the variables in \templogic\ specifications,  the games encoding the synthesis problem have in general an infinite set of vertices.
To this end,  we represent two-player game graphs via \emph{symbolic game structures} defined next.

\begin{definition}[Symbolic Game Structure]\label{def:sym-games}
    A \emph{symbolic game structure} is a tuple $(L, \linit,\inputs, \progvars, \dom, \delta)$ where
    $L$ is a finite set of \emph{locations,}
    $\linit \in L$ is the \emph{initial location},
    $\inputs \subseteq \vars$ is a finite set of \emph{input variables},
    $\progvars \subseteq \vars$ is a finite set of \emph{program variables},
    $\mathit{dom}: L \mapsto \QF{\progvars}$ is the \emph{domain of the states},
    and $\delta: L \times L \mapsto \QF{\progvars \cup \inputs \cup \progvars'}$ is the \emph{transition relation}.
    $\delta$ is required to be \emph{deterministic} and \emph{non-blocking}, i.e., for all $l \in L$ and  $\assmt{x} \in \assignments{\progvars}$ where $\assmt{x} \FOLentailsT{T} \mathit{dom}(l)$ we have that
\begin{enumerate}
\item for every input assignment $\assmt{i} \in \assignments{\inputs}$ and program variable  assignment $\assmt{v} \in \assignments{\progvars}$,  there exists \emph{at most one} location $l' \in L$ such that $\assmt{x}\uplus \assmt{i} \uplus \assmt{v}' \FOLentailsT{T} \delta(l,l')$.
\item there exist $\assmt{i} \in \assignments{\inputs}$,  $\assmt{v}  \in \assignments{\progvars}$ and $l' \in L$   such that $\assmt{v} \FOLentailsT{T}  \dom(l')$ and $\assmt{x}\uplus \assmt{i} \uplus \assmt{v}' \FOLentailsT{T} \delta(l,l')$.
\end{enumerate}
\end{definition}

Intuitively,  a symbolic game structure $\symgame = (L, \linit,\inputs, \progvars, \dom, \delta)$ represents a two-player game graph encoding the interaction between an environment selecting values for the input variables $\inputs$ and a system selecting the next values of the program variables $\progvars$.
The function $\dom$ constrains the values of the program variables to a specific domain (represented symbolically via formulas in $\QF{\progvars}$), and the transition relation $\delta$ constrains the choices of the two players.
From a state $(l,\assmt{x}) \in L \times \assignments{\progvars}$  with $ \assmt{x} \FOLentailsT{T} \mathit{dom}(l)$,  the environment selects an input $\assmt{i} \in\assignments{\inputs}$ and the system selects a next assignment $\assmt{v} \in \assignments{\progvars}$ to the program variables such that for some $l' \in L$ it holds that $\assmt{v}  \FOLentailsT{T}  \dom(l')$ and $\assmt{x}\uplus \assmt{i} \uplus \assmt{v}' \FOLentailsT{T} \delta(l,l')$.  Then, the game proceeds to the new state $(l',\assmt{v})$ and the process repeats from there, resulting in  an infinite sequence of states.

The semantics of a symbolic game structure $\symgame$, which is a possibly infinite two-player turn-based game graph,   is formalized in the next definition.

\begin{definition}[Semantics of Symbolic Game Structures]
\label{def:sym-games-semantics}
The semantics of a symbolic game structure
$\symgame = (L, \linit,\inputs, \progvars, \dom, \delta)$ is the game graph
$\sema{\symgame} = (\states, \states_\env,\states_\sys,\tau)$ where
$\states := \states_\env\uplus\states_\sys$,
    \begin{itemize}
        \item  $\states_\env := \{ (l,\assmt{x}) \in L \times \assignments{\progvars} \mid \assmt{x} \FOLentailsT{T} \mathit{dom}(l)\}$;
        \item   $\states_\sys :=\{ ((l,\assmt{x}),\assmt{i}) \in \states_\env\times \assignments{\inputs}\mid\exists l' \in L \exists \assmt{v}\in \assignments{\progvars}.
		\assmt{v}\FOLentailsT{T}  \dom(l') \text{ and } \newline
		\phantom{((l,\assmt{x}),\assmt{i}) \in \states_\env\times \assignments{\inputs}\mid\exists l' \in L \exists \assignment \in \assignments{\progvars}\qquad}
		 \assmt{x} \uplus \assmt{i} \uplus \assmt{v}' \FOLentailsT{T} \delta(l,l')\}$;
        \item   $\tau \subseteq (\states_\env \times \states_\sys) \cup (\states_\sys \times \states_\env)$ is the smallest relation such that
	\begin{itemize}
	\item $(s,(s,\assmt{i})) \in \tau$ for every $s \in \states_\env$ and $\assmt{i} \in \assignments{\inputs}$ such that $(s,\assmt{i}) \in \states_\sys$,
	\item $(((l,\assmt{x}), \assmt{i}),(l',\assmt{v})) \in \tau$ if and only if $\assmt{x} \uplus\assmt{i} \uplus \assmt{v}' \FOLentailsT{T} \delta(l,l')$.
        \end{itemize}
    \end{itemize}
\end{definition}

For symbolic game structures,  we consider winning conditions defined in terms of the infinite sequence of locations visited in a play,  which we refer to as \emph{location-based winning conditions}.
Such winning conditions are independent of the valuations of the inputs and program variables, and,  as we will see later in this section, are sufficient for encoding the \templogic\ synthesis problem.

\begin{definition}[Location-Based Winning Condition]
\label{def:winning-condition}
A \emph{location-based wining condition} in a symbolic game structure $\symgame = (L, \linit,\inputs, \progvars, \dom, \delta)$ is a set of
infinite sequences of locations $\Lambda \subseteq L^\omega$.
A \emph{location-based parity condition} is defined by a coloring function $\lambda: L \to  \Nat$ such that
$\parity(\symgame,\lambda) = \{l_0l_1  \ldots \in L^\omega \mid \text{ the highest number occurring infinitely often in } \lambda(l_0)\lambda(l_1) \ldots \text{ is even}\}$.
\end{definition}

For a play $\xi$ of $\sema{\symgame}$ of the form
$ (l_0,\assmt{x}_0)((l_0,\assmt{x}_0),\assmt{i}_0)(l_1,\assmt{x}_1)((l_1,\assmt{x}_1),\assmt{i}_1),\ldots \in (\states_\env\cdot\states_\sys)^\omega$ or of the form
$((l_0,\assmt{x}_0),\assmt{i}_0)(l_1,\assmt{x}_1)((l_1,\assmt{x}_1),\assmt{i}_1),\ldots \in (\states_\sys\cdot\states_\env)^\omega$ we define $\Loc(\xi):=l_0l_1l_2\ldots \in L^\omega$ to be the corresponding sequence of game locations visited by $\xi$.
A location-based winning condition $\Lambda$ for $\symgame$ defines a winning condition $\sema{\Lambda} \subseteq \states^\omega$ where
$\sema{\Lambda} := \{\xi \in \states^\omega \mid \Loc(\xi) \in \Lambda\}$.
A \emph{symbolic game} is a pair $(\symgame,\Lambda)$  of
a symbolic game structure $\symgame$ and
a location-based winning condition $\Lambda$.

We are now ready to describe the construction of a symbolic game structure equipped with a location-based winning condition for a given \templogic\ formula  $\varphi \in \templogic(\specvars)$.

Let $\mathcal{A}_{\tobool{\varphi}} = (Q,\Sigma, q_0,\delta,\lambda)$ be a DPA with
$\lang{\mathcal{A}_{\tobool{\varphi}}} = \lang{\tobool{\varphi}}$  constructed from
$\varphi$ as explained in \Cref{sec:formula-to-automaton}.
From $\mathcal{A}_{\tobool{\varphi}}$ we  construct the symbolic game structure
$\symgame_{\varphi} = (Q, q_0,\inputs, \progvars, \dom, \delta_{\symgame})$ where
\begin{itemize}
\item the set of locations is the set of automaton states $Q$, and $\dom(q) = \true$ for all $q \in Q$, and
\item the transition relation $\delta_{\symgame}$ is defined based on $\delta$ such that for all $q,q' \in Q$, \\
$\delta_{\symgame}(q,q') := \bigvee_{\{a \in \Sigma \mid \delta(q,a) = q'\}} \Big(
\bigwedge_{\{\alpha \in \atoms(\varphi) \mid \tobool{\alpha} \in a\}}\alpha
\land
\bigwedge_{\{\alpha \in \atoms(\varphi) \mid \tobool{\alpha} \not\in a\}}
\neg\alpha
\Big).$

Intuitively,  the formula $\delta_{\symgame}(q,q') \in \QF{\specvars}$ is defined as the disjunction over all letters $a \in \power{\AP(\tobool{\varphi})}$ that label a transition for $q$ to $q'$.  Each such $a$ corresponds to the conjunction of the propositions that belong  to $a$ and the negation of those that do not.  Replacing each such proposition with the original formula in $\QF{\specvars}$ gives us the formula $\delta_{\symgame}(q,q') $ representing  the assignments to $\specvars$ with which the game transitions form $q$ to $q'$.
\end{itemize}

Since the automaton $\mathcal{A}_{\tobool{\varphi}}$ with alphabet $\Sigma=\power{\AP(\tobool{\varphi})}$ is deterministic, and its transition  function $\delta$ is total,  the definition of $\delta_{\symgame}$ guarantees that $\symgame_{\varphi}$ satisfies the conditions of \Cref{def:sym-games}.

The coloring function $\lambda$ of $\mathcal{A}_{\tobool{\varphi}}$ yields a location-based parity winning condition $\parity(\symgame_{\varphi},\lambda)$.

We say that $(\symgame,\Lambda)$ is a \emph{symbolic game for the \templogic\ specification $\varphi$} if and only if for some DPA
$\mathcal{A}_{\tobool{\varphi}} = (Q,\Sigma, q_0,\delta,\lambda)$  with $\lang{\mathcal{A}_{\tobool{\varphi}}} = \lang{\tobool{\varphi}}$,
$\symgame$ is the symbolic game structure obtained from $\mathcal{A}_{\tobool{\varphi}}$ as defined above, and
$\Lambda = \parity(\symgame,\lambda)$
The next theorem establishes that the symbolic game structure $\symgame_{\varphi}$ equipped with the winning condition $\parity(\symgame_{\varphi},\lambda)$ encodes the
realizability problem for $\varphi$.

\begin{restatable}[Symbolic Game Correctness]{theorem}{restateGameCorrectness}\label{thm:game-correctness}
Let $\varphi \in \templogic(\specvars)$ be a formula
and $(\symgame,\Lambda)$ be a symbolic game for $\varphi$ with
$\symgame = (L, \linit,\inputs, \progvars, \dom, \delta)$.
The formula $\varphi$ is realizable if and only if $(\linit,\assmt{x}) \in \win_\sys(\sema{\symgame},\sema{\Lambda})$ for every $\assmt{x} \in \assignments{\progvars}$ with $\assmt{x} \FOLentailsT{T} \dom(\linit)$.
\end{restatable}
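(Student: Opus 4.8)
The plan is to reduce realizability (\Cref{def:realizability}) to the existence of a system winning strategy through the deterministic parity automaton $\mathcal{A}_{\tobool{\varphi}} = (Q,\Sigma,q_0,\delta,\lambda)$ underlying $\symgame$. By the definition of a symbolic game for $\varphi$, we may take $L = Q$, $\linit = q_0$, $\dom \equiv \true$, and $\delta = \delta_\symgame$ as constructed, with $\lang{\mathcal{A}_{\tobool{\varphi}}} = \lang{\tobool{\varphi}}$ and $\Lambda = \parity(\symgame,\lambda)$. The central object is the map sending a trace $\rho \in (\assignments{\progvars \cup \inputs})^\omega$ to the Boolean word $\tobool{\rho} \in (\power{\AP(\tobool{\varphi})})^\omega$ with $\tobool{\rho}[n] := \{\tobool{\alpha} \mid \alpha \in \atoms(\varphi),\ \langle \rho[n], \rho[n+1]\rangle \FOLentailsT{T} \alpha\}$.

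First I would prove a \emph{semantic bridge} lemma: $\rho \models \varphi$ iff $\tobool{\rho} \models \tobool{\varphi}$. This goes by induction on the structure of $\varphi$, matching each clause of \Cref{def:rpltl-semantics} with the corresponding LTL clause and using that $\tobool{\cdot}$ replaces each $\alpha \in \atoms(\varphi)$ by a dedicated proposition whose truth at position $n$ of $\tobool{\rho}$ is exactly $\langle\rho[n],\rho[n+1]\rangle \FOLentailsT{T} \alpha$. Together with $\lang{\mathcal{A}_{\tobool{\varphi}}} = \lang{\tobool{\varphi}}$, this yields that $\rho \models \varphi$ iff $\tobool{\rho}$ is accepted by $\mathcal{A}_{\tobool{\varphi}}$.

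Next I would set up the \emph{play--trace correspondence}. For a play $\xi = (q_0,\assmt{x}_0)((q_0,\assmt{x}_0),\assmt{i}_0)(q_1,\assmt{x}_1)\cdots$ of $\sema{\symgame}$, put $\rho_\xi[n] := \assmt{x}_n \uplus \assmt{i}_n$. By the definition of $\delta_\symgame$, the game step from $((q_n,\assmt{x}_n),\assmt{i}_n)$ to $(q_{n+1},\assmt{x}_{n+1})$ is legal iff $\delta(q_n, \tobool{\rho_\xi}[n]) = q_{n+1}$, so by determinism of $\mathcal{A}_{\tobool{\varphi}}$ the location sequence $\Loc(\xi) = q_0 q_1 \cdots$ is precisely the run of $\mathcal{A}_{\tobool{\varphi}}$ on $\tobool{\rho_\xi}$. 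Hence $\xi \in \sema{\Lambda}$ iff $\tobool{\rho_\xi}$ is accepted iff (by the bridge lemma) $\rho_\xi \models \varphi$. Since $\dom \equiv \true$ and $\delta$ is total and deterministic, from any environment state every input $\assmt{i}$ and every next assignment $\assmt{v} \in \assignments{\progvars}$ determine a unique legal successor location; this reconfirms that $\symgame$ is non-blocking (\Cref{def:sym-games}) and that, with $\assmt{x}_0$ fixed, plays correspond bijectively to input--response sequences.

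It then remains to translate strategies, which is where I expect the main difficulty. For the forward direction, given a realizing $\sigma : \assignments{\progvars} \times \assignments{\inputs}^+ \to \assignments{\progvars}$, I would define a single system strategy that at a history ending in $((q_n,\assmt{x}_n),\assmt{i}_n)$ — which by induction records $\mathit{init} = \assmt{x}_0$ and $\mathit{input}[0,n] = \assmt{i}_0\cdots\assmt{i}_n$ — plays $\assmt{x}_{n+1} := \sigma(\mathit{init},\mathit{input}[0,n])$; every consistent play $\xi$ then has $\rho_\xi$ equal to the trace induced by $\sigma$ in \Cref{def:realizability}, so $\rho_\xi \models \varphi$ and $\xi$ is winning, and this holds from every initial vertex $(\linit,\assmt{x}_0)$, matching the universal quantifier over $\mathit{init}$ (here $\dom(\linit) = \true$ makes the side condition $\assmt{x}_0 \FOLentailsT{T} \dom(\linit)$ vacuous). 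For the converse, for each $\assmt{x}_0$ I would pick a system strategy winning from $(\linit,\assmt{x}_0)$ (which exists by hypothesis); by automaton determinism its response after an input prefix $\assmt{i}_0\cdots\assmt{i}_n$ is a well-defined function of that prefix alone — the play prefix being fixed by the strategy's own earlier choices together with the forced locations — and assembling these functions across all $\assmt{x}_0$ (legitimate because a realizing function may depend on $\mathit{init}$) yields a witnessing $\sigma$, since each induced trace comes from a winning play and hence satisfies $\varphi$. The delicate points throughout are the type mismatch between game strategies ($V^*V_\sys \to V$) and realizing functions, and the exact index alignment of the alternating turns of $\sema{\symgame}$ with the step indices of $\rho$ — the automaton reads letter $n$ off the pair $\langle \rho[n],\rho[n+1]\rangle$, so $q_{n+1}$ is entered only after the $n$-th round — and checking that determinism genuinely collapses the history-dependence of strategies is the crux.
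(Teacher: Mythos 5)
Your proposal is correct and follows essentially the same route as the paper's proof: a Booleanization bridge lemma ($\rho \models \varphi$ iff $\tobool{\rho} \models \tobool{\varphi}$), the identification of plays of $\sema{\symgame}$ with traces whose unique run is accepting (the paper packages this as the equality $\lang{\varphi} = \plang{\symgame,\Lambda}$ in \Cref{lem:langeq}), and a two-way strategy translation between realizing functions and system winning strategies (\Cref{lem:langeqeqreal}). The only difference is organizational: the paper factors the argument through the intermediate notion of pseudo-language, and is terser than you are on the strategy-translation details you rightly flag as the delicate part.
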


\Cref{thm:game-correctness} gives us a method for reducing the realizability and synthesis problems for \templogic\ to the solving of an infinite-state game and the computation of a winning strategy for the system.
In contrast to the automata-theoretic approach to synthesis from LTL, where the construction of the deterministic automaton is often the bottleneck, here
the result of the translation is an infinite-state two-player game,  the problem of solving which is in general undecidable.

\label{sec:logic_games} 

\section{Augmenting the Synthesis Game with a Monitor for \templogic}\label{sec:monitor_definition}
The procedure outlined in \Cref{sec:logic_games} for translating an \templogic\ formula $\varphi$ into a symbolic game, does not account for the first-order concretization of the propositions in the propositional formula $\tobool{\varphi}$. In particular, any state-space reductions applied by the underlying automata construction are not able to take advantage of the semantics of the propositions.
Since techniques for solving infinite-state games face significant challenges,  applying possible simplifications to the game during its construction is essential. 
Following the automata construction, the \templogic\ formula is forgotten,  with the deterministic automaton being its low-level representation. 
Thus,  in order to take advantage of the structure of the \templogic\ formula it is desirable to \emph{apply semantic simplifications at the formula level and during the construction of the synthesis game}.

In this paper, we present a principled,  generally applicable approach to perform such semantic simplifications. The key idea is to construct a so-called \emph{monitor for the given \templogic\  formula} $\varphi$ that can be used to augment the constructed game with semantic information.  This augmentation,  which will be in the form of constructing the product of the monitor and the game,  will perform game-structure reductions based on this semantic information.
More concretely,  the role of the monitor is to identify states of the automaton  $\mathcal A_{\tobool{\varphi}}$ whose language is empty, or contains all possible executions,  thus identifying unsatisfiable or vacuous requirements as illustrated in \Cref{sec:examples}.

\paragraph*{Remark} The symbolic game is constructed from a DPA for the propositional formula $\tobool{\varphi}$. Currently, there exist no techniques and tools for \emph{directly} constructing \emph{deterministic symbolic automata} from temporal logic specifications over non-Boolean domains. 
We note  that the approach proposed here and the  monitor construction presented in the next subsection are independent from the procedure used to construct the DPA and the resulting symbolic game. 

\subsection{Monitors for \templogic\ Formulas}
We now proceed with the formal definition of our notion of \emph{monitor for  \templogic\ formulas}.

\begin{definition}[Monitor]\label{def:monitor}
A \emph{monitor} over $\progvars$,  $\inputs$, $\progvars'$ is a tuple $M = (Q,\qinit,\preds,\delta,\verdict)$ where 
\begin{itemize}
\item $Q$ is a finite set of \emph{monitor states} and $\qinit$ is the \emph{monitor's initial state},
\item $\preds \subseteq \QF{\specvars}$ is a finite set of \emph{monitored predicates}, 
\item $\delta : Q \times 2^{\preds} \to Q$ is the monitor's \emph{transition function},
\item $\verdict: Q \to \Verdicts$ is a \emph{verdict-labelling function} such that  for every $q,  q' \in Q$ where $\delta(q,a) = q'$ for some $a$,  the following two conditions are satisfied: 
\begin{itemize}
\item if $\verdict(q)  =\UNSAT$, then $\verdict(q') = \UNSAT$,  and
\item if $\verdict(q) = \SAFETY$, then $\verdict(q') \in \{\UNSAT,\SAFETY\}$.
\end{itemize}
\end{itemize}
\end{definition}

Thus, a monitor is a finite-state  automaton whose transitions are labelled with sets of predicates, that is, formulas in  $\QF{\specvars}$.  Each such set defines a subset of $\assignments{\specvars}$ which determines the variable assignments with which this transition can be taken. 
We define a function $\delta^*_M : (\assignments{\progvars\cup\inputs})^* \to Q$ that maps each finite sequence $\pi \in (\assignments{\progvars\cup\inputs})^*$ to the state that $M$ reaches after reading $\pi$.
Formally,  we let 
$\delta^*_M(\pi) = \qinit$ if $|\pi| = 0$ or $|\pi| = 1$,  and otherwise we let 
$\delta^*_M(\pi) = q$ if $\pi = \pi' \cdot \assignment_1 \cdot \assignment_2$,
$\delta^*_M(\pi'\cdot\assignment_1) = q'$ and 
$q := \delta(q',\{p \in \preds\mid \langle \assignment_1,\assignment_2\rangle \models p\})$. 

The verdict-labelling function $\verdict$ of a monitor $M$ maps each of the monitor's states to an element of the set $\Verdicts$.  
Intuitively,  in order for $M$ to be a monitor for a \templogic\ formula $\varphi$,  we require that for each finite prefix $\pi \in (\assignments{\progvars\cup\inputs})^*$,  the verdict assigned to the state $\delta^*_M(\pi)$ that $M$ reaches when reading $\pi$ should be consistent with the language $\lang{\varphi}$. This requirement is formalized in the next definition.

\begin{definition}[Monitor for an \templogic\ Formula]\label{def:monitor-formula}
Let $\varphi \in \templogic(\specvars)$ be a formula.
A monitor $M = (Q,\qinit,\preds,\delta,\verdict)$ over variables $\progvars$,  $\inputs$, $\progvars'$ is a \emph{monitor for $\varphi$} iff for every $\pi \in (\assignments{\progvars\cup\inputs})^*$ and $\assignment \in \assignments{\progvars\cup\inputs}$, the following properties hold for $q=\delta^*_M(\pi\cdot \assignment)$:
\begin{enumerate}
\item if $\verdict(q) = \UNSAT$,  then $\pi \cdot \assignment\cdot \rho \not \in \lang{\varphi}$ for all $\rho \in (\assignments{\progvars\cup\inputs})^\omega$;
\item if $\verdict(q) = \SAFETY$,  then, for every $\rho \in (\assignments{\progvars\cup\inputs})^\omega$, \\ if $\verdict(\delta^*_M(\pi \cdot \assignment\cdot \rho[0,i])) \neq \UNSAT$ for all $i \in \Nat$, then it holds that $\pi \cdot \assignment \cdot\rho \in \lang{\varphi}$.
\end{enumerate}
\end{definition}

Thus,  in particular, a monitor $M$ for an \templogic\ formula $\varphi$ ensures that the verdict assigned to a finite sequence $\pi \in (\assignments{\progvars\cup\inputs})^*$,  that is $\verdict(\delta^*_M(\pi))$,  is \UNSAT\ only if the language $\{ \rho \in (\assignments{\progvars\cup\inputs})^\omega \mid \pi\cdot \rho \in \lang{\varphi} \}$ is empty.

Intuitively,  \UNSAT\ verdicts allow us to identify unsatisfiable requirements,  while \SAFETY\ verdicts are useful for the simplification of the winning condition of the game resulting from the augmentation of the symbolic game structure $\symgame_{\varphi}$ with a monitor for $\varphi$.

\paragraph*{Remark} The verdicts assigned by a monitor for an \templogic\ formula $\varphi$ are determined based on the language $\lang{\varphi}$, and not based on the existence of strategies in the synthesis game.  Hence, we speak of (un)satisfiability instead of (un)realizability when referring to the monitor's role.  
The reason for that is that the game-based interaction between the system and the environment is handled when solving the resulting game, while the utilization of the monitor is during the game's construction.
The role of the monitor is to identify inconsistencies or implied sub-specifications by temporal and first-order reasoning on the formula level.
In \Cref{sec:monitor} we present our main technical contribution, which is a technique for the construction of monitors from \templogic\ formulas.
Before that, we describe how a monitor for an \templogic\ formula $\varphi$ can be used to \emph{prune a symbolic game for $\varphi$}.

\subsection{Product of a Symbolic Game and a Monitor}
We use a monitor for an \templogic\ formula $\varphi$ to prune a symbolic game for $\varphi$ by constructing  a product between the game and the monitor such that they both ``run in parallel''.
This gives us a general method for adding semantic information to the resulting synthesis game,  which is independent of how the symbolic game for $\varphi$ was constructed,  and of the concrete construction of the monitor,  as long as it is a monitor for $\varphi$ according to \Cref{def:monitor-formula}.
In this construction,  we use the verdict-labelling function of the monitor to modify the winning condition of the game. 
For example, if the monitor assigns verdict \UNSAT, the respective product state is losing for the system player. 
Furthermore, if the assigned verdict is \SAFETY,  the winning condition for the respective sub-game can be switched to a safety winning condition.
This pruning  has  two advantages:
First, we disregard obvious bad choices for one of the players if they lead to a verdict that can already be computed from the monitor.
Second, for \SAFETY\ verdicts the winning condition simplifies from e.g. a parity condition to a safety condition, which can potentially make solving the game easier.
Formally, we define this product as follows:

\begin{definition}[Game-Monitor Product]\label{def:monitor-product}
Let 
$\symgame = (L, \linit,\inputs, \progvars, \dom, \delta)$ be a symbolic game structure, 
$\Lambda \subseteq L^\omega$ be  a location-based winning condition, 
and let $M = (Q,\qinit,\preds,\delta_M,\verdict)$ be a monitor over
 $\progvars$,  $\inputs$, $\progvars'$. 
We define the \emph{game-monitor product} for $(\symgame,\Lambda)$ and $M$ as the  pair $(\symgame_\times,\Lambda_\times)$ consisting  of the symbolic game structure 
$\symgame_\times := (L \times Q, (\linit,\qinit),\inputs, \progvars, \dom_\times, \delta_\times)$ and the location-based winning condition $\Lambda_\times \subseteq {(L \times Q)}^\omega$ where
\begin{itemize}
    \item   $\dom_\times((l,q)) := \dom(l)$,
    \item   $\delta_\times((l,q), (l',q')) := \delta(l, l') \land \bigvee_{\{a \in \power\preds \mid q' = \delta_M(q, a)\}} \left(  \left( \bigwedge_{\alpha \in \preds \cap a} \alpha \right) \land \left( \bigwedge_{\alpha \in \preds \setminus a} \lnot \alpha \right) \right)$,
\end{itemize}
and, furthermore, $\rho = (l_0,q_0)(l_1,q_1)\ldots\in \Lambda_\times$ if and only if $\verdict(q_i) \neq \UNSAT$ for all $i \in \Nat$,  and
\begin{itemize}
    \item   $l_0 l_1 l_2 \dots \in \Lambda$, or
    \item   there exists $i \in \Nat$ such that $\verdict(q_i)  = \SAFETY$.
\end{itemize}
\end{definition}

Intuitively, the product transition function ensures that the monitor's execution is synchronized with the transitions of the game.
Note that the product game is also deterministic, as the monitor is deterministic and cannot transition to different states for the same valuation. 
Furthermore,  the monitor's transition function is total w.r.t.\ all possible valuations of the variables, meaning that the product is non-blocking.
Hence, the resulting game structure is well-defined.
The product winning condition ensures, together with the definition of monitor, that if the monitor assigns an $\UNSAT$ verdict, then the respective  play is losing for the system.  
Otherwise,  in order to win,  the system player has to either satisfy the original winning condition, or reach a product state where the monitor assigns a \SAFETY\ verdict and subsequently avoid reaching a state with an $\UNSAT$ verdict.

The next theorem establishes that when $(\symgame,\Lambda)$ and $M$ are a symbolic game and a monitor for an \templogic\ formula $\varphi$, then the game-monitor product encodes the realizability of $\varphi$.

\begin{restatable}[Product Correctness]{theorem}{restateProductCorrectness}\label{thm:product-correctness}
Let $\varphi \in \templogic(\specvars)$ be a formula,
 $(\symgame,\Lambda)$ be a symbolic game for $\varphi$,  with 
$\symgame = (L, \linit,\inputs, \progvars, \dom, \delta)$,  and let
 $M = (Q,\qinit,\preds,\delta_M,\verdict)$  be a monitor for $\varphi$. 
Let $(\symgame_\times,\Lambda_\times)$ 
be the game-monitor product for $(\symgame, \Lambda)$ and $M$.

The formula $\varphi$ is realizable if and only if 
$((\linit,\qinit),\assmt{x}) \in 
\win_\sys(\sema{\symgame_\times},\sema{\Lambda_\times})$ 
for every assignment $\assmt{x} \in \assignments{\progvars}$ with $\assmt{x} \FOLentailsT{T} \dom((\linit,\qinit))$.
\end{restatable}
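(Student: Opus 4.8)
The plan is to reduce \Cref{thm:product-correctness} to \Cref{thm:game-correctness} by showing that the product game $(\sema{\symgame_\times},\sema{\Lambda_\times})$ has the ``same'' system winning region as the original game $(\sema{\symgame},\sema{\Lambda})$, once projected onto the location component. The key structural observation is that the monitor component of the product is a passive \emph{observer}: since both the automaton transition relation $\delta$ and the monitor transition function $\delta_M$ are deterministic, the location component $l_0 l_1 \ldots$ and the monitor component $q_0 q_1 \ldots$ of any product play are each uniquely determined by the underlying sequence of assignments $\rho = \rho[0]\rho[1]\ldots \in (\assignments{\progvars\cup\inputs})^\omega$ that the play induces. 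Consequently the monitor grants the system no additional choices and forbids none: from a product state $((l,q),\assmt{x})$ the system's only choice is the next program assignment $\assmt{v}$ (after which both $l'$ and $q'$ are forced), exactly as from $(l,\assmt{x})$ in $\symgame$. This yields a bijection between the plays of $\sema{\symgame_\times}$ starting in $((\linit,\qinit),\assmt{x})$ and the plays of $\sema{\symgame}$ starting in $(\linit,\assmt{x})$, which preserves the inducing sequence $\rho$ and preserves consistency with corresponding strategies. Since moreover $\dom_\times((\linit,\qinit)) = \dom(\linit)$, it remains only to show that this bijection maps winning plays to winning plays in both directions.

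First I would record the invariant underlying the construction of \Cref{sec:automaton-to-game}: for any play $\xi$ of $\sema{\symgame}$ inducing $\rho$, the sequence $\Loc(\xi)$ is the unique run of $\mathcal{A}_{\tobool{\varphi}}$ on the propositional abstraction of $\rho$, so that $\Loc(\xi) \in \Lambda = \parity(\symgame,\lambda)$ if and only if $\rho \in \lang{\varphi}$. The heart of the proof is then the claim that, for a product play $\xi_\times$ inducing $\rho$ with monitor component $q_0 q_1 \ldots$ where $q_i = \delta^*_M(\rho[0,i])$,
\[ \Loc(\xi_\times) \in \Lambda_\times \iff \rho \in \lang{\varphi}. \]
Combined with the invariant, this gives $\Loc(\xi_\times) \in \Lambda_\times \iff \Loc(\xi) \in \Lambda$ for corresponding plays, which by the strategy correspondence shows that the system wins the product from $((\linit,\qinit),\assmt{x})$ iff it wins the original game from $(\linit,\assmt{x})$; \Cref{thm:game-correctness} then completes the argument.

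To prove the claim I would invoke \Cref{def:monitor-formula}. For the ``$\Leftarrow$'' direction, assume $\rho \in \lang{\varphi}$. No $q_i$ can carry the verdict $\UNSAT$: writing $\rho[0,i] = \pi\cdot\assignment$, the $\UNSAT$ clause would give $\pi\cdot\assignment\cdot\rho' \notin \lang{\varphi}$ for all $\rho'$, contradicting $\rho \in \lang{\varphi}$ upon choosing $\rho' = \rho[i+1,\infty)$. Since in addition $\rho \in \lang{\varphi}$ forces $l_0 l_1 \ldots \in \Lambda$ by the invariant, both requirements of membership in $\Lambda_\times$ hold. For ``$\Rightarrow$'', assume $\Loc(\xi_\times) \in \Lambda_\times$, so $\verdict(q_i) \neq \UNSAT$ for all $i$ and either $l_0 l_1 \ldots \in \Lambda$ or some $q_i$ has verdict $\SAFETY$. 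In the first case the invariant directly yields $\rho \in \lang{\varphi}$; in the second, fix $i$ with $\verdict(q_i) = \SAFETY$ and write $\rho[0,i] = \pi\cdot\assignment$. Because no monitor state along $\rho$ is labelled $\UNSAT$, the precondition of the $\SAFETY$ clause of \Cref{def:monitor-formula} is met for the continuation $\rho' = \rho[i+1,\infty)$, giving $\rho = \pi\cdot\assignment\cdot\rho' \in \lang{\varphi}$.

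I expect the main obstacle to be the careful bookkeeping in the ``$\Rightarrow$'' direction, specifically the $\SAFETY$ case: one must verify that the ``never $\UNSAT$'' precondition of the second clause of \Cref{def:monitor-formula} is exactly what the first conjunct of the $\Lambda_\times$ winning condition supplies, and align the monitor indexing $q_i = \delta^*_M(\rho[0,i])$ with the decomposition $\rho[0,i] = \pi\cdot\assignment$ used in the definition. A secondary subtlety is justifying the strategy/play bijection rigorously, namely arguing that determinism of $\delta$ and $\delta_M$ makes the monitor component a function of information already contained in the location component and the assignments, so that a winning strategy in either game transfers to the other without the monitor changing the players' available moves.
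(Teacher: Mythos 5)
Your proposal is correct and follows essentially the same route as the paper: both arguments reduce to showing that a product play is winning for $\Lambda_\times$ iff the induced assignment sequence lies in $\lang{\varphi}$ (using the \UNSAT\ clause of \Cref{def:monitor-formula} in one direction and the \SAFETY\ clause in the other), combined with the fact that determinism makes the location projection of the product run coincide with the run in $\symgame$, and then invoke the realizability--winning correspondence of \Cref{thm:game-correctness}. The paper packages the last step as an equality of pseudo-languages fed into an auxiliary lemma rather than an explicit play/strategy bijection, and your version is in fact slightly more careful in the direction where one must verify that no \UNSAT\ verdict occurs along a play whose induced sequence satisfies $\varphi$.
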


\section{\templogic\ Monitor Construction}\label{sec:monitor}
In this section, we describe how we construct a monitor for a given $\Phi \in \templogic(\specvars)$.
Intuitively, the states of the monitor are formulas generated from $\Phi$ that track the current obligations.
More concretely,  each state corresponds to a formula that needs to hold on the suffix of an infinite sequence whose prefix  reaches this state in order for the sequence to satisfy $\Phi$.
Clearly,  if the formula associated with a state is $\false$,  we can assign verdict $\UNSAT$ to that state.
If this is not the case, we apply logical reasoning to the state in order to simplify it,  and potentially enable a simplification of the winning condition of the game or facilitate useful verdicts in  successor states.

In the following subsections, we present the construction of the monitor $M = (Q,\qinit,\preds,\delta,\verdict)$.
\begin{itemize}
\item   
    $Q$ is defined in \Cref{sec:monitor-states} such that each state corresponds to an $\templogic(\specvars)$ formula.
    The initial state $\qinit$ is the state corresponding to $\Phi$. 

\item   
    The transition function is defined by $\delta(q, a) := \applyRules(\nextState(q, a))$.
    \begin{itemize}
    \item    In \Cref{sec:expansion},  we give $\nextState$ which computes successors using temporal expansion.
    \item   In \Cref{sec:rules}, we present $\applyRules$ which applies semantic transformation rules to a state using logical reasoning.  
    Those rules are the key ingredient of the monitor construction as they enable a more refined verdict assignment.
    \end{itemize}
\item    $\preds$ is defined as $\atoms(\Phi) \cup \propagationPreds \cup \generatedPreds$ where $\propagationPreds$ are predicates introduced in \Cref{sec:expansion} and $\generatedPreds$ are predicates generated by a rule in \Cref{sec:rules}.

\item   
    $\verdict$ is defined in \Cref{sec:verdict-labelling} depending on  the formula represented by each state.
    \end{itemize}

\subsection{Monitor State Space}
\label{sec:monitor-states}
\begin{figure}
$
\begin{array}{ll}
    \subform(\alpha) &:= \{ \alpha, \top, \bot \} \\
    \subform(\lnot \varphi) &:= \subform(\varphi) \cup \{\lnot \psi \mid \psi \in \subform(\varphi)\} \\
    \subform(\varphi_1 \land \varphi_2) &:= \subform(\varphi_1) \cup \subform(\varphi_2) \cup \{ \psi_1 \land \psi_2 \mid \psi_i \in \subform(\varphi_i), i \in \{1,2\} \} \\
    \subform(\LTLnext \varphi) &:= \subform(\varphi) \cup \{\LTLnext \psi \mid \psi \in \subform(\varphi)\} \\
    \subform(\varphi_1 \LTLuntil \varphi_2) &:= \subform(\varphi_1) \cup \subform(\varphi_2) \cup \{ \psi_1 \LTLuntil \psi_2 \mid \psi_i \in \subform(\varphi_i), i \in \{1,2\} \} 
\end{array}
$
\caption{Definition of $\subform : \templogic(\specvars) \to \power{\templogic(\specvars)}$.}\label{fig:subforms}
\end{figure}

We assume w.l.o.g.\ that $\Phi$ is an implication of the form $\Phi_\as \rightarrow \Phi_\ga$, where 
$\Phi_\as$ is a conjunction of \templogic\ formulas representing the specification's assumptions and
$\Phi_\ga$ is a conjunction of \templogic\ formulas representing the guarantees.
Note that this is a common form for specifications.

We construct the states $Q$ using formulas obtained from $\Phi$ 
by applying temporal expansion laws.
To this end, we first define in \Cref{fig:subforms} the \emph{closure} of an \templogic\ formula,  which contains its subformulas and their combinations that can result from applying the expansion laws.

Semantically,  each state $q \in Q$ of the constructed monitor corresponds to an \templogic\ formula, that is, it tracks the temporal properties that have to be satisfied by the infinite sequences in the language associated with that state.
For the purpose of the monitor construction,  we represent $q$ as a tuple of sets of \templogic\ formulas, that represent different parts of the formula associated with $q$.

Formally, a state $q \in Q$ of the constructed monitor is a tuple  
\[q = \langle \formF_\as,\formE_\as,\formF_\ga,\formE_\ga,\Imp_\as,\Imp_\ga\rangle\]
with components which are sets of \templogic\  formulas such that
\begin{itemize}
\item $\formF_\as,\formE_\as,\formF_\ga,\formE_\ga \subseteq \boolcomb(\subform(\Phi) \cup \preds)$, and
\item $\Imp_\as,\Imp_\ga \subseteq \{ \LTLglobally (\gamma \to \varphi) \mid \varphi \in \{ \alpha, \LTLnext \alpha, \LTLeventually \alpha, \LTLglobally \alpha \}, \gamma, \alpha \in \boolcomb(\preds) \}$.
\end{itemize}

We now explain the role  of the individual components of a state $q\in Q$.

Intuitively,  the sets $\formF_\as \cup \formE_\as$ and $\formF_\ga \cup \formE_\ga$ represent the current assumptions and guarantees respectively. 
More precisely, 
we associate with $q$ the \templogic\  formula $\stateformula(q)$  defined as
\[\stateformula(q) := 
\toform{\formF_\as \cup \formE_\as} \rightarrow 
\toform{\formF_\ga \cup \formE_\ga},\]
where for a set $F$ of formulas, we denote by $\toform{F}:= \bigwedge_{\varphi \in F}\varphi$ the conjunction of all  elements of $F$.

The sets $\Imp_A$  and $\Imp_G$ are used to accumulate additional assumptions and guarantees that are ``\emph{implied}'' by any possible path in the monitor from the initial state to $q$.

The separation into the components $\formF_D$ and $\formE_D$ for each $D \in \agset$ facilitates the derivation and utilization of elements in $\Imp_D$, while avoiding circular reasoning.  In particular,  elements of $\Imp_D$ are deduced using the $\formE_D$ component  and used to simplify formulas in the $\formF_D$ component.
The formulas in $\Imp_D$ are all of the form $\LTLglobally \psi$,  meaning that $\psi$ always holds from that state onwards.
Each $\psi$ is an implication with a premise $\gamma$ that is a non-temporal formula,  which allows for checking locally if $\gamma$ is implied by the current state.
We will see more concretely in \Cref{sec:rules},  how the specific form of the states in $Q$ is utilized in the construction of the monitor's transition function.

\paragraph{Finiteness of $Q$}

By definition, $Q$ is required to be finite. 
The set $\subform(\Phi)$ is finite.  
To ensure that the set $\preds$ remains finite,   we
ensure that the sets $\propagationPreds$ and $\generatedPreds$ remain finite throughout the construction of the monitor, 
as explained in \Cref{sec:expansion} and \Cref{sec:rules}, respectively.
While,  there are infinitely many syntactically different Boolean combinations over $\subform(\Phi) \cup \preds$,  there are only finitely many of them up to propositional equivalence. 
For forming the sets $\formF_\as,\formE_\as,\formF_\ga$, and $\formE_\ga$ we use unique representatives of the respective equivalence classes, and hence, there are only finitely many instances of $\formF_\as,\formE_\as,\formF_\ga$, and $\formE_\ga$.
Similarly for $\Imp_\as$ and $\Imp_\ga$.

We remark that the state space of the monitor is constructed on-the-fly, without computing $\subform(\Phi)$ upfront and many elements of $\subform(\Phi)$ are usually not reachable in the monitor.

\paragraph{Initial State}
The formula associated with the monitor's initial state $\qinit$ is $\Phi$.
We define 
$\qinit := \langle \formF^0_\as,\formE^0_\as,\formF^0_\ga,\formE^0_\ga,\emptyset,\emptyset\rangle$ 
where for $D \in \agset$,  the sets $\formF^0_D$ and $\formE^0_D$ partition the set of conjuncts of $\Phi_D$. 
We discuss the concrete choice of elements of $\formF_D$ and $\formE_D$ in~\Cref{sec:rules}.

\begin{example}\label{ex:monitor-initial}
For $\Phi := \LTLglobally (i > 0) \land i = 0\to x = 0 \land \LTLglobally (x' - y = x) \land \LTLglobally \LTLnext x > 0 \land \LTLglobally \LTLeventually x = 10$ 
a possible initial state is
$\langle \emptyset, \{\LTLglobally (i > 0), i = 0 \}, \{\LTLglobally \LTLeventually x = 10\}, \{x = 0, \LTLglobally (x' - y = x), \LTLglobally \LTLnext x > 0 \},\emptyset,\emptyset\rangle$.
\end{example}

\paragraph{Correctness Property}

In the rest of this section, we present the construction of the transition function $\delta$ and the verdict-labelling function $\verdict$,  after which we will show that the constructed monitor satisfies the conditions in \Cref{def:monitor-formula}. 
To this end,  we will show that
\begin{center}
for every $q\in Q$,  $\assignment \in \assignments{\progvars\cup\inputs}$ and 
$\pi\cdot\assignment\cdot\rho \in \assignments{\progvars\cup\inputs}^\omega$ with
$\delta^*_M(\pi\cdot\assignment) = q$:
\end{center}
\begin{equation} \label{eq:monitor-correctness}
\pi\cdot\assignment\cdot\rho\models \Phi \text{ if and only if }\assignment\cdot\rho \models \stateformula(q).
\end{equation}

We refer to (\ref{eq:monitor-correctness}) as the \emph{monitor-state correctness property}.

\subsection{Next-State Construction Using Expansion for \templogic\ Formulas}
\label{sec:expansion}
The states of the monitor correspond to temporal formulas that track properties that have to hold for an infinite sequence from this step on. 
In order to compute the successors of such a state, we split the formula into a part that has to hold immediately in the current step,  and a part that has to hold later in the future.
We do so by applying the expansion laws of the temporal operators~\cite{BaierK08}. 

To construct $\nextState(q,a)$ for a state $q$ and a letter $a$,  we will apply expansion to each of the formulas in the sets in $q$.  Thus, we first define an expansion function for individual formulas,  similar to~\cite{EsparzaKS20, Vardi94}, and then we explain how we use this function to compute the successors of states in $Q$.

\paragraph{Expansion Function for \templogic\ Formulas}
Before we present the formal definition of the expansion operation, consider as a simple example the formula $\LTLglobally (i > 0 \to \LTLnext (x' > x))$. 
This formula, by the expansion laws, is equivalent to $(i > 0 \to \LTLnext (x' > x)) \land \LTLnext\LTLglobally (i > 0 \to \LTLnext (x' > x))$, which splits the formula into a ``current part'' and the part under a next operator.
The latter only matters in future steps.
If $i > 0$ holds in the current letter $a$,  then the formula simplifies to 
$\LTLnext (x' > x) \land \LTLnext\LTLglobally (i > 0 \to \LTLnext (x' > x))$.
This resulting formula has only requirements that have to be checked in the future.
Hence, in this case the successor obligation is $(x' > x) \land \LTLglobally (i > 0 \to \LTLnext (x' > x))$, that is,  the same obligation and the variable $x$ has to become larger in the following step.
Analogously, if $i \leq 0$ holds in the current letter,  then the successor formula is $\LTLglobally (i > 0 \to \LTLnext (x' > x))$. 

We now proceed with the formal definition.
Let us denote with $\BC(\Phi) := \boolcomb(\subform(\Phi)\cup \preds)$ the set of Boolean combinations of the elements of $\subform(\Phi) \cup \preds$.
We define the expansion operation 
$\expand: \BC(\Phi) \times \power{\preds} \to \BC(\Phi)$
to take a formula as used in $Q$ and a subset of predicates of $\preds$ and return the successor element.  The $\expand$ operation folds the application of the expansion laws, the assignment of the predicates in $\preds$, and the removal of  the $\LTLnext$ operator into one:
\[
\begin{array}{lll}
    \expand(\alpha, a) & := \top &\text{ if } \alpha \in a \\
    \expand(\alpha, a) & := \bot &\text{ if } \alpha \not\in a \\
    \expand(\lnot \varphi, a) & := \lnot \expand(\varphi, a) \\
    \expand(\varphi_1 \land \varphi_2, a) & := \expand(\varphi_1, a) \land \expand(\varphi_2, a) \\
    \expand(\LTLnext \varphi, a) & := \varphi \\
    \expand(\varphi_1 \LTLuntil \varphi_2, a) & := \expand(\varphi_2, a) \lor \expand(\varphi_1, a) \land (\varphi_1 \LTLuntil \varphi_2). \\
\end{array}
\]
Note that $\expand$ is well-typed,  as it only generates elements of $\BC(\Phi)$.

The function $\expand$ has the following property,  which is crucial for the correctness of the monitor: An infinite execution $\rho$ that is consistent with an assignment $a$ to the predicates,  satisfies a formula $\varphi$ if and only if the suffix $\rho[1,\infty)$ satisfies $\expand(\varphi, a)$. This is formally stated below.
\begin{restatable}{lemma}{restateLemaExpansion}\label{lem:expansion}
For all $\rho \in {(\assignments{\progvars \cup \inputs})}^\omega$ and $a \subseteq \preds$ where 
$\langle \rho[0], \rho[1] \rangle \FOLentailsT{T} \left( \bigwedge_{\alpha \in \preds \cap a} \alpha \right) \land \left( \bigwedge_{\alpha \in \preds \setminus a} \lnot \alpha \right)$, 
it holds that
$\rho \models \varphi$ if and only if $\rho_{+1} \models \expand(\varphi, a)$.
\end{restatable}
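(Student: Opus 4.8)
The plan is to prove the equivalence by structural induction on the \templogic\ formula $\varphi$. Throughout I fix $\rho$ and $a \subseteq \preds$ satisfying the stated hypothesis, which I abbreviate by saying that $\langle \rho[0], \rho[1]\rangle$ \emph{agrees with} $a$: that is, $\langle \rho[0],\rho[1]\rangle \FOLentailsT{T} \alpha$ for $\alpha \in \preds \cap a$ and $\langle \rho[0],\rho[1]\rangle \FOLentailsT{T} \lnot\alpha$ for $\alpha \in \preds \setminus a$. The induction is well-founded because $\expand$ recurses only into strict subformulas (in the $\LTLuntil$ clause the reoccurring subformula $\varphi_1 \LTLuntil \varphi_2$ is left untouched rather than re-expanded), and because every atomic subformula occurring in an element of $\BC(\Phi)$ lies in $\atoms(\Phi) \subseteq \preds$ or directly in $\preds$, so the hypothesis on $a$ determines its truth value at the current step.

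For the base case $\varphi = \alpha$ with $\alpha \in \preds$: if $\alpha \in a$ then $\expand(\alpha, a) = \top$, and since $\langle\rho[0],\rho[1]\rangle$ agrees with $a$ we have $\rho \models \alpha$, while $\rho_{+1} \models \top$ trivially; if $\alpha \notin a$ then $\expand(\alpha, a) = \bot$, agreement gives $\rho \not\models \alpha$, and $\rho_{+1} \not\models \bot$. (The constants $\top,\bot$ are the degenerate cases handled by $\expand(\top,a)=\top$ and $\expand(\bot,a)=\bot$.) The Boolean cases are immediate: for $\lnot\varphi$ and $\varphi_1 \land \varphi_2$ I would unfold the semantics of the connective, apply the induction hypothesis to the immediate subformulas, and use that $\expand$ commutes with $\lnot$ and $\land$. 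The $\LTLnext$ case is where the ``removal of the next operator'' is realized: by \Cref{def:rpltl-semantics}, $\rho \models \LTLnext\varphi$ iff $\rho_{+1} \models \varphi$, and since $\expand(\LTLnext\varphi, a) = \varphi$ this is exactly $\rho_{+1} \models \expand(\LTLnext\varphi, a)$; note this case does not invoke the induction hypothesis.

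The crux is the $\LTLuntil$ case, which is where the temporal expansion law enters. First I would establish, directly from \Cref{def:rpltl-semantics}, the one-step unfolding: $\rho \models \varphi_1 \LTLuntil \varphi_2$ iff $\rho \models \varphi_2$, or $\rho \models \varphi_1$ and $\rho_{+1} \models \varphi_1 \LTLuntil \varphi_2$. This follows by splitting the witness index $j$ in the until semantics into $j=0$ and $j \geq 1$ and re-indexing the latter to a witness $j-1$ for $\rho_{+1}$, using $\rho_{+(i+1)} = (\rho_{+1})_{+i}$. Applying the induction hypothesis to the strict subformulas $\varphi_1,\varphi_2$ (at the same $\rho$ and $a$) rewrites $\rho \models \varphi_2$ as $\rho_{+1} \models \expand(\varphi_2,a)$ and $\rho \models \varphi_1$ as $\rho_{+1} \models \expand(\varphi_1,a)$, while $\rho_{+1} \models \varphi_1 \LTLuntil \varphi_2$ is carried over verbatim. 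Collecting these yields $\rho \models \varphi_1 \LTLuntil \varphi_2$ iff $\rho_{+1} \models \expand(\varphi_2,a) \lor \big(\expand(\varphi_1,a) \land (\varphi_1 \LTLuntil \varphi_2)\big)$, and the right-hand disjunction is by definition $\expand(\varphi_1 \LTLuntil \varphi_2, a)$.

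I expect no genuine obstacle; the remaining points require only care with bookkeeping. First, the shift must be handled consistently in the $\LTLuntil$ unfolding: I must verify that a witness $j \geq 1$ for $\rho$ corresponds precisely to a witness $j-1$ for $\rho_{+1}$ via $\rho_{+(i+1)} = (\rho_{+1})_{+i}$, so that the unfolding is an exact equivalence rather than one inclusion. Second, I must confirm that agreement with $a$ assigns a definite truth value to every atom occurring in $\varphi$ — which relies on evaluating atoms under the complete assignment $\langle \rho[0],\rho[1]\rangle$ in the theory $T$ and on every such atom belonging to $\preds$ — so that the two base sub-cases $\alpha \in a$ and $\alpha \notin a$ are exhaustive and mutually exclusive.
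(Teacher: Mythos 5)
Your proof is correct and follows essentially the same route as the paper's: structural induction on $\varphi$, with the base case resolved by the hypothesis that $\langle\rho[0],\rho[1]\rangle$ agrees with $a$, the Boolean and $\LTLnext$ cases handled by unfolding the semantics, and the $\LTLuntil$ case reduced via the one-step expansion law before applying the induction hypothesis to $\varphi_1$ and $\varphi_2$. Your additional care about the re-indexing of the until-witness and the exhaustiveness of the base case is sound bookkeeping that the paper leaves implicit.
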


\textit{Remark on computing $\expand$:} 
To compute the expansion we do not enumerate $\power{\preds}$ but branch on occurrences inside the formula, as often not all predicates are relevant to the current step.
Also, we remove selections $a \in \power{\preds}$ that are inconsistent on the theory level, i.e. if the conjuction of predicates in $a$ is unsatisifable, as those selection will never be enabled anyways.
This is a simple, local application of theory-level reasoning.
For example, in $x = 0 \lor x > 0 \land \LTLnext y = 0$, the predicate $y = 0$ is irrelevant for the current step and the selection $\{x = 0, x > 0, \dots\}$ is inconsistent.

\paragraph{Predicate Propagation}

The application of $\expand(\varphi,a)$ checks the ``current'' part of the formula $\varphi$  against the letter $a$.  
However,  as the values of the variables $\progvars'$ are related to the values of $\progvars$ at the next step,   $\varphi$ and $a$ might imply some additional properties that are satisfied at the next step, but which are not propagated into $\expand(\varphi,a)$.
For example, consider the formula $\varphi := x = 0 \land x' = x \land \LTLnext (x = 1)$,  which is unsatisfiable.
However, for $a = \{ x = 0, x' = x, \dots \}$ the expansion will yield $x = 1$ which is satisfiable. 
While this is irrelevant for the correctness of the monitor, 
it is sometimes useful to \emph{propagate these additional properties to the next step, and strengthen $\expand(\varphi,a)$}.
For instance,  in the above example, we would like to propagate the information that $x = 0$ in the next step. 

To this end,  we fix at the start of the monitor construction a set $\propagationPreds \subseteq \QF{\progvars}$ of formulas whose value we want to propagate. This set depends on $\Phi$. One suitable choice is the set of elements of $\atoms(\Phi)$ that belong to $\QF{\progvars}$ plus their negations.
We discuss this further in \Cref{sec:experiments}.

For $a \in \power{\preds}$, we can compute the set of formulas in $\propagationPreds$ that hold in the next step as 
\begin{center}
$\propagate(a) := \left\{ \beta \in \propagationPreds \mid \forall \progvars, \inputs, \progvars'.~\left( \bigwedge_{\alpha \in \preds \cap a} \alpha \right) \land \left( \bigwedge_{\alpha \in \preds \setminus a} \lnot \alpha \right) \to \beta[\progvars \mapsto \progvars'] \text{ is valid}\right\}.$
\end{center}
Intuitively, we just check which elements of $\propagationPreds$ always hold in the next step after our current assignment $a \in \power{\preds}$.
In our example from above,  this would be the case for $x = 0$, since $x = 0 \land x' = x \to x' = 0$ is indeed valid,  and we can propagate $x = 0$ to the successor.

\paragraph{Successor-State Computation}

We have shown how to apply expansion to a single \templogic\ formula, and how to compute additional predicates that we can propagate.
We now put those together,  and define the function $\nextState$. 
Intuitively, we apply $\expand$ individually to all formulas in 
$\formE_\as$, $\formF_\as$, $\formE_\ga$, and $\formF_\ga$.
We add the propagated elements of $\propagationPreds$ to $\formE_\ga$, which maintains our form.
We do not apply $\expand$ to $\Imp_\as$ and $\Imp_\ga$,  since they are not part of $\stateformula(q)$, the formula associated with a state $q$.
In fact,  as we will see in \Cref{sec:rules},  the elements of $\Imp_\as$ and $\Imp_\ga$ are derived from the other components of $q$ and the paths in the monitor leading to $q$.

Formally, for a monitor state $q = \langle \formF_\as,\formE_\as,\formF_\ga,\formE_\ga,\Imp_\as,\Imp_\ga\rangle$ and $a \in \power{\preds}$ we define
$$\nextState(q, a) := \langle 
\expandSet(\formF_\as),
\expandSet(\formE_\as),
\expandSet(\formF_\ga),
\expandSet(\formE_\ga) \cup \propagate(a),
\Imp_\as,\Imp_\ga
\rangle$$ 
where $\expandSet(F) := \{ \expand(\varphi, a) \mid \varphi \in F\}$ for $F\subseteq \templogic(\specvars)$.

\subsection{Rules for Transformation of Monitor States}
\label{sec:rules}
In principle,  using the function $\expand$ defined in \Cref{sec:expansion}, we can construct a monitor for $\Phi$.
Such a monitor, however, will be of little use,  since the verdicts that it can assign will be based purely on the temporal expansion laws and the aforementioned pruned predicate selections.
The goal of our monitor construction,  however,  is to perform a combination of first-order and temporal logic reasoning that will enable the desired game simplifications envisioned in \Cref{sec:examples}.

To achieve this, we define a function $\applyRules$ which performs a sequence of transformations on state $\expand(q,a)$ to compute the successor monitor state $\delta(q,a)$.
These transformations modify the state by applying local theory reasoning, high-level reasoning over the temporal operators, and most importantly reasoning that connects the theory reasoning applied over time with the temporal aspect of the state.
The ultimate aim of these transformations is to transform, when possible a state $q_1$ into a state $q_2$, where 
$\stateformula(q_2)$ is $\false$ or a syntactic safety formula. 
If this is the case, then the monitor can assign the verdict \UNSAT\ or  \SAFETY\ respectively, thus enabling pruning of the game or simplification of the winning condition when building the product of the monitor with the game.
For states that cannot be reformulated to $\false$ or syntactic safety formula, the transformation might allow better pruning of the predicate selection between states or enable useful reasoning in a successor state by simplification or deriving new information.

We will now define a set of allowed transformations,  $\ruleset$, each of which is in the form of a rule whose \emph{premises} characterize the monitor states to which this rule is applicable,  and an \emph{effect} that describes the state modification. 
The transformation rules will make heavy use of the sets $\Imp_\as$ and $\Imp_\ga$, which did not play a role in the definition of $\expand$ which left them unmodified.

As stated earlier,  $\Imp_A$  and $\Imp_G$ are used to accumulate additional assumptions and guarantees that are ``implied'' by any possible path in the monitor from the initial state to a given state $q$. 
More formally,  the rules in $\ruleset$, which will extend and use the sets $\Imp_D$, will be such that the following \emph{implied-state correctness property} is satisfied:
for every $q = \langle \formF_\as,\formE_\as,\formF_\ga,\formE_\ga,\Imp_\as,\Imp_\ga\rangle$, and every
$\assignment \in \assignments{\progvars\cup\inputs}$ and 
$\pi\cdot\assignment\cdot\rho \in \assignments{\progvars\cup\inputs}^\omega$ with
$\delta^*_M(\pi\cdot\assignment) = q$:
\begin{equation} \label{eq:imp-correctness}
\begin{array} {lll}
\text{if }\assignment \cdot \rho \models  \toform{\formE_\as} & \text{ then } & 
\assignment \cdot \rho \models \toform{\Imp_\as}\\
\text{if }\assignment \cdot \rho \models  \toform{\formE_\as  \cup \formE_\ga} & \text{ then } & 
\assignment \cdot \rho \models \toform{\Imp_\ga}.
\end{array}
\end{equation}
Intuitively, this property states that the properties satisfied by any prefix reaching $q$,  together with the current assumptions in the set $\formE_\as$ (respectively assumptions plus guarantees in $\formE_\as  \cup \formE_\ga$) imply the formulas collected in the set $\Imp_\as$ (respectively the set $\Imp_\ga$).

\paragraph{Transformation Soundness.}
In order to guarantee that the resulting monitor satisfies the monitor-sate correctness property~(\ref{eq:monitor-correctness}),  
we will ensure that its states satisfy the implied-state correctness property~(\ref{eq:imp-correctness}).
To facilitate this, we provide a local criterion, termed \emph{sound monitor-state transformation}, for the state transformation rules that guarantees the global property~(\ref{eq:imp-correctness}).

\begin{definition}[Sound Monitor-State Transformation]\label{def:sound-transform}

A \emph{sound monitor-state transformation} is a partial function $\transform : Q \to Q$ such that for every 
$q_1 = \langle \formF_A^1,\formE_\as^1,\formF_\ga^1,\formE_\ga^1,\Imp_\as^1,\Imp_\ga^1\rangle$ and  
$q_2 = \langle \formF_\as^2,\formE_\as^2,\formF_\ga^2,\formE_\ga^2,\Imp_\as^2,\Imp_\ga^2\rangle$
with $\transform(q_1) = q_2$ the following conditions are satisfied:
\begin{itemize}
\item[(a)] $\toform{\formF_\as^1 \cup \formE_\as^1 \cup \Imp_\as^1} \rightarrow 
\toform{\formF_\ga^1 \cup \formE_\ga^1 \cup \Imp_\ga^1}\equiv 
\toform{\formF_\as^2 \cup \formE_\as^2 \cup \Imp_\as^2} \rightarrow 
\toform{\formF_\ga^2 \cup \formE_\ga^2 \cup \Imp_\ga^2}
$
\item[(b)] the following two implications are valid
$\begin{array}{lll}
\toform{\formE_\as^2 \cup  \Imp_\as^1} & \rightarrow & 
\toform{ \Imp_\as^2},\\ 
\toform{\formE_\as^2 \cup \formE_\ga^2 \cup \Imp_\ga^1 \cup \Imp_\as^1}  & \rightarrow & \toform{ \Imp_\ga^2}.
\end{array}$
\item[(c)] the implications 
$\toform{\formE_D^2} \rightarrow  \toform{ \formE_D^1}$  for $D \in \agset$ are valid.
\end{itemize}
\end{definition}
Condition~(a) requires equivalence  on the formula level,  including the implied formulas,  while condition (b) is the localized version of the global property~(\ref{eq:imp-correctness}).  Condition (c) states that $E_D$ can only be strengthened.
Any rule that defines a transformation satisfying the conditions in \Cref{def:sound-transform} can be added to our set of rules $\ruleset$ while maintaining correctness of the monitor.

We define several operations and useful sets of formulas that are used in the rules' formalization.

\paragraph{Substitution in Rules.}

The effect of a rule is often described by means of substitution.  
For a set $F\subseteq \templogic(\specvars)$ of formulas and formulas $\varphi,\psi \in \templogic(\specvars)$  we denote with 
$F[\varphi\mapsto\psi]$ the set of formulas obtained from $F$ by simultaneously substituting in each of them each occurrence of $\varphi$ by $\psi$.
Furthermore, we denote by 
$F[\varphi \mapstonn \psi]$ the set of formulas obtained  from  $F$ by simultaneously substituting in each of them each \emph{non-nested} occurrence of $\varphi$ by $\psi$.
Here, a \emph{non-nested} occurrence of $\varphi$ in a formula $\theta$ is one that is not in the scope of a temporal operator.

\paragraph{Useful State Components.}

For the premises of rules, two types of formulas extracted from a state's components  are of particular interest.  
These are the non-temporal assumptions and guarantees that must be satisfied in the \emph{current} step, and the state properties that must hold for all steps in the future.  
We refer to the latter as the \emph{implied invariants} associated with the given state.
Formally, for a state $q = \langle \formF_\as,\formE_\as,\formF_\ga,\formE_\ga,\Imp_\as,\Imp_\ga\rangle$, we define the following formulas in $\QF{\specvars}$:
$$
\begin{array}{lll}
\Curr_\as(q) & := & \toform{\{ \alpha \in \QF{\specvars} \mid \alpha\in \formE_\as \}},\\
\Curr_\ga(q) & := & \toform{\{ \alpha \in \QF{\specvars} \mid \alpha\in \formE_\ga \cup \formE_\as \}}
\end{array}
$$
representing the current-state assumptions and guarantees, and 
$$
\DedInv_D(q) := \toform{\{ \alpha \in \QF{\specvars} \mid \LTLglobally \alpha\in \Imp_D \}} \quad\quad \text{ for } D \in \{\as, \ga\}
$$
representing the implied invariants.
Note that we add the current assumptions $\Curr_\as(q)$ as part of the current guarantees, as the guarantees must be satisfied only if the assumptions are.

\begin{example}
    Reconsider the state $q$ from~\Cref{ex:monitor-initial} where additionally
    $\Imp_\as = \{ \LTLglobally (\top \to i > 0) \}$ and
    $\Imp_\ga = \{ \LTLglobally (\top \to x' - y = x), \LTLglobally (\top \to \LTLnext x > 0) \}$. 
    We have that 
    $\Curr_\as(q) = (i = 0)$, 
    $\Curr_\ga(q) = (i = 0 \land x = 0)$, 
    $\DedInv_\as(q) = i > 0$, and
    $\DedInv_\ga(q) = x' - y = x$.
\end{example}

We are now ready to present the state transformation rules of our monitor construction.
We group the rules in two categories: \emph{formula simplification rules} and \emph{rules for deriving implied formulas}.
Intuitively,  the formula simplification rules simplify the formulas represented by the sets $\formF_D$ and $\formE_D$ for $D \in \agset$, thus possibly bringing the state closer to one where the associated formula is such that a verdict can be assigned.
The rules for deriving implied formulas extend the sets $\Imp_D$ possibly enabling further applications of the simplification rules.

\subsubsection{Simplification Rules}
The formula simplification rules, shown in~\Cref{fig:simp-rules}, serve two purposes. 
First,  rewriting the formulas in a state $q$ into simpler ones, or into forms that facilitate the application of other rules.
Second,  replacing formulas or sub-formulas in $q$ with constants (i.e.,  with $\true$ or $\false$). We propose three types of simplification rules: rules for \emph{formula rewriting}, 
rules for \emph{detecting unsatisfiability} of $\toform{\formF_\as \cup \formE_\as \cup \Imp_\as}$ or $\toform{\formF_\ga \cup \formE_\ga \cup \Imp_\ga}$ and rules for \emph{sub-formula substitution}.

\paragraph{Formula Rewriting.}
As \emph{formula rewriting} rules we consider state transformations based on known equivalences between \templogic\  formulas (in fact, between LTL formulas).  As one example,  consider the replacement of every occurrence of $\varphi \LTLweakuntil \false$ in $q$ by $\LTLglobally\varphi$.
These rules trivially guarantee Conditions~(a) and (c) and $\toform{\Imp_D} \equiv \toform{\Imp_D'}$, which ensures soundness of the resulting transformation.

One noteworthy rule of this type is \nextextend, shown in \Cref{fig:simp-rules}.  It makes explicit the connection between the next-state variables $\progvars'$ and the next-state temporal operator $\LTLnext$. 
To this end,  for a formula $\alpha\in \QF{\progvars}$, the rule \nextextend\ simultaneously replaces in every formula in $q$ every occurrence of $\alpha[\progvars \mapsto \progvars']$ or $\LTLnext \alpha$ by the equivalent formula $\alpha[\progvars \mapsto \progvars'] \wedge \LTLnext\alpha$.

\begin{figure}
\input{rules_simplify}
\caption{Formula simplification rules.  Each rule is given by its set of premisses and its effect on the monitor state, where $D \in \agset$.  Components of the state not assigned in the effect are not modified. by the rule.}
\label{fig:simp-rules}
\end{figure}

\paragraph{Detecting Unsatisfiability.}
We propose several rules for detecting logical inconsistency,   which play a crucial role in identifying states $q$ where the assumptions or the guarantees are  equivalent to $\false$.

\smallskip

\noindent
\textbf{Rule \substunsat} is applicable when for some $D \in \agset$ the formula $\Curr_D(q)$, which should be satisfied at the current time-point and the formula $\DedInv_D(q)$ which should be satisfied for all future time-points, including the current,  contradict each other.  In such case,  we replace the assumption or guarantee (depending on $D$) component of $q$ by $\false$.  As a result, a subsequent application of the formula rewriting transformation rules will reduce the formula associated with $q$ to a constant. 

\smallskip

\noindent
\textbf{Rule \substunsatF} detects contradicting liveness and invariants assumptions or guarantees.  It is applicable when there is an element of some $\Imp_D$ of the form 
$\LTLglobally(\gamma\rightarrow\LTLeventually \beta)$ with $\beta \in \QF{\specvars}$ such that $\gamma$ is must hold in the current time-point and $\beta$,  which in this case must be satisfied eventually,  contradicts the invariants in $\Imp_D$.
In such case,  $\toform{\formF_D \cup \formE_D \cup \Imp_D}$ is unsatisfiable.
 
 \smallskip 
Note that since  $\Curr_D(q),  \DedInv_D(q)$  and $\beta$ are formulas in our background logical theory, the premise of rules 
\substunsat\  and \substunsatF\ can be checked using an SMT solver.

\paragraph{Sub-Formula Substitution.}
The rules above are applicable when the overall assumption or guarantee component of a state is unsatisfiable. 
The rest of the rules shown in \Cref{fig:simp-rules}, on the other hand,  are applicable to individual sub-formulas,  and allow for substituting them by the constants $\true$ or $\false$, or by some of their sub-formulas. 
Overall,  this leads to simplification of the state.

To avoid circular reasoning,  the substitution operations are applied only to the $\formF_\as$ and $\formF_\ga$ components of $q$,  which are not used for deriving the  $\Curr_D(q)$  and $\DedInv_D(q)$ formulas.

\smallskip

\noindent
\textbf{Rules \substtrue\ and \substfalse} consider formulas $\gamma \in \QF{\specvars}$
that appear as sub-formulas in some $\formF_D$.
If $\gamma$ (or its negation) is entailed by the implied invariants  $ \DedInv_D(q)$, 
we can replace $\gamma$ by $\true$ (or $\false$, respectively) in $\formF_D$,   
since the elements of $\formF_D$ are not used in the derivation of $\Imp_D(q)$.

\smallskip
\noindent
\textbf{Rules \simplifyimp\ and \simplifyand} directly match elements of the set $\Imp_D$ for some $D \in \agset$ with sub-formulas in $\formF_D$. 
Using the fact that $\gamma \rightarrow \varphi$ must hold at every point in time, 
we can apply the  respective substitutions to all occurrences of $\gamma \rightarrow \varphi$ (respectively $\gamma \land \varphi$) in $\formF_D$.

\smallskip
\noindent
\textbf{Rule \simplifynn.}
The previous sub-formula substitution rules replace formulas entailed by,  or elements of,  $\Imp_D$. 
Since each formula $\gamma \rightarrow \varphi$ with $\LTLglobally(\gamma \rightarrow \varphi) \in \Imp_D$ must be satisfied at all points in time,  these rules can apply replacement of any occurrences of the respective formulas in $\formF_D$.
The last simplification rule,  \simplifynn, on the other hand, considers elements 
$\LTLglobally(\gamma \rightarrow \varphi)$ of $\Imp_D$  where 
the formula $\gamma$ is entailed by the conjunction of the implied invariants and the formula $\Curr_D(q)$.
Since $\Curr_D(q)$ is only required in the current time-point,  the rule \simplifynn\ only replaces occurrences of $\varphi$ pertaining to the current point in time, that is, those not in the scope of a temporal operator.

\begin{example}
    Consider a state $\langle \emptyset, \emptyset, \emptyset, \formE_\ga, \emptyset, \Imp_\ga \rangle$ where $\formE_\ga = \{ x = 0 \}$ and $\Imp_\ga = \{ \LTLglobally (x \neq 0)\}$. 
    Now $\Curr_\ga$ is $x = 0$ and $\DedInv_\ga$ is $x \neq 0$.
    Hence, by \substunsat\ we get the state $\langle \emptyset, \emptyset, \{ \bot \}, \{ \bot \}, \emptyset, \Imp_\ga \rangle$.
    Similarly, if we had instead  $\Imp_\ga = \{ \LTLglobally (x = 0 \to \LTLeventually y = 0), \LTLglobally (y \neq 0)\}$, we would have that $\DedInv_\ga$ is $y \neq 0$ and therefore we get again $\langle \emptyset, \emptyset, \{ \bot \}, \{ \bot \}, \emptyset, \Imp_\ga \rangle$, this time by applying  \substunsatF\ .
\end{example}

\subsubsection{Rules for Deducing Formulas in $\Imp_D$}
The transformation rules presented so far do not modify the sets $\Imp_\as$ and $\Imp_\ga$, but rely on their elements for 
the satisfaction of their premises. 
Our second category of rules derives implied formulas and adds them to  $\Imp_\as$ and $\Imp_\ga$.
We distinguish three types of such rules.
First, we have rules that \emph{propagate formulas to $\Imp_\as$ and $\Imp_\ga$} from the sets $\formE_\as$ and $\formE_\ga$ respectively. 
The second type of rules are of crucial importance, as they  \emph{generate invariants and reachability properties} that are 
guaranteed to hold for the infinite sequences in the language of the respective state.
These rules integrate temporal and first-order logical reasoning to derive consequences which can enable the application of the rules in the first category.
Finally,  we have a set of rules that allow us to \emph{combine different elements of each of the sets $\Imp_D$} for $D\in\agset$.

All the formulas that the rules add to the set $\Imp_\as$ are implied by formulas in $\formE_\as \cup \Imp_\as$,
and all the formulas added to $\Imp_\ga$ are implied by $\formE_\as \cup \formE_\ga \cup \Imp_\ga \cup \Imp_\as$. 
Furthermore,  they do not modify the sets $\formE_\as$ and $\formE_\ga$.
Thus, they satisfy condition~(b) in \Cref{def:sound-transform}.
The rules modify $\Imp_D$  by adding new elements to them, and leave the sets $\formF_D$ and $\formE_D$ unchanged.  Therefore,  the resulting transformations satisfy conditions~(a) and (c) in \Cref{def:sound-transform}.

\paragraph{Propagating Formulas to $\Imp_D$.}
The rules \propagateG\ and \propagateW\  shown in \Cref{fig:ded-rules} add to the set $\Imp_D$  formulas of the appropriate form that are present in the set $\formE_D$ or entailed by formulas in $\formE_D$.  
The rule  \propagateassump\ ensures that the formulas implied by the assumptions,  i.e.,  those in $\Imp_\as$ are also present in the set $\Imp_\ga$.

Rule  \propagateG\ directly matches formulas in $\formE_D$ (modulo simple equivalence rewriting).
Rule \propagateW\, identifies pairs of formulas $\alpha_1 \LTLweakuntil \beta_1$ and $\alpha_2 \LTLweakuntil \beta_2$ where the ``end-condition'' formulas $\beta_1$ and $\beta_2$ cannot be satisfied simultaneously and neither $\beta_1$ can be satisfied while $\alpha_2$ is true, nor the other way around.  
In this case,  the two formulas together entail that $\alpha_1 \land \alpha_2$ should hold always meaning that we can add $\LTLglobally(\true \rightarrow \LTLglobally (\alpha_1 \land \alpha_2))$ to $\Imp_D$.

\paragraph{Combination of Implied Formulas.}
The transformation rules  shown in \Cref{fig:saturate-rules} enable the extension of the sets $\Imp_\as$ and $\Imp_\ga$ by combining existing elements of the respective set.
Rules \joinimp\ and \chainimp\ allow for weakening the left-hand side of the implications by combining elements of $\Imp_D$. 
Rule \chainimpG\ enables the derivation of $\LTLglobally \alpha$ from 
$\LTLglobally (\gamma \rightarrow\LTLglobally\alpha)$ by establishing that $\gamma$ is entailed.
Finally,  rules \chainimpF\ and \chainimpN\ allow for strengthening the right-hand side of implications when they are of the form $\LTLeventually \beta$ or $\LTLnext \beta$.

\begin{figure}
\input{rules_saturate}
\caption{
Rules for saturating the sets $\Imp_D$ for $D \in \agset$.
These rules modify only $\Imp_D$.}
\label{fig:saturate-rules}
\end{figure}

\begin{figure}
\input{rules_deduce}
\caption{Rules for deducing implied formulas. 
These rules modify only the set $\Imp_D$ for some $D \in \agset$.}
\label{fig:ded-rules}
\end{figure}

\paragraph{Generating Invariants and Reachability Properties}
We  now turn to the set of rules that generate new invariants, that is,  formulas of the form 
$\LTLglobally(\gamma \rightarrow \LTLglobally\alpha)$,  
and new reachability properties, i.e.,  formulas of the form   
$\LTLglobally(\gamma \rightarrow \LTLfinally\beta)$.
The  generated formulas are entailed by the current $\Imp_D$ and $\formE_D$ sets, 
and are added to $\Imp_D$.
The ability to generate new implied invariants and reachability properties is crucial for ``discharging'' temporal obligations or detecting ``temporal conflicts''.

\smallskip

\noindent
\textbf{Rule \geninv}, shown in \Cref{fig:ded-rules}, establishes that whenever  $\gamma$ is satisfied,  then all sequences of valuations that satisfy the invariants in $\DedInv_D(q)$ must also satisfy $\alpha$ at every point in time.  
This justifies the addition of the formula $\LTLglobally(\gamma \rightarrow \LTLglobally\alpha)$ to $\Imp_D$.
For example, if $\gamma := (x = 0)$ and $\LTLglobally (x' \geq x)$ holds, \geninv\ can prove that $\alpha := (x \geq 0)$ always holds, i.e. $\alpha$ is an invariant.
The formula $\theta$ is a strengthening of $\alpha$,  which plays the role of an inductive invariant.
To apply this rule we need to supply the formulas $\alpha$ and $\gamma$ and check the respective entailments.
A suitable choice is taking $\gamma :=\Curr_D(q) $ and $\alpha := \neg \beta$ for some $\beta$ where $\LTLeventually\beta$ or $\alpha_1 \LTLweakuntil \beta$ appears in $\formF_D$, which would subsequently enable the detection of liveness sub-formulas that cannot be satisfied.
As for $\theta$,  we can either take $\theta:=\alpha$, or if this does not succeed, check for existence of $\theta$ using for instance a solver for Constrained Horn Clauses (CHC).

\begin{example}\label{ex:geninv}
    Consider a state $\langle \emptyset, \emptyset, \formF_\ga, \formE_\ga, \emptyset, \Imp_\ga \rangle$ where $\formE_\ga = \{ x = 0, \LTLglobally x = y, \LTLglobally x' \geq x \}$, $\formF_\ga = \{ y = 0 \to \LTLeventually x = -5\}$, and $\Imp_\ga = \emptyset$.
    \propagateG\ adds from $\formE_\ga$ the formulas $\{ \LTLglobally (\top \to x = y), \LTLglobally (\top \to x' \geq x) \}$ to $\Imp_\ga$. 
    We try to apply \geninv\ for $\alpha := \lnot (x = -5)$ and $\gamma := (x = 0)$.
    As $x' \geq x$ always holds, \geninv\ is applicable with the strengthening $\theta := x \geq 0$.
    Hence, we add $\LTLglobally (x = 0 \to \LTLglobally \lnot (x = -5))$ to $\Imp_\ga$ and by 
    \chainimpG\ also $\LTLglobally (\top \to \lnot (x = -5))$.
    This allows \substfalse\ to replace $x = -5$ in $F_\ga$ by $\bot$ such that $\formF_\ga := \{ y \neq 0 \}$.
    Now $\Curr_\ga = y \neq 0 \land x = 0$ and $\DedInv_\ga$ contains $x = y$.
    Hence, by \substunsat\ we get the state $\langle \emptyset, \emptyset, \{ \bot \}, \{ \bot \}, \emptyset, \Imp_\ga \rangle$.
\end{example}

\smallskip

\noindent
\textbf{Rule \geninvp.}
The rule \geninv\ requires the candidate invariant $\alpha$. 
This is useful for identifying liveness sub-formulas present in $q$ that cannot be satisfied. 
However, we would also like to be able to derive the \emph{most precise} invariant, which might be useful to simplify successors of the state $q$.
This is the goal of \geninvp\ in \Cref{fig:ded-rules} which computes $\alpha$ as a least fixpoint.
This fixpoint is the smallest set of assignments to the program variables $\progvars$ consisting of the assignments where $\gamma$ holds (the first disjunct in the fixpoint), and any possible successor assignments restricted by $\DedInv_D(q)$ (the second disjunct). 
Note that in the fixpoint we reformulate  $\DedInv_D(q)$ as a ``predecessor-assignment current-assignment relation'' via variable renaming.
Intuitively, $\alpha$ characterizes the set of assignments that are reachable when the implied invariants hold.
A suitable application of \geninvp\  is with $\gamma :=\Curr_D(q)$, which subsequently allows us to conclude that $\LTLglobally \alpha$ must be satisfied and add it to $\Imp_D$ using the rule \chainimpG\ for expanding $\Imp_D$.

\geninvp\ might generate  new predicates that are not in the set $\preds$ yet. 
We add those to $\generatedPreds$ but need to take care to keep $\generatedPreds$ finite.
Hence, we limit how many times we apply \geninvp\ to the same $\formE_D$.  Since the number of possible $\formE_D$ sets is finite and independent of $\generatedPreds$,  this ensures that the set $\generatedPreds$ remains finite.

\begin{example}\label{ex:geninvp}
    Consider a state $\langle \emptyset, \emptyset, \formF_\ga, \formE_\ga, \emptyset, \Imp_\ga \rangle$ where $\formE_\ga = \{ x = 1 \}$, $\formF_\ga = \{ \LTLnext y = 1, \LTLnext \LTLglobally y = x' \}$, and $\Imp_\ga = \{ \LTLglobally (\top \to x' > x), \LTLglobally (y = 1 \to \LTLeventually y < 0)  \}$.
    Here \geninvp\ can derive for $\gamma := x = 1$, the most precise invariant $\alpha := x \geq 1$, such that we add $\LTLglobally (x = 1 \to \LTLglobally x \geq 1)$ to $\Imp_\ga$.  By \chainimpG\ we also add $\LTLglobally (\top \to x \geq 1)$ to $\Imp_\ga$. 
    After an expansion step and moving formulas from $\formF_\ga$ to $\formE_\ga$ we get a successor
    $\langle \emptyset, \emptyset, \emptyset, \formE_\ga', \emptyset, \Imp_\ga' \rangle$ with
    $\formE_\ga' = \{ y = 1, \LTLglobally y = x' \}$ and
    $\Imp_\ga' = \{ \LTLglobally (y = 1 \to \LTLeventually y < 0), \LTLglobally (\top \to x \geq 1),\LTLglobally (\top \to y = x'), \LTLglobally (\top \to x' > x), \dots \}$.
    This state is simplified by \substunsatF\ to $\langle \emptyset, \emptyset, \{ \bot \}, \{ \bot \}, \emptyset, \Imp_\ga \rangle$.
    This reasoning is not possible with our remaining rules without \geninvp\ .
\end{example}

\smallskip

\noindent
\textbf{Rule \genreach } generates implied reachability properties.
Those are useful not only for detecting invariants that must be violated, but also for ``discharging'' $\LTLeventually \beta$ sub-formulas that are implied by other components.
The rule requires us to provide a formula $\beta$ and then computes as a least fixpoint a formula $\gamma$ representing all the valuations from which all sequences that satisfy the invariants in $\DedInv_D(q)$ eventually satisfy $\beta$. 
This justifies the addition of  $\LTLglobally(\gamma \rightarrow \LTLeventually\beta)$ to $\Imp_D$.
A suitable application  is with formulas $\beta$ such that $\LTLglobally\neg \beta$ or $\LTLeventually \beta$ appears in $\formF_D$, which enables subsequently the detection of a contradiction or that the formula is satisfied, respectively.  

\begin{example}\label{ex:genreach}
    Consider a state $\langle \emptyset, \emptyset, \formF_\ga, \formE_\ga, \emptyset, \Imp_\ga \rangle$ where
    $\formE_\ga = \{ x = 0, \LTLglobally x ' > x\}$,
    $\formF_\ga = \{ \LTLeventually x > 1000 \}$, and
    $\Imp_\ga = \{ \LTLglobally (\top \to x ' > x)\}$.
    For $\beta := x > 1000$ and $\gamma := x = 0$, \genreach\ adds $\LTLglobally (x = 0 \to \LTLeventually x > 1000)$ to $\Imp_\ga$.
    Hence, \simplifynn\ replaces $\LTLeventually x > 1000$ in $\formF_\ga$ to $\top$
    which results in the state $\langle \emptyset, \emptyset, \emptyset, \{ x = 0, \LTLglobally x ' > x \}, \emptyset, \{  \LTLglobally (\top \to x ' > x), \LTLglobally (x = 0 \to \LTLeventually x > 1000) \} \rangle$.
\end{example}

\subsubsection{Application of State Transformation Rules}

In the following, we discuss how and in which order we apply the transformations in $\ruleset$,  in order to compute the function $\applyRules$.

To establish the correctness of function $\applyRules$,  we prove the following statement.
\begin{restatable}{theorem}
{restateRuleSoundness}\label{thm:rule-soundness}
Each rule in $\ruleset$ defines a sound monitor-state transformation.
\end{restatable}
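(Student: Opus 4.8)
The plan is to prove \Cref{thm:rule-soundness} by treating each rule individually and verifying that the transformation it defines satisfies the three conditions (a), (b), (c) of \Cref{def:sound-transform}. The overall strategy is to group the rules exactly as they are grouped in the text---formula rewriting rules, unsatisfiability-detection rules, sub-formula substitution rules, propagation rules, combination rules, and invariant/reachability generation rules---and to handle each group with a common argument, since rules within a group share the same structural effect on the state components. For the entire bookkeeping it is useful to observe up front that conditions (b) and (c) are \emph{trivially} satisfied by any rule that leaves $\formE_\as,\formE_\ga,\Imp_\as,\Imp_\ga$ unchanged (so that the implications in (b) and (c) become $\toform{\formE_D^1 \cup \Imp_\as^1} \rightarrow \toform{\Imp_\as^1}$ and $\toform{\formE_D^1} \rightarrow \toform{\formE_D^1}$, which hold reflexively), and that condition (a) is trivially satisfied by any rule that leaves $\formE_D, \Imp_D$ unchanged and rewrites $\formF_D$ only into a propositionally equivalent form. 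This immediately dispatches the formula-rewriting rules (including \nextextend, using the LTL equivalence $\alpha' \equiv \alpha' \wedge \LTLnext\alpha$ under the semantics of \Cref{def:rpltl-semantics}) and, with a small additional entailment check, the sub-formula substitution rules \substtrue, \substfalse, \simplifyimp, \simplifyand, \simplifynn.

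For the sub-formula substitution rules the key point is that each substitution is \emph{justified} by an element of, or an entailment from, $\DedInv_D(q)$ and $\Curr_D(q)$, which by the implied-state correctness property~(\ref{eq:imp-correctness}) is a genuine semantic consequence of the state. First I would verify condition~(a) for these: since they modify only $\formF_D$, it suffices to show that $\toform{\formF_D \cup \formE_D \cup \Imp_D}$ is unchanged up to equivalence. For \substtrue\ and \substfalse, the premise $\DedInv_D(q) \FOLentailsT{T} \gamma$ (resp. $\neg\gamma$) means that under $\toform{\Imp_D}$, replacing $\gamma$ by $\true$ (resp. $\false$) inside $\formF_D$ preserves the conjunction; here I must be careful that $\gamma$ appears as a genuine sub-formula and that the replacement respects the propositional structure, which is the standard substitution-of-equivalents argument. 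For \simplifynn\ the subtlety is that $\Curr_D(q)$ holds only at the current step, so the replacement must be restricted to non-nested occurrences---this is exactly why $\mapstonn$ is used, and the correctness hinges on the semantic fact that only the present-time satisfaction of $\varphi$ is affected by the present-time truth of $\gamma$.

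For the unsatisfiability-detection rules \substunsat\ and \substunsatF\ the effect sets $\formF_D' = \formE_D' = \{\false\}$, so condition~(a) requires showing that $\toform{\formF_D \cup \formE_D \cup \Imp_D}$ was already equivalent to $\false$ given the premise; this follows because the premise certifies a theory-level contradiction between $\Curr_D(q)$ (or an eventually-required $\beta$) and the implied invariants $\DedInv_D(q)$, and these are all conjuncts (or consequences) of $\toform{\formF_D \cup \formE_D \cup \Imp_D}$. Condition~(c) needs attention here since $\formE_D$ is changed to $\{\false\}$: I must check $\toform{\{\false\}} \rightarrow \toform{\formE_D^1}$, which holds vacuously. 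For the $\Imp_D$-deriving rules (propagation, combination, and especially \geninv, \geninvp, \genreach), conditions~(a) and (c) hold because $\formE_D, \formF_D$ are untouched, and the real content is condition~(b): I must show the newly added implied formula is a semantic consequence of $\formE_\as \cup \Imp_\as$ (for $D=\as$) or $\formE_\as \cup \formE_\ga \cup \Imp_\ga \cup \Imp_\as$ (for $D = \ga$). For the propagation and combination rules this is routine temporal/propositional reasoning; \textbf{the main obstacle is the soundness of the fixpoint-based generation rules \geninv, \geninvp, and \genreach}. There I expect to need an inductive (for the invariant rules, an induction over time steps using the $\theta \land \DedInv_D(q) \FOLentailsT{T} \theta[\progvars \mapsto \progvars']$ inductiveness premise and \Cref{lem:expansion}) or a fixpoint-transfer argument (for \genreach, connecting the least-fixpoint formula $\gamma$ to the existence of a finite time horizon at which $\beta$ holds along every trace respecting $\DedInv_D(q)$). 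Making precise that the semantic trace-based reading of $\LTLglobally(\gamma \rightarrow \LTLglobally\alpha)$ and $\LTLglobally(\gamma \rightarrow \LTLeventually\beta)$ matches the set-theoretic fixpoints---including the correct handling of the primed-variable renaming that reencodes $\DedInv_D(q)$ as a transition relation---will be the technically delicate step, and I would isolate it as a self-contained lemma relating the monitor's invariant-satisfying suffixes to the reachable/forward-closed sets computed by $\mu R.(\cdots)$ and $\mu B.(\cdots)$.
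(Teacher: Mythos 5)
Your proposal is correct and follows essentially the same route as the paper's proof: a rule-by-rule verification grouped by structural effect, with the up-front observation that conditions (b) and (c) of \Cref{def:sound-transform} are immediate whenever the $\formE_D$ and $\Imp_D$ components are untouched (the paper isolates this as a small auxiliary lemma), substitution-of-equivalents arguments for the \substtrue{}/\substfalse{}/\simplifyimp{}/\simplifyand{}/\simplifynn{} rules (the paper states these as two substitution lemmas for $\mapsto$ and $\mapstonn$ under a $\LTLglobally$-equivalence), and an induction over trace positions respectively a fixpoint-transfer argument for \geninv{}, \geninvp{}, and \genreach{}, which you correctly identify as the only technically delicate step. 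One minor remark: your appeal to the implied-state correctness property~(\ref{eq:imp-correctness}) when justifying the substitution rules is unnecessary (and would risk circularity, since that global property is what the soundness of the rules is ultimately meant to establish); the local condition~(a) already includes $\Imp_D$ in the conjunction, so the entailment from $\DedInv_D(q)$ suffices on its own, which is how the paper argues.
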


\paragraph{Partitioning into $\formF_D$ and $\formE_D$}

We first discuss what parts of the assumptions and the guarantees we put in the sets $\formF_D$ and $\formE_D$ for $D = \as$ and $D = \ga$, respectively.
Since we use $\formE_D$ to derive the elements of $\Imp_D$, 
and use the elements of $\Imp_D$ to perform substitutions in $\formF_D$, we are only allowed to move formulas from $\formF_D$ to $\formE_D$,  and \emph{not vice versa}, in order to avoid circular reasoning.

As the rules  utilize from  $\formE_D$ only the current obligations in $\Curr_D(q)$ and formulas with top-level $\LTLglobally$ and $\LTLweakuntil$ operators with relatively simple arguments,  we place only such formulas in $\formE_D$.
Note that in that way formulas of the form $\LTLeventually \varphi$ remain in $\formF_D$, and hence,  can potentially be ``discharged''.

\paragraph{Rule Application Order}

We apply $\propagateassump$ and $\propagateG$ always when we modify $\formE_D$ and $\Imp_D$.
Furthermore, we apply formula rewriting to simplify formulas on the logical level whenever possible. 
As for the other rules, we apply the in the following order:
\begin{enumerate}
    \item[1.]
        We first apply $\substunsat$ and $\substunsatF$ to check early if the formula is contradictory.
    \item[2.]
        As expansion and rule $\propagateG$ might enlarge $\Imp_D$, we apply the chain rules and $\propagateW$ to saturate $\Imp_D$ with existing information to ease the next steps.
    \item[3.] 
        Next, we use the simplification rules $\substtrue$, $\substfalse$, $\simplifyimp$, $\simplifyand$ and $\simplifynn$ to simplify $F_D$ utilizing the saturated  set $\Imp_D$.
    \item[4.]
        After that, we apply the generation rules $\geninv$, $\genreach$, and $\geninvp$.
        We apply these rules last, as they are computationally costly, and thus we want to use them once the previous rules have possibly already simplified the state. 
    \item[5.] 
        As the previous step might have generated new information and added new derived properties to the sets $\Imp_D$ of implied formulas, we reiterate steps 1. -- 3. once, in order to take advantage of the new elements of $\Imp_D$ and use them e.g. to substitute further sub-formulas.
\end{enumerate}

\subsection{Discharging Implied Liveness Obligations}
\label{sec:liveness}
Note that the function $\applyRules$ can potentially discharge non-nested formulas of the form $\LTLeventually \beta$ (that is, substitute them with $\true$) by combining \genreach\ and \simplifynn, if they are implied by other parts of the specification.
This allows us to sometimes simplify specifications that contain $\LTLeventually$ to safety ones.
However, this does not work for the common $\LTLglobally\LTLeventually\beta$-pattern as it requires checking the global property that  $\LTLeventually\beta$ can be discharged again and again.

Therefore, after computing the monitor's transition function and reachable states, and before defining $\verdict$, we apply the following post-processing that reasons on a global level.
We focus on $\LTLglobally\LTLeventually\beta$ formulas for this global analysis, as those are fairly common.
To discharge some of those non-nested $\LTLglobally\LTLeventually \beta$ obligations, we proceed as follows.
Let
$Q_\mathit{Triv} := \{q \in Q \mid \stateformula(q) \in \{ \top, \bot \} \}$.

For every formula $\LTLglobally\LTLfinally \beta$ with $\beta\in\QF{\specvars}$ that appears non-nested in the set ${\formF_\ga}_0$
for some state $q_0 = \langle {\formF_\as}_0,{\formE_\as}_0,{\formF_\ga}_0,{\formE_\ga}_0,{\Imp_\as}_0,{\Imp_\ga}_0\rangle$ we do the following:
\begin{enumerate}
    \item[1.] We compute $Q_{\tiny\LTLeventually \beta} := \{q \in Q \mid \exists \LTLglobally(\gamma\rightarrow\LTLeventually\beta) \in \Imp_\ga(q).~\Curr_\ga(q) \wedge \DedInv_\ga(q)  \FOLentailsT{T} \gamma \}$, i.e.\ the states from which  $\LTLeventually \beta$ as obligation is already implied by $E_\ga$ and $\Imp_\ga$. 
    \item[2.] We compute $A_{\tiny\LTLeventually \beta} := \{q \in Q \mid \text{ every path from } q \text{ reaches } Q_{\tiny\LTLeventually \beta} \cup Q_\mathit{Triv} \}$,  i.e., the states from which $\LTLeventually \beta$ is always implied or the obligation becomes trivial.
    \item[3.] We compute $A_{\tiny\LTLglobally\LTLeventually \beta} := \{q \in Q \mid \text{ every path from } q \text{ contains only states from } A_{\tiny\LTLeventually \beta}\}$.
\end{enumerate}    
$A_{\tiny\LTLglobally\LTLeventually \beta}$ contains only states where we know that $\LTLeventually \beta$ always holds as long as the other requirements do.
Hence, we replace every non-nested occurrence of $\LTLglobally\LTLfinally \beta$ in ${\formF_\ga}_0$ of each $q_0 \in A_{\tiny\LTLglobally\LTLeventually \beta}$ by $\true$.

The construction of $A_{\tiny\LTLglobally\LTLeventually \beta}$ guarantees that if before the above transformation the monitor satisfied the monitor-state correctness property (\ref{eq:monitor-correctness}), then after the transformation it does so as well.

\subsection{Construction of the Verdict-Labelling Function}\label{sec:verdict-labelling}
If a state $q$ of the monitor is such that $\stateformula(q) = \false$,  no words that have a prefix reaching $q$ can satisfy $\Phi$.
Hence, the monitor can assign verdict $\UNSAT$ to state $q$. 
For states where $\stateformula(q)$ is not $\false$, we cannot always assign a verdict. 
In general, it is not always possible to determine satisfaction of a formula based on a finite prefix.
However,  if $\stateformula(q)$ represents a safety language,  the monitor can assign a $\SAFETY$ verdict, since expansion is sufficient for tracking safety requirements.
For our verdict-labelling function, we consider a syntactic criterion for safety temporal formulas.

Let
$Q_\false := \{q \in Q \mid \stateformula(q) = \bot\}$ and 
$Q_\mathit{safe}$ be the set of all states $q_s$ such that no $q_l \in Q$ where $\stateformula(q_l)$ is not a syntactic-safety formula is reachable.
We define $\verdict$ as follows.
\[
\verdict (q) : = \begin{cases}
\UNSAT & \text{if } q \in Q_\false,\\
\SAFETY & \text{if } q \in Q_\mathit{safe} \\
\OPEN & \text{otherwise}.\\
\end{cases}
\]

$\verdict$ is well-defined as for $q \in Q_\false$, $\delta(q,a) \in Q_\false$ for any $a \in \power{\preds}$.
Furthermore, any $q \in Q_\mathit{safe}$ is by definition syntactic-safety.
As expansion preserves syntactic-safety and the rules do not add any non-safety formulas to $E_D$ or $F_D$, any successors of $q$ is also either syntactic-safety or in $Q_\false$.

\begin{example}
In~\Cref{ex:geninv} and~\Cref{ex:genreach} the states get an \UNSAT\ and \SAFETY\ verdict, respectively, after applying the rules.
Without the rules they would both get an \OPEN\ verdict.
\end{example}

\begin{restatable}{theorem}
{restateMonitorCorrectness}\label{thm:monitor-correctness}
    Let $M$ be a monitor constructed from $\Phi \in \templogic(\specvars)$ as described in \Cref{sec:monitor}.
    Then,  $M$ satisfies the conditions in \Cref{def:monitor-formula} and is hence a monitor for $\Phi$.
\end{restatable}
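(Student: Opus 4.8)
The plan is to first establish, for every reachable monitor state $q$, the monitor-state correctness property \eqref{eq:monitor-correctness} together with the implied-state correctness property \eqref{eq:imp-correctness}, and then to read the two conditions of \Cref{def:monitor-formula} off \eqref{eq:monitor-correctness}. I would prove \eqref{eq:monitor-correctness} and \eqref{eq:imp-correctness} \emph{simultaneously} by induction on the length of a prefix $\pi\cdot\assignment$ with $\delta^*_M(\pi\cdot\assignment)=q$. The base case $|\pi\cdot\assignment|=1$ gives $q=\qinit$, where $\stateformula(\qinit)=\Phi$ and $\Imp_\as=\Imp_\ga=\emptyset$, so both invariants hold trivially. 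For the inductive step I would split the transition $\delta(q',a)=\applyRules(\nextState(q',a))$ into its two stages. For $\nextState$, \Cref{lem:expansion} gives $\assignment_2\cdot\rho\models\toform{\expandSet(F)}$ iff $\assignment_1\cdot\assignment_2\cdot\rho\models\toform{F}$ for each component $F$, while the conjuncts $\propagate(a)$ added to $\formE_\ga$ are satisfied by $\assignment_2\cdot\rho$ (by the validity condition defining $\propagate$) and hence do not change satisfaction of $\stateformula$ on the word at hand; this yields \eqref{eq:monitor-correctness} for $\nextState(q',a)$. Since every element of $\Imp_D$ has the form $\LTLglobally\psi$, and $\LTLglobally\psi$ at $\assignment_1\cdot\assignment_2\cdot\rho$ entails $\LTLglobally\psi$ at the suffix $\assignment_2\cdot\rho$, the implications of \eqref{eq:imp-correctness} carried from the hypothesis survive the one-step shift, so $\nextState$ preserves \eqref{eq:imp-correctness} as well.

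For $\applyRules$, a finite composition of the transformation rules of \Cref{sec:rules}, I would use \Cref{thm:rule-soundness} --- each rule is a sound monitor-state transformation per \Cref{def:sound-transform} --- and argue along the sequence of rule applications that each single sound transformation preserves both invariants \emph{for the fixed word} $\assignment_2\cdot\rho$. Preservation of \eqref{eq:imp-correctness} follows by chaining Condition~(c) (to pull satisfaction of $\formE_D^{2}$ back to $\formE_D^{1}$), the induction hypothesis, and Condition~(b) (whose two implications push the implied sets forward). Preservation of \eqref{eq:monitor-correctness} then uses that, given \eqref{eq:imp-correctness}, satisfaction of $\stateformula(q)=\toform{\formF_\as\cup\formE_\as}\rightarrow\toform{\formF_\ga\cup\formE_\ga}$ on a word reaching $q$ is equivalent to satisfaction of the augmented implication $\toform{\formF_\as\cup\formE_\as\cup\Imp_\as}\rightarrow\toform{\formF_\ga\cup\formE_\ga\cup\Imp_\ga}$, the implied conjuncts being free to add on both sides; Condition~(a) states exactly that this augmented implication is invariant under the transformation. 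Finally, the global post-processing of \Cref{sec:liveness} only substitutes $\true$ for non-nested $\LTLglobally\LTLfinally\beta$ obligations in the designated states, which by its construction preserves \eqref{eq:monitor-correctness}, and since it changes neither $\formE_D$ nor $\Imp_D$ it leaves \eqref{eq:imp-correctness} intact.

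With \eqref{eq:monitor-correctness} in hand, the two conditions of \Cref{def:monitor-formula} follow. If $\verdict(q)=\UNSAT$ then $q\in Q_\false$, so $\stateformula(q)=\bot$ and \eqref{eq:monitor-correctness} gives $\pi\cdot\assignment\cdot\rho\not\models\Phi$ for all $\rho$, which is Condition~(1). If $\verdict(q)=\SAFETY$ then $q\in Q_\mathit{safe}$, so $\stateformula(q)$ and the state formula of every state reachable from $q$ are syntactic safety formulas. Here I would invoke that expansion is a \emph{complete} residual operation for syntactic safety formulas: such a formula is violated exactly when some finite bad prefix is read, and reading a minimal bad prefix drives the residual (up to the canonical-representative convention on states) to $\bot$, i.e.\ into $Q_\false$ with verdict \UNSAT. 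Contrapositively, if $\verdict(\delta^*_M(\pi\cdot\assignment\cdot\rho[0,i]))\neq\UNSAT$ for all $i$, then no bad prefix of $\stateformula(q)$ occurs, so $\assignment\cdot\rho\models\stateformula(q)$, and \eqref{eq:monitor-correctness} yields $\pi\cdot\assignment\cdot\rho\in\lang{\Phi}$, establishing Condition~(2). The structural constraints on $\verdict$ required by \Cref{def:monitor} (absorption of \UNSAT\ and \SAFETY\ under $\delta$) are already guaranteed by the verdict construction of \Cref{sec:verdict-labelling}.

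I expect the main obstacle to be the $\applyRules$ stage: keeping \eqref{eq:monitor-correctness} and \eqref{eq:imp-correctness} synchronized through each sound transformation, and in particular the disciplined use of \eqref{eq:imp-correctness} to pass between $\stateformula(q)$ and its $\Imp$-augmented form so that Condition~(a) of \Cref{def:sound-transform} can be applied. A secondary, more conceptual difficulty lies in the \SAFETY\ case, namely justifying that an unsatisfiable residual produced by a bad prefix is actually reduced to the syntactic $\bot$ detected by $\verdict$; this rests on the propagation of atomic failures through the Boolean structure of $\expand$ and on treating state formulas up to equivalence.
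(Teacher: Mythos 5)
Your proposal is correct and follows essentially the same route as the paper's proof: an induction on prefix length establishing the monitor-state and implied-state correctness properties simultaneously, splitting each transition into the $\nextState$ stage (handled via the expansion lemma and the validity condition defining $\propagate$) and the $\applyRules$ stage (handled via the soundness conditions of \Cref{def:sound-transform}), followed by the liveness post-processing and the same two verdict arguments, including the observation that non-$\bot$ syntactic-safety residuals along the run exclude bad prefixes. The only difference is presentational: the paper packages the two stages as separate lemmas and proves a slightly stronger (word-independent) validity form of the implied-state property, while you carry both invariants per fixed word, which suffices.
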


\section{Experimental Evaluation}\label{sec:experiments}

\subsection{The Logic \tslmt\ and Reactive Program Games}
In our implementation, we use Temporal Stream Logic modulo Theories (TSL-MT)~\cite{FinkbeinerKPS19,FinkbeinerHP22} and reactive program games (RPGs) ~\cite{HeimD24}  instead of \templogic\ and the symbolic games, respectively. 
Similar to \templogic, TSL-MT allows us to specify temporal properties over inputs $\inputs$ and program variables $\progvars$ (often called cells,  and denoted by $\mathbb{C}$).
The only difference is that instead of generic predicates over the next-step program variables $\progvars'$, TSL-MT is restricted to so called \emph{updates}. 
Intuitively, updates behave like program variable assignments.
Denoted as $\upd{x}{t}$, an update assigns the value of the term $t$ (which might include program variables and inputs) to $x$ for the next state.
Different updates of the same variable are mutually exclusive at the same time.
Analogously,  RPGs allow the system player to only select updates.
This restriction makes extracting programs easier, as the system is restricted to a finite number of known assignments.
TSL-MT and RPGs can be encoded in the more general framework of \templogic\ and symbolic games such that our results apply. 
Furthermore, we believe that the game-solving acceleration technique from~\cite{HeimD24} could be lifted to our symbolic games.
Here we consider TSL-MT and RPGs, as for those tools and benchmarks exist.

\subsection{Prototype Tool Implementation}

We implemented our approach in our prototype tool \toolname\ \footnote{Available at \url{https://doi.org/10.5281/zenodo.13939202}.}.
The generation of the monitor is done on-the-fly while constructing the product.
This allows us to only consider predicate selections actually appearing in the game and to only explore states that matter for the game.
For $\propagationPreds$ we use the predicates that appear in the specification, the negations, and equalities for updates with constants.
We disable the \geninvp\ rule by default, as computing exact fixpoints is time-consuming, but often not needed.
For the automata constructions we use Spot~\cite{Duret-LutzRCRAS22}.
As SMT and CHC solver we use \texttt{z3}~\cite{Z3}.
For the rules \genreach\ and \geninvp\ we use the $\mu$CLP solver \texttt{MuVal}~\cite{UnnoTGK23} and the Optimal CHC solver \texttt{OptPCSat}~\cite{GuTU23}, respectively.
For solving the generated reactive program game we use \texttt{rpgsolve}~\cite{HeimD24}.

\subsection{Benchmarks}

In the following, we discuss the content, purpose, and results of our benchmark sets\footnote{All benchmarks are available at \url{https://doi.org/10.5281/zenodo.13939202}.}.
Note that we restrict our benchmarks to integers, since reals cannot be handled by some of the tools and solvers we compare to, although \toolname\ also supports reals.
We begin with the description of the benchmarks we used in the evaluation,  grouped into four categories listed below.

\paragraph{Benchmarks from the Literature}
We evaluated \toolname\ on the existing TSL-MT benchmarks from~\cite{NeiderT16} and~\cite{MaderbacherB22}.
We do not include benchmarks from~\cite{ChoiFPS22} as some contain uninterpreted functions and others are trivially realizable with a system violating the environment assumptions.

\paragraph{Basic Properties}
The second set contains benchmarks with simple reachability properties and invariants implied by or contradicted by some of the other invariants.
They do not need complicated strategic decisions and can be handled mainly on the language level.

\paragraph{Scenarios}
We created a set of more intriguing scenario benchmarks.
This includes cyber-physical-system controllers and robot mission planning tasks.
They are larger than the previous benchmarks and need more complex reasoning by the game solver.
They mostly require strategic decisions at some crucial points and are fairly deterministic in-between.
This is common for specifications of more complex systems.

\paragraph{Limitations}
With the last set of benchmarks we analyze the limitations of our monitor-based method.
These are benchmarks were a language-level analysis cannot effectively prune the game, the \geninvp\ rule is actually needed, or fixpoints in the different rules are too difficult to compute.

\subsection{Results and Analysis}

\begin{table}[t!]
\caption{%
Evaluation Results. 
$|\progvars|$, $|\inputs|$ are the number of respective variables.
R shows if the benchmark is expected to be realizable.
We show the wall-clock running time in seconds of \toolname\ with \textbf{mo}nitor and \textbf{wi}thout monitor, and of \textbf{Ra}boniel and \textbf{Te}MoS.
TO means timeout after 20 minutes, MO means out of memory (8GB), - means the tool is not applicable, and ER means error (by actual error or incorrect result).
The evaluation was performed on a Intel i7 (11thGen) processor.
}\label{tab:all}
\begin{minipage}{.48\linewidth}
\centering
\scalebox{0.65}{
\begin{tabular}[t]{l|ccc|rr|rr}
    Name & $|\progvars|$ & $|\inputs|$ & R & mo & wi & Ra & Te \\
\hline
    unsat   (\Cref{ex:unsat-motivating})    &  2 &  1 & n & \vbes{132} &  TO &  ER &   - \\  
    vacuous (\Cref{ex:vacuous-motivating})  &  2 &  1 & y &  \vbes{56} &  TO &  ER &  ER \\  
    discharge-GF (\Cref{ex:GF-motivating})  &  2 &  1 & y &  \vbes{15} &  TO &  TO &  ER \\
\hline
    Box Limited~\cite{NeiderT16}            &  2 &  2 & y &   6 &   \best{1} &   \best{1} &  MO \\
    Box                                     &  2 &  2 & y &  34 &   3 &   \best{1} &  TO \\ 
    Diagonal                                &  2 &  1 & y &  30 &   \best{1} &   5 &  MO \\
    Evasion                                 &  4 &  2 & y &  76 &   3 &   \best{2} &  TO \\
    Follow                                  &  4 &  2 & y &  TO &  \vbes{16} &  TO &  TO \\
    Solitrary                               &  2 &  0 & y &  12 &   \best{1} &  \best{1}        &  ER \\
    Square-5x5                              &  2 &  2 & y & 143 &   \best{9} &  43 &  TO \\
    Elevator Simple 3 \cite{MaderbacherB22} &  1 &  0 & y &  19 &   2 &   \best{1} &  MO \\
    Elevator Simple 4                       &  1 &  0 & y &  34 &   3 &   \best{1} &  MO \\
    Elevator Simple 5                       &  1 &  0 & y &  55 &   \best{4} &   \best{4}        &  TO \\
    Elevator Simple 8                       &  1 &  0 & y & 161 &   \best{6} &  23 &  TO \\
    Elevator Simple 10                      &  1 &  0 & y & 271 &  \best{10} &  98 &  TO \\
    Elevator Signal 3                       &  2 &  1 & y &  MO &  MO &  \vbes{17} &  MO \\
    Elevator Signal 4                       &  2 &  1 & y &  MO &  MO & \vbes{111} &  MO \\
    Elevator Signal 5                       &  2 &  1 & y &  MO &  MO & \vbes{735} &  MO \\
\hline
    G-real                                  &  3 &  1 & y & 308 &  TO &   \best{3} &  TO \\ 
    G-unreal-1                              &  2 &  1 & n &  \best{31} & 786 &  TO &   - \\ 
    G-unreal-2                              &  2 &  1 & n &  \vbes{71} &  TO &  ER &   - \\ 
    G-unreal-3                              &  1 &  0 & n &  \vbes{42} &  TO &  ER &   - \\ 
    F-real                                  &  3 &  1 & y &  \vbes{64} &  TO &  ER &  ER \\ 
    F-unreal                                &  2 &  1 & n & \vbes{102} &  TO &  TO &   - \\ 
    F-G-contradiction-1                     &  1 &  0 & n &  \vbes{32} &  TO &  ER &   - \\ 
    F-G-contradiction-2                     &  2 &  1 & n & 135 &  TO &   \best{1} &   - \\  
\hline
\end{tabular}}  
\end{minipage}%
\begin{minipage}{.48\linewidth}
\centering
\scalebox{0.65}{
\begin{tabular}[t]{l|ccc|rr|rr}
    Name & $|\progvars|$ & $|\inputs|$ & R & mo~ & wi~ & Ra & Te \\
\hline
    GF-real                                 &  1 &  1 & y &   \vbes{2} &  TO &  ER &  ER \\ 
    GF-unreal                               &  1 &  0 & n &   \vbes{3} &  TO &  TO &   - \\ 
    GF-G-contradiction                      &  1 &  0 & n &   \vbes{5} &  TO &  ER &   - \\ 

\hline
    thermostat-F                            &  2 &  1 & y &  \vbes{70} &  TO &  TO &  MO \\
    thermostat-F-unreal                     &  2 &  1 & n & \vbes{130} &  TO &  TO &   - \\
    thermostat-GF                           &  2 &  1 & y & \vbes{237} &  TO &  TO &  MO \\
    thermostat-GF-unreal                    &  2 &  1 & n &  \vbes{85} &  TO &  TO &   - \\
    ordered-visits                          &  2 &  1 & y &         TO &  TO &  TO &  TO \\
    patrolling-alarm                        &  2 &  2 & y & \vbes{104} &  TO &  TO &  MO \\
    patrolling                              &  2 &  0 & y & \vbes{288} &  TO &  ER &  TO \\
    robot-to-target-charging                &  3 &  1 & y & \vbes{257} &  TO &  TO &  TO \\
    robot-to-target-charging-unreal         &  3 &  1 & n &  \vbes{27} &  TO &  TO &   - \\
    robot-to-target                         &  3 &  1 & y & \vbes{412} &  TO &  TO &  MO \\
    robot-to-target-unreal                  &  3 &  1 & n & \vbes{341} &  TO &  TO &   - \\
    unordered-visits                        &  2 &  0 & y & \vbes{265} &  TO &  ER &  TO \\
    helipad                                 &  3 &  6 & y &  \vbes{89} &  MO &  TO &  MO \\
    package-delivery                        &  5 &  2 & y &  \vbes{77} &  MO &  TO &  MO \\
    tasks                                   &  3 &  0 & y & \vbes{791} &  TO &  ER &  MO \\
    tasks-unreal                            &  3 &  0 & n & \vbes{223} &  TO &  ER &   - \\
\hline
    buffer-storage                          &  3 &  1 & y &  TO &  TO &  \vbes{45} &  MO \\
    helipad-contradict                      &  3 &  6 & n & 158 &  \best{13} &  TO &   - \\
    ordered-visits-choice                   &  2 &  0 & y &  TO &  TO &  ER &  TO \\
    precise-reachability                    &  2 &  0 & y &  TO &  TO &  ER &  ER \\
    storage-GF-64                           &  2 &  0 & y &  TO &  TO &  ER &  MO \\
    unordered-visits-charging               &  3 &  0 & y &  TO &  TO &  TO &  TO \\
                                            &    &    &   &     &     &     &     \\
\hline

\hline
\end{tabular}}  
\end{minipage}
\end{table}

We compare our approach of pruning the symbolic game with a monitor to the naive generation and direct solving workflow (which we also implement in \toolname).
In addition, we compare to the TSL-MT synthesis tools \texttt{Raboniel}\cite{MaderbacherB22} and \texttt{TeMoS}\cite{ChoiFPS22} which are both based on abstraction refinement and LTL synthesis using \texttt{Strix}\cite{strix}.
\Cref{tab:all} shows the results of our evaluation, starting with the examples from \Cref{sec:examples}, followed by  the four categories of benchmarks described above.

\paragraph{Benchmarks from the Literature}
The benchmarks from the literature can already be handled by existing tools.
We can see that constructing the product with the monitor creates an expectable overhead during the computation.
However, solving them is still possible with the monitor.

\paragraph{Basic Properties}
The monitor can easily dismiss the basic properties described in this benchmark set.
However, the other approaches usually cannot handle those properties, and if they can, this is because some very local reasoning is possible.
This shows that our monitors are indeed a useful enhancement that can lead to improved performance.

\paragraph{Scenarios}
As shown by the empirical results, the monitor product outperforms the other approaches on our scenario benchmarks.
The reasons is that the rules greatly simplify the reasoning in the periods in-between the decision making points in those scenarios and make some bad decisions obvious.
This shows that the monitor product indeed has its merits for more complex specifications.

\paragraph{Limitations}
By design of these benchmarks, our approach does not perform well on them.
However, our experiments show that these benchmarks are challenging and cannot be handled by other techniques either.

\section{Related Work}\label{sec:related}
\paragraph{Synthesis of Infinite-State Reactive Systems}
The work on synthesis of reactive systems from temporal logic specifications extended with richer data domains has focused on the logics TSL~\cite{FinkbeinerKPS19} and \tslmt~\cite{FinkbeinerHP22},  and the logic LTL$_\mathcal T$~\cite{RodriguezS23}.
The synthesis techniques in~\cite{FinkbeinerKPS19,ChoiFPS22,MaderbacherB22} are based on propositional abstraction and iterative refinement of the specification by introducing assumptions. 
The logic LTL$_\mathcal T$ considered in~\cite{RodriguezS23} restricts the atomic propositions to be literals over \emph{current-time-step} variables only. This enables the method in~\cite{RodriguezS23} to encode the theory specification into an equirealizable Boolean LTL formula. Thus,  LTL$_\mathcal T$ realizability is decidable under decidability assumptions for the underlying first-order theory. In contrast,  $\templogic$ allows the specification of relationships between current and next-state variables,  and the realizability and synthesis problems are undecidable.
All of the above approaches rely on a procedure for finite-state synthesis to solve the resulting LTL synthesis task. Our method,  on the other hand,  treats the $\templogic$ specification directly by constructing an infinite-state two-player game encoding the synthesis task.

Another line of work on synthesis of reactive systems with unbounded data domains focuses on solving infinite-state games that directly encode the synthesis problem.
Abstraction-based methods~\cite{WalkerR14,VechevYY13, GrumbergLLS07,HenzingerJM03, FinkbeinerMPSS22,AzzopardiPSS23} extend techniques such as abstract interpretation and counterexample-guided refinelment to two-player games. 
Symbolic game-solving techniques work directly with the infinite-state game. 
This class of techniques includes constraint-based approaches~\cite{FaellaP23, FarzanK18,KatisFGGBGW18} for special classes of wining conditions and symbolic fixpoint computation methods~\cite{SamuelDK21, SamuelDK23,HeimD24, SchmuckHDN24}. 

The approach we propose in this paper is agnostic to the game solver and can enhance different methods for solving symbolic games.

\paragraph{Automata Constructions and Satisfiability Checking for LTL}
The construction of alternating B\"uchi word automata (ABW) from LTL~\cite{Vardi94,Vardi95} is the first where the transition function is defined by following the LTL expansion laws.
The algorithm in~\cite{GerthPVW95} uses tableaux, based on the LTL expansion laws, to translate LTL to generalized nondeterministic B\"uchi automata (GNBA). Both ABW and GNBA can be translated to nondeterministic B\"uchi automata, which can subsequently be determinized~\cite{Safra88,Piterman07} to obtain a DPA. In this two-step construction of DPA from LTL, the semantic structure of the states of the automaton and their correspondence to the LTL formula is lost. This limitation has motivated a line of work~\cite{KretinskyE12,EsparzaKS20,  EsparzaKRS22}, which developed methods for direct translation of LTL to deterministic $\omega$-automata.
Similarly to the tableaux-based constructions,  this translation uses the LTL expansion laws. This enables semantics-based reductions, such as merging states representing equivalent formulas, and heuristics for synthesis~\cite{LuttenbergerMS20}. 
 
Tableaux techniques for checking LTL satisfiability were first studied by Wolper~\cite{Wolper85} and subsequently developed by~\cite{Schwendimann98,Reynolds16a, BertelloGMR16}.
Several different approaches to LTL satisfiability have been proposed.
Methods based on model checking~\cite{RozierV10} reduce the problem to the verification of the negated formula against a universal model.
Some recent techniques employ SAT solvers~\cite{LiZPVH13, LiP0VH14,GeattiGMV24}, taking advantage of the progress in SAT solving.
Recently, ~\cite{HermoLS23} proposed a tableaux method for checking the realizability of the safety fragment of LTL.

None of these techniques handle the non-Boolean variable domains that our method tackles.

\paragraph{Satisfiability Checking and Automata Constructions for First-Order Temporal Logics}

The decidability of the satisfiability problem for fragments of first-order linear and branching-time temporal logics has been extensively studied~\cite{HodkinsonWZ01,HodkinsonWZ02,HodkinsonKKWZ03}.
\cite{Demri06,  DemriD07} study LTL with constraints, 
which is a fragment of first-order LTL. In general, the satisfiability problem for these logics is undecidable.
The model-checking problem for an extension of LTL with atomic propositions over variables with possibly infinite domains has been considered in~\cite{GrumbergKS12}. \cite{FaranK18} introduces linear temporal logic with arithmetic (LTLA),  which extends LTL with linear arithmetic over the integers.   
\cite{FaranK20} studies the synthesis problem for specifications in the form of variable automata with arithmetic.

Symbolic automata~\cite{DAntoniV15,DAntoniV21} extend finite-state automata with support for potentially infinite alphabets represented by effective Boolean algebras.  
They were recently generalized to $\omega$-regular languages of infinite words in~\cite{VeanesBES23}, which introduces alternating  and nondeterministic B\"uchi automata modulo $\mathcal A$ (\emph{ABW}$_\mathcal A$ and \emph{NBW}$_\mathcal A$, respectively),  where  $\mathcal A$ is an effective Boolean algebra. Further, ~\cite{VeanesBES23} provides algorithms for the construction of \emph{ABW}$_\mathcal A$ from specifications in LTL modulo $\mathcal A$,  as well as in the combination of LTL modulo $\mathcal A$ with extended regular expressions modulo $\mathcal A$. 
The key idea is the use of symbolic transition terms and symbolic derivatives, which enable symbolic automata constructions integrating theory reasoning and rewriting. The translation of \emph{ABW}$_\mathcal A$ into equivalent
\emph{NBW}$_\mathcal A$ is also performed using symbolic derivatives,  resulting in a symbolic generalization of the classical alternation elimination algorithm~\cite{MiyanoH84}. 
Our approach, on the other hand, targets the construction of deterministic monitors, as determinism is necessary for building the game encoding the synthesis problem. Investigating the use of symbolic derivatives to construct a deterministic transition relation in our context is a possible avenue for future work, with the potential to improve the efficiency of the successor-state computation.
Finally,  we focus on $\templogic$ specifications,  while extensions of temporal logics with regular expressions have proven useful for practical adoption in verification~\cite{PSL-standard}. The logic \emph{RLTL}$_\mathcal A$
introduced in~\cite{VeanesBES23}, combines LTL modulo $\mathcal A$ with
extended regular expressions modulo $\mathcal A$. The combination is facilitated by symbolic transition terms,  which provide a common semantics.

\cite{GeattiGG24} considers finite-word semantics of first-order LTL and proposes an automaton model and a corresponding notion of regular first-order languages. They study the closure of these languages under common operations and establish conditions under which non-emptiness is semi-decidable. Our work,  on the other hand,  focuses on infinite-word semantics.

\paragraph{Monitoring LTL and First-Order LTL}
Monitors for LTL are also essential in the context of runtime verification.
Their construction typically uses the LTL expansion laws~\cite{HavelundR01,SenRA03}.
Recent work~\cite{HavelundP21,HavelundP18} has developed theory and tools for monitoring first-order temporal logic, focusing on safety.

None of these constructions targets applications in synthesis or can perform the temporal and first-order reasoning crucial for our transformation rules.

\section{Conclusion}\label{sec:conclusion}
A common approach to synthesizing infinite-state reactive programs out of temporal logic specifications is to turn the specification into a symbolic game, which is then solved. 
However, a shortcoming of existing approaches is that the construction of the symbolic game loses semantic information from the high-level specification, making the game hard to solve.
For example, while it is easy to see that the formula $x = 0 \land (\LTLglobally x' > x) \land (\LTLeventually x < 0)$ is unrealizable, it is less evident from the constructed symbolic game.

In this paper, we introduce monitors, which provide the basis of a flexible framework for systematically adding some of the missing semantic information to the game. 
The monitors are constructed from the temporal logic formula via rules combining first-order and temporal reasoning. 
We use these monitors to prune the synthesis game's state space and winning condition by constructing a product game from the game and the monitor.
Our method is general because it is independent of the game construction and game solving.
It is extensible, allowing for the easy addition of new rules or the complete swap out of the monitor construction. 
We demonstrate the effectiveness of our approach empirically.

We believe our method and monitor are general enough that, for future work, they could be extended to more expressive formalisms, like regular expressions modulo theories. 
Furthermore, while the independence of the game construction is an advantage of our approach, a deeper integration of our method and the game construction -- like done in the construction of symbolic automata -- might yield even better-pruned games. 
Another possible extension is to find a way to extend the monitor construction with some form of inexpensive strategic reasoning.

\begin{acks}
    We thank the anonymous reviewers for their helpful suggestions.
\end{acks}

\bibliographystyle{ACM-Reference-Format}
\bibliography{references}

\newpage
\appendix
\section{Proofs}\label{sec:approofs}
\subsection{Proofs from \Cref*{sec:logic_and_realizability} and \Cref*{sec:monitor_definition}}

Let $(\symgame,\Lambda)$ be the symbolic game with $\symgame = (L, \linit,\inputs, \progvars, \dom, \delta)$.
For a $\rho \in \xiruns$ we define $\uniquerun(\rho,\symgame) \in \states^\omega$ as the unique run in $\sema{\symgame}$ that starts in $\linit$ and conforms to all assignments to $\inputs$ and $\progvars$. 
The run is unique by the first condition \Cref*{def:sym-games}.
We define the \emph{pseudo-language} of $(\symgame,\Lambda)$ as $\plang{\symgame,\Lambda} := \{ \rho \in \xiruns \mid \uniquerun(\rho,\symgame) \in \sema{\Lambda}\}$.

Let $\varphi$ be an $\templogic(specvars)$ formula,  and $\tobool{\varphi}$ be its propositional version.

For each $\rho \in \xiruns$,  there exists a unique sequence $\tobool{\rho} \in \apruns$ such that for all $i \in \Nat$, and all $p \in \AP(\tobool{\varphi})$,
$\rho[i] \models p$ if and only if $p \in \tobool{\rho}[i]$.

By structural induction on the definition of $\templogic$,  it is easy to show that
\[\rho \models \varphi \Longleftrightarrow \tobool{\rho} \models \tobool{\varphi}. \]

For each $\eta \in \apruns$, we define 
$\concretize(\eta) := \{\rho \in \xiruns \mid \tobool{\rho} = \eta\}.$

By structural induction on the definition of $\templogic$,  it is easy to show that
\[\eta \models \tobool{\varphi} \Longrightarrow \concretize(\eta) \subseteq \lang{\varphi}. \]

\begin{lemma}\label{lem:langeq}
    Let $\varphi \in \templogic(\specvars)$ and $(\symgame,\Lambda)$ be \textbf{a symbolic game for $\varphi$}, then
    \[\lang{\varphi} = \plang{\symgame, \Lambda}.\]
\end{lemma}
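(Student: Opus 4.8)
The plan is to prove the two set inclusions $\lang{\varphi} \subseteq \plang{\symgame,\Lambda}$ and $\plang{\symgame,\Lambda} \subseteq \lang{\varphi}$ separately, using the three auxiliary facts stated just before the lemma as the main machinery. Recall that $(\symgame,\Lambda)$ is a symbolic game for $\varphi$, which by the construction in \Cref{sec:automaton-to-game} means that $\symgame$ arises from a DPA $\mathcal{A}_{\tobool{\varphi}}$ with $\lang{\mathcal{A}_{\tobool{\varphi}}} = \lang{\tobool{\varphi}}$, and $\Lambda = \parity(\symgame,\lambda)$. The key observation tying everything together is that for each $\rho \in \xiruns$, the unique run $\uniquerun(\rho,\symgame)$ in $\sema{\symgame}$ visits exactly the sequence of automaton states that $\mathcal{A}_{\tobool{\varphi}}$ traverses when reading $\tobool{\rho}$. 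This is because the symbolic transition relation $\delta_{\symgame}(q,q')$ was defined precisely as the disjunction over letters $a$ with $\delta(q,a)=q'$ of the formula characterizing those assignments whose induced predicate-valuation is $a$; hence $\langle\rho[i],\rho[i+1]\rangle \models \delta_{\symgame}(q,q')$ iff $\delta(q,\tobool{\rho}[i]) = q'$.

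First I would establish this correspondence formally as the central bridging claim: writing $\Loc(\uniquerun(\rho,\symgame)) = q_0 q_1 q_2 \dots$, I would show by induction on $i$ that $q_{i+1} = \delta(q_i, \tobool{\rho}[i])$ with $q_0 = q_0$ the initial automaton state. Given this, the coloring $\lambda$ on locations matches the coloring along the automaton run on $\tobool{\rho}$, so by \Cref{def:winning-condition} we have $\uniquerun(\rho,\symgame) \in \sema{\Lambda}$ iff the highest color occurring infinitely often along $\mathcal{A}_{\tobool{\varphi}}$'s run on $\tobool{\rho}$ is even, i.e.\ iff $\tobool{\rho} \in \lang{\mathcal{A}_{\tobool{\varphi}}} = \lang{\tobool{\varphi}}$. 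Chaining this with the definition of the pseudo-language gives
\[
\rho \in \plang{\symgame,\Lambda} \iff \uniquerun(\rho,\symgame) \in \sema{\Lambda} \iff \tobool{\rho} \models \tobool{\varphi}.
\]

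Next I would close the argument using the two structural-induction facts recalled in the excerpt. The fact $\rho \models \varphi \iff \tobool{\rho} \models \tobool{\varphi}$ immediately yields, when combined with the displayed equivalence above, that $\rho \in \lang{\varphi} \iff \rho \models \varphi \iff \tobool{\rho} \models \tobool{\varphi} \iff \rho \in \plang{\symgame,\Lambda}$, which is exactly the claimed equality of languages. In fact this single chain of equivalences establishes both inclusions simultaneously, so the $\concretize$ fact is not strictly needed for the equality, though it provides an alternative route for the $\plang{\symgame,\Lambda}\subseteq\lang{\varphi}$ direction by passing through $\eta := \tobool{\rho}$ and noting $\rho \in \concretize(\tobool{\rho}) \subseteq \lang{\varphi}$.

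I expect the main obstacle to be the bridging claim identifying $\Loc(\uniquerun(\rho,\symgame))$ with the automaton run on $\tobool{\rho}$, and in particular verifying that the disjunctive/conjunctive form of $\delta_{\symgame}(q,q')$ is satisfied by $\langle\rho[i],\rho[i+1]\rangle$ exactly when $\delta(q,\tobool{\rho}[i])=q'$. This requires carefully unfolding the definition of $\delta_{\symgame}$ and checking that $\langle\rho[i],\rho[i+1]\rangle \models \bigwedge_{\tobool{\alpha}\in a}\alpha \land \bigwedge_{\tobool{\alpha}\notin a}\lnot\alpha$ holds for the unique letter $a = \tobool{\rho}[i]$ and fails for every other letter, which follows from the semantics $\rho \models \alpha :\Leftrightarrow \langle\rho[0],\rho[1]\rangle \models_T \alpha$ together with the definition of $\tobool{\rho}$. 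The remaining steps are routine once the parity conditions on both sides are seen to agree color-for-color along these identified runs.
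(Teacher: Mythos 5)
Your proposal is correct and follows essentially the same route as the paper's proof: both reduce the claim to the correspondence between the game's unique run on $\rho$ and the automaton's run on $\tobool{\rho}$, combined with the equivalence $\rho \models \varphi \iff \tobool{\rho} \models \tobool{\varphi}$. The only difference is presentational -- the paper states the run correspondence implicitly (``since $(\symgame,\Lambda)$ is constructed from $\mathcal{A}_{\tobool{\varphi}}$'') and argues the two inclusions separately via $\concretize$, whereas you make the bridging claim explicit with an induction and collapse everything into one chain of equivalences, which is a slightly cleaner write-up of the same argument.
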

\begin{proof}
    Since $(\symgame,\Lambda)$ is a symbolic game for $\varphi$,  there exists a DPA $\mathcal{A}_{\tobool{\varphi}}$ from which $(\symgame,\Lambda)$ was constructed. We prove the two inclusions separately.
    
    $(\subseteq)\quad$ 
    Let $\rho \in \lang{\varphi}$. 
    Then,  $\tobool{\rho} \in \lang{\mathcal{A}_{\tobool{\varphi}}} $.
    Since  $(\symgame,\Lambda)$ is constructed from $\mathcal{A}_{\tobool{\varphi}}$,  we have that $\concretize(\tobool{\rho}) \subseteq \plang{\symgame,\Lambda} $. 
    Thus,  $\rho \in \plang{\symgame,\Lambda}$. 
    
     $(\supseteq)\quad$ 
     Let $\rho \in \plang{\symgame,\Lambda}$.
     Then,  since $(\symgame,\Lambda)$ is constructed from $\mathcal{A}_{\tobool{\varphi}}$,  we have that $\tobool{\rho} \in \lang{\mathcal{A}_{\tobool{\varphi}}} $.
     Therefore,  $\concretize(\tobool{\rho}) \subseteq \lang{\varphi}$.
     Thus, $\rho \in  \lang{\varphi}$.
\end{proof}

\begin{lemma}\label{lem:langeqeqreal}
    Let 
    $\varphi \in \templogic(\specvars)$ and 
    $(\symgame,\Lambda)$ be \textbf{some symbolic game} where $\symgame = (L, \linit,\inputs, \progvars, \dom, \delta)$
    \textbf{such that} $\lang{\varphi} = \mathcal{L}_P(\symgame, \Lambda)$. 
    Then $\varphi$ is realizable if and only if $(\linit,\assmt{x}) \in \win_\sys(\sema{\symgame},\sema{\Lambda})$ for every $\assmt{x} \in \assignments{\progvars}$.
\end{lemma}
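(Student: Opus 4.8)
The plan is to set up a tight correspondence between the realizing functions of \Cref{def:realizability} and the system strategies of $\sema{\symgame}$, and then to transport satisfaction of $\varphi$ across the hypothesis $\lang{\varphi} = \plang{\symgame,\Lambda}$. The pivot is that, because $\symgame$ is deterministic and non-blocking, each $\rho \in \xiruns$ determines the unique play $\uniquerun(\rho,\symgame)$, and conversely every play $\xi$ of $\sema{\symgame}$ starting in some $(\linit,\assmt{x}_0)$ reads off a sequence $\rho$ (its program-variable and input valuations) with $\uniquerun(\rho,\symgame)=\xi$. Under this identification a play is winning (lies in $\sema{\Lambda}$) exactly when $\Loc(\uniquerun(\rho,\symgame)) \in \Lambda$, i.e.\ when $\rho \in \plang{\symgame,\Lambda}$, which by hypothesis coincides with $\rho \models \varphi$.

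First I would make the strategy/function correspondence precise. Given a system strategy $\strat$, I define a realizing function by running $\strat$ against a fixed initial assignment $\assmt{x}_0$ and input prefix: since the locations along a play are forced by determinism, the program-variable assignment produced after reading inputs $\mathit{input}[0,n-1]$ depends only on $\assmt{x}_0$ and that prefix, which is exactly the signature required in \Cref{def:realizability}. Conversely, given a realizing function I define a strategy that, at a system node reached after inputs $\assmt{i}_0\cdots\assmt{i}_n$ with initial program variables $\assmt{x}_0$, plays the program-variable assignment prescribed by the function, with the next location fixed by the deterministic $\delta$. The point to verify is that these moves are legal: along any play consistent with this strategy from $(\linit,\assmt{x}_0)$, the induced $\rho$ satisfies $\varphi$ (since the function realizes $\varphi$), hence $\rho \in \lang{\varphi} = \plang{\symgame,\Lambda}$, so $\uniquerun(\rho,\symgame)$ is a genuine — and by determinism identical — run, whence every prefix is a legal play; off consistent plays the strategy can be completed by any legal move, which exists from every system node by the definition of $\states_\sys$.

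With the correspondence in hand, both directions follow. For the forward direction, if $\varphi$ is realized by $\sigma$, the induced strategy wins from every $(\linit,\assmt{x})$: each consistent play equals $\uniquerun(\rho,\symgame)$ for the corresponding $\rho$, which satisfies $\varphi$, hence lies in $\plang{\symgame,\Lambda}$ and so in $\sema{\Lambda}$. For the converse, I would first assemble a single strategy winning from all initial states by noting that the initial assignment $\assmt{x}_0$ is visible in the first vertex of any history; choosing for each $\assmt{x}$ a winning strategy $\strat_{\assmt{x}}$ from $(\linit,\assmt{x})$, which exists because $(\linit,\assmt{x}) \in \win_\sys(\sema{\symgame},\sema{\Lambda})$, and dispatching on $\assmt{x}_0$ yields a uniform winning strategy, whose induced realizing function then realizes $\varphi$ by the same computation read backwards.

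The main obstacle is the book-keeping in the two directions of this correspondence, specifically well-definedness and legality. One must check that the derived realizing function is independent of the unread future inputs (using that the system move producing $\rho[n]$ occurs after seeing only $\mathit{input}[0,n-1]$), and that the derived strategy stays inside the game graph — this is exactly where the hypothesis $\lang{\varphi}=\plang{\symgame,\Lambda}$ and the determinism of $\symgame$ combine, so that $\uniquerun$ certifies legality of every finite prefix. The uniformity step of merging per-initial-state winning strategies into one is routine here only because all initial states share the location $\linit$ and differ solely in the program-variable component, which the history exposes.
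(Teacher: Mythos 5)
Your proposal is correct and follows essentially the same route as the paper's proof: build the correspondence between realizing functions and system strategies (using determinism and non-blocking to identify plays with sequences $\rho$ via $\uniquerun$), and transport satisfaction across the hypothesis $\lang{\varphi} = \plang{\symgame,\Lambda}$ in both directions. You spell out the legality/well-definedness bookkeeping and the dispatch on the initial program-variable assignment more explicitly than the paper, which simply asserts that the strategy ``emulates'' the function and vice versa, but the underlying argument is the same.
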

\begin{proof} We prove separately the two directions.

 $(\Longrightarrow)\quad$
 Let  $\varphi$ be realizable.  
 Then, there exists a function 
$\sigma: \assignments{\progvars} \times \assignments{\inputs}^+ \to \assignments{\progvars}$ such that for every infinite sequence $\mathit{input} \in \assignments{\inputs}^\omega$ of assignments to $\inputs$ and initial assignment $\mathit{init} \in \assignments{\progvars}$, for  $\rho \in {(\assignments{\progvars \cup \inputs})}^\omega$ defined as 
$\rho[0] := \mathit{init} \cup \mathit{input}[0] $ and $\rho[n] := \sigma(\mathit{init}, \mathit{input}[0,n-1]) \cup \mathit{input}[n]$ for $n > 0$,  $\rho \models \varphi$ holds.
   
   We can define a strategy $\sigma_\sys$ for Player $\sys$ in $\sema{\symgame}$ that emulates the function $\sigma$.  
   Then, every $\xi \in \plays_{\sema{\symgame}}((\linit,\assmt{x}),\sigma_\sys)$ corresponds  to a sequence $\rho \in {(\assignments{\progvars \cup \inputs})}^\omega$ consistent with the function $\sigma$.
   Thus,  since $\lang{\varphi} = \mathcal{L}_P(\symgame, \Lambda)$, 
   we have that $\xi \in \sema{\Lambda}$.
   Therefore,  we conclude that $\sigma_\sys$ is a winning strategy for 
 Player $\sys$ from $(\linit,\assmt{x})$, and hence
$(\linit,\assmt{x}) \in  \win_\sys(\sema{\symgame},\sema{\Lambda})$.
 
 $(\Longleftarrow)\quad$
  Suppose that $(\linit,\assmt{x}) \in \win_\sys(\sema{\symgame},\sema{\Lambda})$ for every $\assmt{x} \in \assignments{\progvars}$.
  Then, for every $\assmt{x} \in \assignments{\progvars}$,  there exists a winning strategy for Player $\sys$ in $\sema{\symgame}$ from $(\linit,\assmt{x})$.
  We can define a function $\sigma: \assignments{\progvars} \times \assignments{\inputs}^+ \to \assignments{\progvars}$ that for every $\assmt{x} \in \assignments{\progvars}$ mimics the respective winning strategy for player $\sys$. 
  Thus,   for every infinite sequence $\mathit{input} \in \assignments{\inputs}^\omega$ of assignments to $\inputs$ and initial assignment $\mathit{init} \in \assignments{\progvars}$, for  $\rho \in {(\assignments{\progvars \cup \inputs})}^\omega$ defined as 
$\rho[0] := \mathit{init} \cup \mathit{input}[0] $ and $\rho[n] := \sigma(\mathit{init}, \mathit{input}[0,n-1]) \cup \mathit{input}[n]$ for $n > 0$,  we have that $\rho$ corresponds to a play consistent with the respective winning strategy for player $\sys$ from $\assmt{x}$.
Since $\lang{\varphi} = \mathcal{L}_P(\symgame, \Lambda)$,  we have $\rho \models \varphi$.
\end{proof}

\restateGameCorrectness*
\begin{proof}
This is a direct consequence of \Cref*{lem:langeq} and \Cref*{lem:langeqeqreal}.
The condition $\assmt{x} \FOLentailsT{T} \dom(\linit)$ is not relevant as our construction sets $\dom$ always to true.
\end{proof}

\restateProductCorrectness*
\begin{proof}
Since $(\symgame,\Lambda)$ is a symbolic game for $\varphi$, we have that $\dom(l) = \true$ of all $l \in L$. 
By the definition of $(\symgame_\times,\Lambda_\times)$,
$\dom_\times(l,q) := \dom(l) = \true$ for all $(l,q)$.
This, the condition $\assmt{x} \FOLentailsT{T} \dom((\linit,\qinit))$ is not relevant.  
We will show that 
$\plang{\symgame_\times,\Lambda_\times} =
 \plang{\symgame,\Lambda}$,  
 and the desired claim will directly follow from \Cref*{lem:langeq} and \Cref*{lem:langeqeqreal}.
We show the two inclusions separately.

$(\subseteq)\quad$ 
 Let $\rho  \in \plang{\symgame_\times,\Lambda_\times}$.
 Thus,  $\uniquerun(\rho,\symgame_\times) \in \sema{\Lambda_\times}$.
 The projection of  $\uniquerun(\rho,\symgame_\times)$ on $L$ is equal to $\uniquerun(\rho,\symgame)$. 
 If $\uniquerun(\rho,\symgame) \in \sema{\Lambda}$, then we are done.
 Suppose that $\uniquerun(\rho,\symgame) \not\in \sema{\Lambda}$.
 Let  $\uniquerun(\rho,\symgame_\times) = (l_0,q_0)(l_1,q_1)\ldots\in \Lambda_\times$.
 By definition of $\Lambda_\times$,
 $\verdict(q_i) \neq \UNSAT$ for all $i \in \Nat$,  and
there exists $i \in \Nat$ such that $\verdict(q_i)  = \SAFETY$.
By \Cref*{def:monitor-formula},  the second condition, we have that $\rho \in \lang{\varphi}$.  
Since  $(\symgame,\Lambda)$ is a symbolic game for $\varphi$,  we have by \Cref*{lem:langeq} that 
$\lang{\varphi} = \plang{\symgame, \Lambda}$.
Thus,  $\rho \in  \plang{\symgame, \Lambda}$, which contradicts our supposition that $\uniquerun(\rho,\symgame) \not\in \sema{\Lambda}$. 
This concludes the proof in this direction.

$(\supseteq)\quad$  Let $\rho  \in \plang{\symgame,\Lambda}$.
 Thus,  $\uniquerun(\rho,\symgame) \in \sema{\Lambda}$.
 Consider $\uniquerun(\rho,\symgame_\times)$.  
 The projection of  $\uniquerun(\rho,\symgame_\times)$ on $L$ is equal to $\uniquerun(\rho,\symgame)$. 
 Thus, by the definition of $\Lambda_\times$, since $\uniquerun(\rho,\symgame) \in \sema{\Lambda}$, 
 we also have $\uniquerun(\rho,\symgame_\times) \in \Lambda_\times$.
\end{proof}

\subsection{Proofs from \Cref*{sec:monitor}}
\restateLemaExpansion*
\begin{proof}
The proofs follows a standard argument via induction on $\varphi$~\cite{EsparzaKS20}.
\begin{itemize}
\item $\varphi = \alpha$:
\[
\rho \models \varphi \Longleftrightarrow 
\langle\rho[0], \rho[1] \rangle \FOLentailsT{T} \alpha\Longleftrightarrow 
 \expand(\alpha, a) = \top \Longleftrightarrow 
 \rho_{+1} \models \expand(\varphi, a) 
\]
\item $\varphi = \neg\psi$:
\[
\rho \models \neg\psi \Longleftrightarrow 
\rho \not\models \psi\Longleftrightarrow 
 \rho_{+1} \models \neg \expand(\psi, a) \Longleftrightarrow
 \rho_{+1} \models \expand(\varphi, a)
 \]
\item $\varphi = \varphi_1 \land \varphi_2$:
\[\begin{array}{l}
\rho \models \varphi_1 \land \varphi_2 \Longleftrightarrow \\
 \rho \models \varphi_1 \text{ and } \rho \models \varphi_2 \Longleftrightarrow \\
 \rho_{+1} \models \expand(\varphi_1, a) \text{ and }
  \rho_{+1} \models \expand(\varphi_2, a) \Longleftrightarrow \\
 \rho_{+1} \models \expand(\varphi, a)
 \end{array}
 \]
\item $\varphi = \LTLnext \psi$:
\[
\rho \models \LTLnext\psi \Longleftrightarrow 
\rho_{+1} \models \psi\Longleftrightarrow 
 \rho_{+1} \models \expand(\varphi, a)
 \]
\item $\varphi = \varphi_1 \LTLuntil \varphi_2$:
\[\begin{array}{l}
\rho \models \varphi_1\LTLuntil \varphi_2 \Longleftrightarrow \\
 \rho \models \varphi_2 \vee (\varphi_1 \land\LTLnext \varphi) \Longleftrightarrow \\
 \rho_{+1} \models \expand(\varphi_2,a) \vee \expand(\varphi_1 ,a) \land (\varphi_1 \LTLuntil \varphi_2)\Longleftrightarrow \\
 \rho_{+1} \models \expand(\varphi, a)
 \end{array}
 \]
\end{itemize}
\end{proof}

\begin{lemma}[Lift~\Cref*{lem:expansion}]\label{lem:liftexpansion}
For all $\rho \in {(\assignments{\progvars \cup \inputs})}^\omega$ and $a \subseteq \preds$ where 
$\langle \rho[0], \rho[1] \rangle \FOLentailsT{T} \left( \bigwedge_{\alpha \in \preds \cap a} \alpha \right) \land \left( \bigwedge_{\alpha \in \preds \setminus a} \lnot \alpha \right)$, 
it holds that
$\rho \models \stateformula(q)$ if and only if $\rho_{+1} \models \stateformula(\nextState(q, a))$.
\end{lemma}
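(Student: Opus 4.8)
The plan is to reduce the statement to the single-formula expansion lemma, \Cref{lem:expansion}, by treating the two sides of the implication $\stateformula(q)$ separately and then accounting for the extra predicates that $\nextState$ introduces via $\propagate$. Writing $q = \langle \formF_\as,\formE_\as,\formF_\ga,\formE_\ga,\Imp_\as,\Imp_\ga\rangle$, unfolding the definitions gives $\stateformula(q) = A \to G$ with $A := \toform{\formF_\as \cup \formE_\as}$ and $G := \toform{\formF_\ga \cup \formE_\ga}$, whereas $\stateformula(\nextState(q,a)) = A' \to (G'' \land \toform{\propagate(a)})$ with $A' := \toform{\expandSet(\formF_\as \cup \formE_\as)}$ and $G'' := \toform{\expandSet(\formF_\ga \cup \formE_\ga)}$. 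The observation that makes everything line up is that $\expand$ distributes over conjunction by its defining clause $\expand(\varphi_1 \land \varphi_2, a) = \expand(\varphi_1,a) \land \expand(\varphi_2,a)$, so $A' \equiv \expand(A,a)$ and $G'' \equiv \expand(G,a)$ as \templogic\ formulas (up to reordering and duplicate conjuncts, which are semantically irrelevant). Note that the $\Imp_\as,\Imp_\ga$ components play no role, since they do not occur in $\stateformula$ and $\nextState$ leaves them unchanged.

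First I would apply \Cref{lem:expansion} directly to the compound formulas $A$ and $G$. The hypothesis of the present lemma is exactly the consistency condition required by \Cref{lem:expansion}, so this immediately yields $\rho \models A$ iff $\rho_{+1} \models A'$, and $\rho \models G$ iff $\rho_{+1} \models G''$.

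The only mismatch between $\stateformula(\nextState(q,a))$ and the naive expansion $A' \to G''$ is the extra conjunct $\toform{\propagate(a)}$ on the guarantee side. The key step, which I expect to be the main obstacle, is to show that this conjunct is vacuously satisfied at $\rho_{+1}$ under the consistency hypothesis. I would fix $\beta \in \propagate(a) \subseteq \propagationPreds \subseteq \QF{\progvars}$; by the definition of $\propagate$, the implication $\left(\bigwedge_{\alpha \in \preds \cap a}\alpha\right) \land \left(\bigwedge_{\alpha \in \preds \setminus a}\lnot\alpha\right) \to \beta[\progvars \mapsto \progvars']$ is valid. Instantiating this at the assignment $\langle \rho[0], \rho[1]\rangle = \rho[0] \uplus \rho[1]'$, the consistency hypothesis discharges the premise, so $\langle \rho[0], \rho[1]\rangle \FOLentailsT{T} \beta[\progvars \mapsto \progvars']$. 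Unwinding the priming convention, the $\progvars'$-variables take the values that $\rho[1]$ assigns to $\progvars$, so this is precisely $\beta$ evaluated on the program variables of $\rho[1] = \rho_{+1}[0]$; since $\beta \in \QF{\progvars}$ mentions no next-step or input variables, this is equivalent to $\rho_{+1} \models \beta$. Hence $\rho_{+1} \models \toform{\propagate(a)}$, and therefore $\rho_{+1} \models (G'' \land \toform{\propagate(a)})$ iff $\rho_{+1} \models G''$.

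Finally I would assemble the biconditional by chaining the equivalences: $\rho_{+1} \models \stateformula(\nextState(q,a))$ iff $(\rho_{+1} \models A' \Rightarrow \rho_{+1} \models G'')$ iff $(\rho \models A \Rightarrow \rho \models G)$ iff $\rho \models \stateformula(q)$, where the middle step uses the two applications of \Cref{lem:expansion} and the previous paragraph collapses the $\propagate(a)$ conjunct. This completes the proof.
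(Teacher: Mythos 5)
Your proof is correct and follows essentially the same route as the paper's: reduce to \Cref{lem:expansion} via the fact that $\nextState$ applies $\expand$ pointwise (hence to the whole conjunction), and then observe that the extra $\propagate(a)$ conjunct is automatically satisfied by $\rho_{+1}$ under the consistency hypothesis. In fact, your treatment of the $\propagate(a)$ conjunct (instantiating the validity at $\langle\rho[0],\rho[1]\rangle$ and unwinding the priming convention) is spelled out in more detail than the paper's, which merely asserts that these predicates are ``trivially implied.''
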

\begin{proof}
This follows from~\Cref*{lem:expansion} and the fact that $\nextState$ applies $\expand$ point-wise to the elements of $q$.
This corresponds to applying $\expand$ to the overall $\stateformula(q)$.

The only difference is the use of $\propagate$. 
However,  $\propagate$ only contains predicates that are trivially implied by $a$,  and we only consider $\rho$ where $a$ holds initially. 
Therefore,  the elements added to $\nextState$ by $\propagate$ also hold for $\rho_{+1}$.
\end{proof}

\restateRuleSoundness*
\begin{proof}

We begin with a useful observation and a lemma.

By definition, the formulas $\Curr_D(q)$ and $\DedInv_D(q)$  have the following properties
\begin{itemize}
\item The implications 
$\toform{\formE_\as} \rightarrow \Curr_\as(q)$ and 
$\toform{\formE_\as \cup \formE_\ga} \rightarrow \Curr_\ga(q)$
are valid.
\item The implications 
$\toform{\Imp_\as} \rightarrow \LTLglobally(\DedInv_\as(q))$ and
$\toform{\Imp_\ga} \rightarrow \LTLglobally(\DedInv_\ga(q))$
are valid.
\end{itemize}

The next lemma states some simple relationships between the  conditions in \Cref*{def:sound-transform} which will be helpful in establishing that a given transformation meets these conditions.

\begin{lemma}\label{lemma:soundness-conditions}
Let $\transform : Q \to Q$ be a monitor-state transformation function and consider
$q_1 = \langle \formF_\as^1,\formE_\as^1,\formF_\ga^1,\formE_\ga^1,\Imp_\as^1,\Imp_\ga^1\rangle$ and  
$q_2 = \langle \formF_\as^2,\formE_\as^2,\formF_\ga^2,\formE_\ga^2,\Imp_\as^2,\Imp_\ga^2\rangle$ where
$\transform(q_1) = q_2$.
\begin{itemize}
\item If $\Imp_\as^2 = \Imp_\as^1$ and $\Imp_\ga^2 = \Imp_\ga^1$, then condition~(b) in \Cref*{def:sound-transform} is  satisfied.
\item If $\formF_D^2 = \formF_D^1$,  $\formE_D^2 = \formE_D^1$ and $\Imp_D^2 \supseteq \Imp_D^1$, for all $D \in \agset$,  and if condition~(b) in \Cref*{def:sound-transform} is satisfied, then conditions~(a) and (c) in \Cref*{def:sound-transform} are also satisfied. 
\item If $\formE_D^2 = \formE_D^1$, then condition~(c) in \Cref*{def:sound-transform} is satisfied. 
\end{itemize}
\end{lemma}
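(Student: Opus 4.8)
The plan is to dispatch the three items in increasing order of difficulty, exploiting that in each item the hypotheses force most of the six state components of $q_1$ and $q_2$ to coincide, so that the relevant conditions of \Cref{def:sound-transform} degenerate to near-tautologies. For the first item, assuming $\Imp_\as^2 = \Imp_\as^1$ and $\Imp_\ga^2 = \Imp_\ga^1$, the conclusion $\toform{\Imp_\as^2}$ (respectively $\toform{\Imp_\ga^2}$) of each implication in condition~(b) is literally a conjunct of the corresponding premise, so both implications hold by weakening. The third item is equally immediate: if $\formE_D^2 = \formE_D^1$, then the required $\toform{\formE_D^2} \rightarrow \toform{\formE_D^1}$ is an instance of $\psi \rightarrow \psi$. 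I would state each of these in one or two sentences.

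The substantive case is the second item, establishing condition~(a) from the remaining hypotheses. Using $\formF_D^2 = \formF_D^1$ and $\formE_D^2 = \formE_D^1$ to substitute the common components, and writing $A := \toform{\formF_\as^1 \cup \formE_\as^1}$ and $G := \toform{\formF_\ga^1 \cup \formE_\ga^1}$ for the shared assumption and guarantee parts, the equivalence in~(a) reduces to
\[
\big(A \land \toform{\Imp_\as^1}\big) \rightarrow \big(G \land \toform{\Imp_\ga^1}\big) \;\;\equiv\;\; \big(A \land \toform{\Imp_\as^2}\big) \rightarrow \big(G \land \toform{\Imp_\ga^2}\big).
\]
I would prove the two entailment directions using exactly two facts: the monotonicity $\toform{\Imp_D^2} \rightarrow \toform{\Imp_D^1}$ coming from $\Imp_D^2 \supseteq \Imp_D^1$, and the two implications supplied by condition~(b), which under $\formE_D^2 = \formE_D^1$ read $(\toform{\formE_\as^1} \land \toform{\Imp_\as^1}) \rightarrow \toform{\Imp_\as^2}$ and $(\toform{\formE_\as^1} \land \toform{\formE_\ga^1} \land \toform{\Imp_\ga^1} \land \toform{\Imp_\as^1}) \rightarrow \toform{\Imp_\ga^2}$. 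For the forward entailment — assume the left-hand implication and derive the right-hand one — I take the right-hand antecedent $A \land \toform{\Imp_\as^2}$, use monotonicity to recover $\toform{\Imp_\as^1}$, hence satisfy the left-hand antecedent and obtain $G \land \toform{\Imp_\ga^1}$; then, reading $\toform{\formE_\as^1}$ off as a conjunct of $A$ and $\toform{\formE_\ga^1}$ off as a conjunct of $G$, and combining with $\toform{\Imp_\ga^1}$ and $\toform{\Imp_\as^1}$, the second implication of~(b) yields $\toform{\Imp_\ga^2}$, giving the right-hand consequent. The backward entailment is symmetric: from the left-hand antecedent $A \land \toform{\Imp_\as^1}$, the first implication of~(b) (using $\toform{\formE_\as^1}$ as a conjunct of $A$) gives $\toform{\Imp_\as^2}$, the assumed right-hand implication then produces $G \land \toform{\Imp_\ga^2}$, and monotonicity downgrades $\toform{\Imp_\ga^2}$ to $\toform{\Imp_\ga^1}$.

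The main obstacle is purely the bookkeeping inside the second item: one must check that exactly the conjuncts needed to trigger the $\Imp_\ga$-implication of~(b) — namely $\toform{\formE_\as^1}$, $\toform{\formE_\ga^1}$, $\toform{\Imp_\ga^1}$, and $\toform{\Imp_\as^1}$ — are all simultaneously on hand at the point of application, which is precisely why $\toform{\formE_\ga^1}$ has to be extracted from the guarantee conclusion $G$ rather than from the antecedent, and why $\toform{\Imp_\as^1}$ must be carried along through the guarantee-side derivation. A careless argument would overlook this coupling. Everything else is routine propositional reasoning, and this lemma then serves to streamline the per-rule verifications in the remainder of the proof of \Cref{thm:rule-soundness}.
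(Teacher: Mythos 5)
Your proof is correct. The paper itself states this lemma without proof, treating it as a collection of ``simple relationships'' between the conditions of \Cref{def:sound-transform}, and your argument supplies exactly the routine verification that is intended: items one and three are weakening and reflexivity, and your two-directional propositional derivation of condition~(a) in the second item --- using monotonicity of $\toform{\cdot}$ under $\Imp_D^2 \supseteq \Imp_D^1$ together with the two implications of condition~(b), extracting $\toform{\formE_\as^1}$ from the antecedent and $\toform{\formE_\ga^1}$ from the derived guarantee --- is sound. The only point worth spelling out when writing it up is that condition~(c) in the second item follows immediately from the hypothesis $\formE_D^2 = \formE_D^1$ via the third item, which you prove but do not explicitly invoke there.
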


We show for each rule in $\ruleset$ that the respective monitor-state transformation
$\transform$ satisfies the conditions of \Cref*{def:sound-transform} for all
$q = \langle \formF_\as,\formE_\as,\formF_\ga,\formE_\ga,\Imp_\as,\Imp_\ga\rangle$ and  
$q' = \langle \formF_\as',\formE_\as',\formF_\ga',\formE_\ga',\Imp_\as',\Imp_\ga'\rangle$ with 
$\transform(q)=q'$.

\medskip

\noindent
\textbf{Rule \substunsat.}
The premise $ \Curr_D(q) \land \DedInv_D(q) \FOLentailsT{T} \false$ 
entails that $\toform{\formF_D \cup \formE_D \cup \Imp_D} \equiv \false$.
The rule sets $\formE_D' =\{\bot\}$, and hence condition (c) is satisfied.

\medskip

\noindent
\textbf{Rule \substunsatF.}
The  premises 
$\LTLglobally(\gamma \to \LTLeventually \beta) \in \Imp_D$ and 
$\Curr_D(q) \FOLentailsT{T}  \gamma$
ensure that $\Curr_D(q) \land \toform{\Imp_D}$ implies $\LTLeventually \beta$.  
Together with the premise $\beta \land  \DedInv_D(q)\FOLentailsT{T}  \false$ 
this entails that $\Curr_D(q) \land \toform{\Imp_D}$ is unsatsifiable.
The rule sets $\formE_D' =\{\bot\}$, and hence condition (c) is satisfied.

\medskip

\noindent
\textbf{Rule \substtrue\ and rule \substfalse.}
The premise of the rule guarantees that 
$\toform{ \Imp_D} \rightarrow \gamma$ 
(respectively $\toform{\Imp_D} \rightarrow \neg\gamma$) is valid.
Thus,  $\toform{\formF_\ga \cup \formE_\ga \cup \Imp_\ga} \land \gamma \equiv \toform{\formF_\ga \cup \formE_\ga \cup \Imp_\ga}$ 
(respectively $\toform{\formF_\ga \cup \formE_\ga \cup \Imp_\ga} \land \neg \gamma \equiv \toform{\formF_\ga \cup \formE_\ga \cup \Imp_\ga}$). 
Since the substitution is performed in $F_D$,  we obtain $\toform{\formF_\ga \cup \formE_\ga \cup \Imp_\ga} \equiv \toform{\formF_\ga' \cup \formE_\ga' \cup \Imp_\ga'}$.
The rule does not modify the sets $\Imp_\as$ and $\Imp_\ga$, 
thus condition~(b) is also satisfied.

The proof for the next two rules make use of the following lemma, which follows from the definition of the substitution $\mapsto$ and the semantics of $\LTLglobally$.

\begin{lemma}\label{lemma:subst-globally}
For all \templogic formulas $\psi,\psi_1$ and $\psi_2$,  it holds that
\[\psi \land \LTLglobally(\psi_1 \leftrightarrow \psi_2) \equiv 
\psi[\psi_1 \mapsto \psi_2] \land \LTLglobally(\psi_1 \leftrightarrow \psi_2).\]
\end{lemma}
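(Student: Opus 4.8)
The statement is a ``substitution of equivalents under $\LTLglobally$'' lemma, so the plan is to reduce the claimed equivalence of the two conjunctions to a single pointwise statement and then prove that statement by structural induction on $\psi$. Since both sides carry the shared conjunct $\LTLglobally(\psi_1 \leftrightarrow \psi_2)$, it suffices to show that for every $\rho \in (\assignments{\progvars \cup \inputs})^\omega$ with $\rho \models \LTLglobally(\psi_1 \leftrightarrow \psi_2)$ we have $\rho \models \psi$ iff $\rho \models \psi[\psi_1 \mapsto \psi_2]$. The delicate point is that $\psi[\psi_1 \mapsto \psi_2]$ replaces \emph{every} occurrence of $\psi_1$, including occurrences nested inside temporal operators; such occurrences are evaluated at later time points, so the equivalence $\psi_1 \leftrightarrow \psi_2$ at the current point alone is not enough. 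This is exactly why the hypothesis must be the \emph{global} equivalence.

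To make the induction go through in the temporal cases, I would strengthen the claim to all suffixes: assuming $\rho \models \LTLglobally(\psi_1 \leftrightarrow \psi_2)$, prove by induction on the structure of an arbitrary subformula $\theta$ that $\rho_{+k} \models \theta$ iff $\rho_{+k} \models \theta[\psi_1 \mapsto \psi_2]$ for every $k \in \Nat$. This strengthening is available because $\rho \models \LTLglobally(\psi_1 \leftrightarrow \psi_2)$ entails $\rho_{+k} \models \psi_1 \leftrightarrow \psi_2$ for every $k$, and in fact $\rho_{+k} \models \LTLglobally(\psi_1 \leftrightarrow \psi_2)$ for every $k$, since $\LTLglobally$ is preserved under taking suffixes. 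I would resolve the substitution outermost-first so that the induction splits cleanly on whether $\theta$ is syntactically equal to $\psi_1$.

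The base case is $\theta = \psi_1$: then $\theta[\psi_1 \mapsto \psi_2] = \psi_2$, and the required equivalence at every suffix is precisely the content of the global hypothesis, namely $\rho_{+k} \models \psi_1 \iff \rho_{+k} \models \psi_2$. Otherwise $\theta \neq \psi_1$ and the substitution commutes with the top-level connective, so I dispatch each case using the induction hypothesis: atomic $\theta$ (where the substitution is the identity), the Boolean connectives $\lnot$ and $\land$ (where the hypothesis applies at the same index $k$), and the temporal operators $\LTLnext$ and $\LTLuntil$, where the semantics shift evaluation to suffixes $\rho_{+(k+1)}$, respectively $\rho_{+(k+j)}$ and $\rho_{+(k+i)}$, and the induction hypothesis is invoked at exactly those shifted points. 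Taking $k = 0$ yields $\rho \models \psi$ iff $\rho \models \psi[\psi_1 \mapsto \psi_2]$, and conjoining $\LTLglobally(\psi_1 \leftrightarrow \psi_2)$ to both sides gives the claimed equivalence.

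The main obstacle is purely formulating the induction invariant correctly: one must quantify over all suffixes $\rho_{+k}$ rather than fixing the current time point, so that the cases for $\LTLnext$ and $\LTLuntil$ can legitimately appeal to the hypothesis at the later indices they introduce. Once this ``all suffixes'' strengthening is in place, every case is routine, and the only genuine use of the hypothesis is in the $\theta = \psi_1$ base case. A derived operator such as $\LTLglobally$, $\LTLeventually$, or $\LTLweakuntil$ occurring in $\psi$ needs no separate treatment, since it unfolds into $\lnot$, $\land$, $\LTLnext$, and $\LTLuntil$.
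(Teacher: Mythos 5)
Your proof is correct. The paper itself gives no argument for this lemma beyond asserting that it ``follows from the definition of the substitution $\mapsto$ and the semantics of $\LTLglobally$,'' and your structural induction --- strengthened to all suffixes $\rho_{+k}$ so that the $\LTLnext$ and $\LTLuntil$ cases can invoke the hypothesis at shifted indices, with the global equivalence used only in the base case $\theta = \psi_1$ --- is exactly the standard argument that fills in this omitted detail.
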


\medskip

\noindent
\textbf{Rule \simplifyimp.}
The premise of the rule guarantees that 
$\toform{ \Imp_D} \rightarrow  \LTLglobally((\gamma \rightarrow \varphi) \leftrightarrow \true)$.
Furthermore,   we have that 
$\toform{\formF_D}\land \LTLglobally((\gamma \rightarrow \varphi) \leftrightarrow \true) \equiv
 \toform{ \formF_D}[(\gamma \rightarrow \varphi) \mapsto \true]
 \land \LTLglobally((\gamma \rightarrow \varphi) \leftrightarrow \true)$.
Since the substitution is performed only in $F_D$,  we obtain $\toform{\formF_\ga \cup \formE_\ga \cup \Imp_\ga} \equiv \toform{\formF_\ga' \cup \formE_\ga' \cup \Imp_\ga'}$.
The rule does not modify the sets $\Imp_\as$ and $\Imp_\ga$, 
thus condition~(b) is also satisfied.

\medskip

\noindent
\textbf{Rule \simplifyand.}
The soundness argument is similar to that for rule \simplifyimp, 
by establishing that the implication 
$\toform{ \Imp_D} \rightarrow  \LTLglobally((\gamma \land \varphi) \leftrightarrow \gamma)$ 
is valid.

The proof for the next rule makes use of the following lemma, which follows from the definition of the substitution $\mapstonn$ and the semantics of $\LTLglobally$.
\begin{lemma}\label{lemma:subst-current}
For all \templogic formulas $\psi,\psi_1$ and $\psi_2$ such that
$\psi_1$ $\psi_2$ contain no temporal operators, 
\[\psi \land \LTLglobally(\psi_1 \leftrightarrow \psi_2) \equiv 
\psi[\psi_1 \mapstonn \psi_2] \land \LTLglobally(\psi_1 \leftrightarrow \psi_2).\]
\end{lemma}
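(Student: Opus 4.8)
The plan is to prove the equivalence $\equiv$ semantically, i.e.\ to fix an arbitrary sequence $\rho \in (\assignments{\progvars \cup \inputs})^\omega$ and show that $\rho$ satisfies the left-hand side exactly when it satisfies the right-hand side. Since the conjunct $\LTLglobally(\psi_1 \leftrightarrow \psi_2)$ is syntactically the same on both sides, it suffices to prove the conditional statement: \emph{if} $\rho \models \LTLglobally(\psi_1 \leftrightarrow \psi_2)$, \emph{then} $\rho \models \psi$ holds if and only if $\rho \models \psi[\psi_1 \mapstonn \psi_2]$. (If $\rho \not\models \LTLglobally(\psi_1 \leftrightarrow \psi_2)$ both conjunctions are false and there is nothing to prove.) The first thing I would do is extract from the hypothesis the single consequence actually needed, namely its time-$0$ instance $\rho \models \psi_1 \leftrightarrow \psi_2$, which follows directly from the semantics of $\LTLglobally$ in \Cref{def:rpltl-semantics}.

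The core of the argument is then a general claim proved by structural induction on an arbitrary formula $\theta$: under the assumption $\rho \models \psi_1 \leftrightarrow \psi_2$ it holds that $\rho \models \theta$ if and only if $\rho \models \theta[\psi_1 \mapstonn \psi_2]$; instantiating $\theta := \psi$ then yields the lemma. Following the recursive definition of $\mapstonn$, the induction splits as follows. If $\theta = \psi_1$, the substitution replaces it by $\psi_2$ and the required equivalence is precisely the extracted time-$0$ instance of the hypothesis. If $\theta$ is an atom distinct from $\psi_1$, the substitution is the identity and there is nothing to show. For the Boolean cases $\theta = \lnot \theta'$ and $\theta = \theta_1 \land \theta_2$ (with $\theta \neq \psi_1$), the substitution distributes over the connective and, since these subformulas are still evaluated at the current point, the claim follows from the induction hypothesis applied to the immediate subformulas together with the compositional semantic clauses.

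The decisive cases are the temporal nodes $\theta = \LTLnext \theta'$ and $\theta = \theta_1 \LTLuntil \theta_2$. Here the hypothesis that $\psi_1$ contains no temporal operators guarantees that such a node can never be syntactically equal to $\psi_1$, and by the definition of a \emph{non-nested} occurrence no replacement is performed inside the scope of a temporal operator; hence $\theta[\psi_1 \mapstonn \psi_2] = \theta$ and the equivalence is trivial. This is exactly where the non-temporality restriction is used, and it is what makes the proof lighter than that of the full-substitution \Cref{lemma:subst-globally}: there, occurrences lying under $k$ temporal operators are evaluated at the shifted suffix $\rho_{+k}$, so one needs the equivalence $\psi_1 \leftrightarrow \psi_2$ at \emph{every} point and thus the full strength of $\LTLglobally(\psi_1 \leftrightarrow \psi_2)$, whereas here every substituted occurrence sits at the current point and only the depth-$0$ instance is consumed.

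I expect the main obstacle to be bookkeeping rather than conceptual: making the recursive definition of $\mapstonn$ fully precise --- in particular the convention that at each node one first tests whether the whole subformula equals $\psi_1$ before descending into its children --- and then checking that this definition is mirrored faithfully by the case split of the induction, so that all non-nested occurrences are captured at the Boolean nodes while the temporal nodes are left untouched. A minor point worth stating explicitly is that the non-temporality of $\psi_2$ is not actually needed for soundness of the replacement; only the non-temporality of $\psi_1$ is essential, and I would simply carry the full hypothesis along, as in \Cref{lemma:subst-globally}.
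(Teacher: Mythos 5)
Your proof is correct and matches the paper's approach: the paper states \Cref{lemma:subst-current} without a detailed argument, remarking only that it ``follows from the definition of the substitution $\mapstonn$ and the semantics of $\LTLglobally$,'' and your structural induction is precisely the elaboration of that remark --- the temporal cases are vacuous because $\psi_1$ is non-temporal and $\mapstonn$ never descends below a temporal operator, so only the time-$0$ instance of $\LTLglobally(\psi_1 \leftrightarrow \psi_2)$ is actually consumed. Your side observations --- that the non-temporality of $\psi_2$ is not needed for the semantic equivalence, and that the need for the full strength of $\LTLglobally$ is exactly what distinguishes the full-substitution \Cref{lemma:subst-globally} from this one --- are also accurate.
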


\medskip

\noindent
\textbf{Rule \simplifynn.}
The premise of the rule guarantees that 
$\toform{\formE_D \cup \Imp_D} \rightarrow  (\gamma \leftrightarrow \true)$.
Furthermore,  since $\gamma \in \QF{\specvars}$, we have that 
$\toform{\formF_D} \land \LTLglobally(\gamma \leftrightarrow \true)
\equiv
 \toform{\formF_D}[\gamma  \mapsto \true] \land \LTLglobally(\gamma \leftrightarrow \true)$.
Since the substitution is performed only in $F_D$,  we obtain 
$\toform{\formF_\ga \cup \formE_\ga \cup \Imp_\ga} \equiv \toform{\formF_\ga' \cup \formE_\ga' \cup \Imp_\ga'}$.
The rule does not modify the sets $\Imp_\as$ and $\Imp_\ga$, 
thus condition~(b) is also satisfied.

\medskip

\noindent
\textbf{Rule \propagateassump.}
Since $\Imp_\ga' = \Imp_\as \cup \Imp_\ga$,  
we have that $\toform{\formF_\ga \cup \formE_\ga \cup \Imp_\ga} \equiv 
\toform{\formF_\as \cup \formE_\as \cup \Imp_\as} 
\rightarrow \toform{\formF_\ga \cup \formE_\ga \cup \Imp_\ga'}
\equiv  \toform{\formF_\ga' \cup \formE_\ga' \cup \Imp_\ga'}$.
Additionally, $\Imp_\ga' = \Imp_\as \cup \Imp_\ga$ entails condition~(b).

\medskip

\noindent
\textbf{Rule \propagateG.}
Since $\toform{ \formE_D \land \bigwedge \Imp_D } \rightarrow \toform{ \Imp_D' }$ 
and $\Imp_D \subseteq \Imp_D'$,    
it directly follows that $\toform{\formF_\ga \cup \formE_\ga \cup \Imp_\ga} \equiv \toform{\formF_\ga' \cup \formE_\ga' \cup \Imp_\ga'}$ and condition~(b) is satisfied.

\medskip

\noindent
\textbf{Rule \propagateW.}
The argument is similar to that for rule \propagateG\ 
by establishing that the premises of the rule entail that that
$\toform{ \formE_D}  \land \toform{ \Imp_D} \rightarrow \LTLglobally(\alpha_1 \land \alpha_2)$.
 
 \medskip
 
\noindent
\textbf{Rule \geninv.}
The premises of the rule guarantee 
for every sequence $\rho \in (\progvars\cup\inputs)^\omega$ that 
if $\langle\rho[0], \rho[1]\rangle \models \gamma$ and for every $i \in \Nat$ it holds that $\langle \rho[i],\rho[i+1]\rangle \models \DedInv_D(q) $, 
then $\rho \models \LTLglobally \alpha$. 
Thus,  the implication $\LTLglobally(\DedInv_D(q)) \rightarrow \LTLglobally(\gamma \rightarrow\LTLglobally\alpha)$ is valid.
Therefore, the implication  $\toform{\Imp_D} \rightarrow \toform{\Imp_D'}$ is valid, and hence the rule is sound.

\medskip

\noindent
\textbf{Rule \geninvp.}
The premise of the rule guarantees that 
$\assignment \models \alpha$ if and only if 
there exists a sequence $\rho \in (\progvars\cup\inputs)^*$ such that 
$\langle\rho[0],\rho[1]\rangle \models \gamma$, 
for every $i < |\rho|$ it holds that $\langle \rho[i],\rho[i+1]\rangle \models \DedInv_D(q) $,  
and for some $\assignment = \rho[|\rho|-1]$.
Intuitively,  $\alpha$ characterizes precisely the set of assignments ``reachable'' from $\gamma$ via a sequence of assignments in which every consecutive pair of assignments satisfies  $\DedInv_D(q)$.
This means that $\alpha$ is invariant on such sequences.
Thus,  the implication 
$\LTLglobally(\DedInv_D(q)) \rightarrow \LTLglobally(\gamma \rightarrow\LTLglobally\alpha)$ 
is valid.
Therefore, the implication $\toform{\Imp_D} \rightarrow \toform{\Imp_D'}$ is valid, 
and hence the rule is sound.

\medskip

\noindent
\textbf{Rule \genreach.}
The premise of the rule guarantees that 
$\assignment \models \gamma$ if and only if 
there exists a sequence $\rho \in (\progvars\cup\inputs)^*$ such that 
$\rho[|\rho|] \models \beta$, 
for every $i < |\rho|$ it holds that $\langle \rho[i],\rho[i+1]\rangle \models \DedInv_D(q)$,  
and $\assignment = \rho[0]$.
Intuitively,  $\gamma$ characterizes a set of assignments from which every sequence of assignments in which every consecutive pair satisfies  $\DedInv_D(q) $ eventually ``reaches' $\beta$.
Thus,  
we have that the implication 
$\LTLglobally(\DedInv_D(q))  \rightarrow \LTLglobally(\gamma \rightarrow\LTLeventually\beta)$ is valid.
Therefore, the implication  
$\toform{\formE_D} \wedge \toform{\Imp_D} \rightarrow \toform{\Imp_D'}$ 
is valid, and hence the rule is sound.

\medskip

\noindent
\textbf{Rules \chainimp,\chainimpG,\chainimpF,\chainimpN,\joinimp.}
All of these rules ensure that
$\toform{\formF_D} \equiv \toform{\formF_D'}$ and 
$\toform{\formE_D} \equiv \toform{\formE_D'}$ for $D \in \agset$, and that 
$\toform{\formE_\as} \land \toform{\Imp_\as} \equiv \toform{\Imp_\as'}$ and 
$\toform{\formE_\as \cup \formE_\ga} \land \toform{\Imp_\ga} \equiv \toform{\Imp_\ga'}$.
This implies $\toform{\formF_\ga \cup \formE_\ga \cup \Imp_\ga^1} \equiv \toform{\formF_\ga' \cup \formE_\ga' \cup \Imp_\ga'}$ as well as condition~(b).
\end{proof}

\begin{lemma}\label{lem:proprules}
    Let $M$ be a monitor, $\Phi \in \templogic(\specvars)$, $q \in Q$ and $\pi\cdot\nu\cdot\rho \in \assignments{\progvars\cup\inputs}^\omega$ for $\nu \in \assignments{\progvars\cup\inputs}$.
    If \Cref*{eq:monitor-correctness} and \Cref*{eq:imp-correctness} hold for $q \in Q$ 
    then they also hold on $\applyRules(q) \in Q$. 
\end{lemma}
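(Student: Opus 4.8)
The plan is to reduce the claim to a single rule application and then induct. By construction, $\applyRules$ is a finite composition of the transformation rules in $\ruleset$, and \Cref{thm:rule-soundness} guarantees that each such rule is a sound monitor-state transformation in the sense of \Cref{def:sound-transform}. Hence it suffices to prove the following: whenever $\transform$ is a sound monitor-state transformation with $\transform(q_1) = q_2$, and both correctness properties~(\ref{eq:monitor-correctness}) and~(\ref{eq:imp-correctness}) hold for $q_1$ with respect to any prefix $\pi\cdot\assignment$ reaching $q_1$, then they hold for $q_2$ with respect to the same prefix. The full lemma then follows by induction on the number of rules composing $\applyRules$, with the trivial base case of the identity transformation. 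Throughout I fix a prefix $\pi\cdot\assignment$ and a suffix $\rho$ and reason purely semantically about $\assignment\cdot\rho$; the transformation leaves the reaching prefix unchanged, so $q_1$ and $q_2$ are taken to be reached by the same $\pi\cdot\assignment$.

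First I would establish preservation of the implied-state correctness property~(\ref{eq:imp-correctness}), since the other part depends on it. This is a direct ``implication chase'' using conditions (b) and (c). For the assumption part: if $\assignment\cdot\rho \models \toform{\formE_\as^2}$, then condition (c) gives $\assignment\cdot\rho \models \toform{\formE_\as^1}$, property~(\ref{eq:imp-correctness}) for $q_1$ gives $\assignment\cdot\rho \models \toform{\Imp_\as^1}$, and the first implication of condition (b) then yields $\assignment\cdot\rho \models \toform{\Imp_\as^2}$. For the guarantee part, the analogous three steps applied for both $D \in \agset$ derive $\assignment\cdot\rho \models \toform{\Imp_\as^1}$ and $\assignment\cdot\rho \models \toform{\Imp_\ga^1}$ from $\assignment\cdot\rho \models \toform{\formE_\as^2 \cup \formE_\ga^2}$, after which the second implication of condition (b) concludes $\assignment\cdot\rho \models \toform{\Imp_\ga^2}$.

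With~(\ref{eq:imp-correctness}) now available for both $q_1$ and $q_2$, I would prove preservation of the monitor-state correctness property~(\ref{eq:monitor-correctness}). The bridge is the observation that, on any sequence $\assignment\cdot\rho$ reaching a state $q$ that satisfies~(\ref{eq:imp-correctness}), the state formula $\stateformula(q)$ is equivalent to its augmented version $\toform{\formF_\as \cup \formE_\as \cup \Imp_\as} \rightarrow \toform{\formF_\ga \cup \formE_\ga \cup \Imp_\ga}$; this is a short case analysis in which property~(\ref{eq:imp-correctness}) is used to freely insert or remove the implied conjuncts $\toform{\Imp_\as}$ and $\toform{\Imp_\ga}$ once the matching $\toform{\formE_\as}$ and $\toform{\formE_\ga}$ conjuncts are present. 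Condition (a) states exactly that these augmented versions for $q_1$ and $q_2$ are logically equivalent. Chaining the three equivalences shows that $\stateformula(q_1)$ and $\stateformula(q_2)$ agree on every $\assignment\cdot\rho$ reaching the state; combined with~(\ref{eq:monitor-correctness}) for $q_1$, namely $\pi\cdot\assignment\cdot\rho \models \Phi$ iff $\assignment\cdot\rho \models \stateformula(q_1)$, this yields the same equivalence for $q_2$.

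The main obstacle is precisely the mismatch between condition (a) and property~(\ref{eq:monitor-correctness}): soundness equates the state formulas only after the implied sets $\toform{\Imp_\as}$ and $\toform{\Imp_\ga}$ are added in, whereas $\stateformula(q)$ deliberately omits them to avoid the circular reasoning discussed in \Cref{sec:rules}. Consequently the two properties cannot be proven independently --- one must re-establish~(\ref{eq:imp-correctness}) for the transformed state so that the $\Imp_D$ formulas come ``for free'' on the relevant sequences, and only then can the equivalence of augmented formulas supplied by condition (a) be transferred back to the unaugmented $\stateformula$. Getting this ordering right, and carefully distinguishing implications that are ``valid'' (conditions (b), (c) and (a)) from those that only hold on sequences reaching the state (property~(\ref{eq:imp-correctness})), is the crux of the argument.
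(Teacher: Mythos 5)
Your proposal is correct and follows the same overall strategy as the paper: reduce $\applyRules$ to a composition of individual rules, invoke \Cref{thm:rule-soundness}, and then push the two correctness properties through conditions (a)--(c) of \Cref{def:sound-transform}. The differences are in how the details are discharged, and here your version is actually the more careful one. The paper's proof first observes that \Cref{def:sound-transform} is closed under composition (so that $\applyRules$ is itself a single sound transformation) and then, for property~(\ref{eq:imp-correctness}), passes to the \emph{stronger} claim that the implications $\toform{\formE_\as^2} \to \toform{\Imp_\as^2}$ and $\toform{\formE_\as^2 \cup \formE_\ga^2} \to \toform{\Imp_\ga^2}$ are \emph{valid}; this stronger invariant is what really propagates through the construction (it holds trivially at $\qinit$ where the $\Imp_D$ sets are empty), but it is not literally what the lemma's hypothesis gives you. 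Your implication chase via conditions (c), then~(\ref{eq:imp-correctness}) at $q_1$, then condition (b) works directly with the semantic, prefix-relative form of~(\ref{eq:imp-correctness}) stated in the lemma, so it matches the hypothesis exactly. More significantly, the paper's proof says nothing explicit about why~(\ref{eq:monitor-correctness}) is preserved, leaving implicit precisely the point you identify as the crux: condition (a) equates the $\Imp_D$-augmented formulas, whereas $\stateformula(q)$ omits the $\Imp_D$ components, so one must first re-establish~(\ref{eq:imp-correctness}) at $q_2$ and then use it on both sides to trade $\stateformula(q_i)$ for its augmented version on the sequences in question. Supplying that bridge is a genuine completion of the argument rather than a deviation from it.
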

\begin{proof}
    First note that~\Cref*{def:sound-transform} is closed under transitivity. 
    Hence,~\Cref*{def:sound-transform} also holds on $\applyRules$.
    To this end let $q = q_1$ and $\applyRules(q) = q_2$ with the symbols of \Cref*{def:sound-transform}.
    To prove \Cref*{eq:imp-correctness} it suffices to prove the stronger statement that $\toform{\formE_\as^2} \to \toform{\Imp_\as^2}$ and $\toform{\formE_\as^2 \cup \formE_\ga^2} \to \toform{\Imp_\ga^2}$ are valid. 
    This follows from the precondition and \Cref*{def:sound-transform}.
\end{proof}

\begin{lemma}\label{lem:propnextstate}
    Let $M$ be a monitor, $\Phi \in \templogic(\specvars)$, $q \in Q$ and $\pi\cdot\nu_1\cdot\rho \in \assignments{\progvars\cup\inputs}^\omega$ for $\rho = \nu_2 \cdot \rho'$ and $\nu_1, \nu_2 \in \assignments{\progvars\cup\inputs}$
    If \Cref*{eq:monitor-correctness} and \Cref*{eq:imp-correctness} hold for $q \in Q$ 
    then they also hold on $\nextState(q, a)$ for $a = \{p \in \preds\mid \langle \assignment_1,\assignment_2\rangle \models p\}$ 
    (with $\pi \cdot \nu_1$, $\nu_2$, and $\rho'$ as the respective trace elements).
\end{lemma}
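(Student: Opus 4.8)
The plan is to reduce both claims to the pointwise behaviour of $\expand$ recorded in \Cref{lem:expansion} and its lifting \Cref{lem:liftexpansion}, together with the elementary observation that formulas of the form $\LTLglobally\psi$ are preserved under taking suffixes. Throughout, I would fix $a = \{p \in \preds \mid \langle \nu_1,\nu_2\rangle \models p\}$, so that $\langle \nu_1,\nu_2\rangle \FOLentailsT{T} (\bigwedge_{\alpha \in \preds \cap a}\alpha)\land(\bigwedge_{\alpha \in \preds\setminus a}\lnot\alpha)$ holds by the very definition of $a$; this is exactly the side condition needed to invoke \Cref{lem:expansion} and \Cref{lem:liftexpansion} on the sequence $\nu_1\cdot\nu_2\cdot\rho'$, whose one-step suffix is $\nu_2\cdot\rho'$.

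First I would discharge the monitor-state correctness property \eqref{eq:monitor-correctness}. Reading off the designated trace elements $\pi\cdot\nu_1$, $\nu_2$, $\rho'$, the instance to establish for $\nextState(q,a)$ is that $\pi\cdot\nu_1\cdot\nu_2\cdot\rho'\models\Phi$ if and only if $\nu_2\cdot\rho'\models\stateformula(\nextState(q,a))$, whereas the hypothesis for $q$ (with trace elements $\pi$, $\nu_1$, $\nu_2\cdot\rho'$) states that $\pi\cdot\nu_1\cdot\nu_2\cdot\rho'\models\Phi$ if and only if $\nu_1\cdot\nu_2\cdot\rho'\models\stateformula(q)$. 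The two left-hand sides coincide, so it suffices to prove $\nu_1\cdot\nu_2\cdot\rho'\models\stateformula(q)$ if and only if $\nu_2\cdot\rho'\models\stateformula(\nextState(q,a))$, which is precisely \Cref{lem:liftexpansion} instantiated with the sequence $\nu_1\cdot\nu_2\cdot\rho'$. Hence \eqref{eq:monitor-correctness} transfers immediately.

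The implied-state property \eqref{eq:imp-correctness} carries the real bookkeeping, because $\nextState$ leaves $\Imp_\as,\Imp_\ga$ untouched but replaces each $\formE_D$ by $\expandSet(\formE_D)$ (adding $\propagate(a)$ on the guarantee side) while the suffix shrinks from $\nu_1\cdot\nu_2\cdot\rho'$ to $\nu_2\cdot\rho'$. I would rely on two facts: (i) applying \Cref{lem:expansion} to every $\psi\in\formE_D$ gives that $\nu_2\cdot\rho'\models\toform{\expandSet(\formE_D)}$ if and only if $\nu_1\cdot\nu_2\cdot\rho'\models\toform{\formE_D}$; and (ii) since each element of $\Imp_D$ has the shape $\LTLglobally(\gamma\to\varphi)$, satisfaction of $\toform{\Imp_D}$ on $\nu_1\cdot\nu_2\cdot\rho'$ forces satisfaction on its suffix $\nu_2\cdot\rho'$. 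Chaining these with the hypothesis for $q$: if $\nu_2\cdot\rho'\models\toform{\expandSet(\formE_\as)}$, then (i) gives $\nu_1\cdot\nu_2\cdot\rho'\models\toform{\formE_\as}$, the hypothesis for $q$ gives $\nu_1\cdot\nu_2\cdot\rho'\models\toform{\Imp_\as}$, and (ii) gives $\nu_2\cdot\rho'\models\toform{\Imp_\as}$, which is the assumption implication. The guarantee implication is analogous: its premise $\nu_2\cdot\rho'\models\toform{\expandSet(\formE_\as)\cup\expandSet(\formE_\ga)\cup\propagate(a)}$ is only stronger than $\nu_2\cdot\rho'\models\toform{\expandSet(\formE_\as)\cup\expandSet(\formE_\ga)}$, so I would drop the $\propagate(a)$ conjuncts and repeat the argument with $\formE_\as\cup\formE_\ga$ and $\Imp_\ga$ to conclude $\nu_2\cdot\rho'\models\toform{\Imp_\ga}$.

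I expect the only genuinely delicate point to be keeping the suffix-shift bookkeeping consistent, namely that $\Imp_\as,\Imp_\ga$ are transported verbatim while the $\formE_D$ components and the evaluated suffix both advance one step; this is exactly what observation (ii) resolves. The role of $\propagate(a)$ is harmless because it appears only on the premise side of the guarantee implication, where strengthening the premise cannot invalidate the entailment.
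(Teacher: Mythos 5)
Your proposal is correct and follows essentially the same route as the paper: the monitor-state correctness property is transferred via the chain $\pi\cdot\nu_1\cdot(\nu_2\cdot\rho')\models\Phi \iff \nu_1\cdot\nu_2\cdot\rho'\models\stateformula(q) \iff \nu_2\cdot\rho'\models\stateformula(\nextState(q,a))$ using \Cref{lem:liftexpansion}, exactly as in the paper. For the implied-state property the paper only remarks that ``the expansion results shift in a similar fashion''; your argument---pointwise application of \Cref{lem:expansion} to the $\formE_D$ components, suffix-closedness of the $\LTLglobally$-shaped elements of $\Imp_D$, and the observation that $\propagate(a)$ only strengthens the premise---is the correct elaboration of that remark rather than a different approach.
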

\begin{proof}
    \Cref*{eq:monitor-correctness} follows directly from \Cref*{lem:liftexpansion} and the precondition that the language equivalence holds on $q$:
        \[ \pi \cdot \nu_1 \cdot (\nu_2 \cdot \rho') \models \Phi  \iff \nu_1 \cdot \nu_2 \cdot \rho' \models \stateformula(q) \iff  \nu_2 \cdot \rho' \models \stateformula(\nextState(q, a)) \]
    For \Cref*{eq:imp-correctness} the expansion results shift in a similar fashion.
\end{proof}

\begin{lemma}\label{lem:propertyovermonitor}
Let $M$ be a monitor constructed from $\Phi \in \templogic(\specvars)$ as described in \Cref*{sec:monitor-states}, \Cref*{sec:expansion}, and \Cref*{sec:rules}.
Then $M$ satisfies the monitor-state correctness property~(\ref*{eq:monitor-correctness}) and the implied-state correctness property~(\ref*{eq:imp-correctness}).
\end{lemma}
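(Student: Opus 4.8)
The plan is to prove both properties by induction on the length of the finite prefix $\pi\cdot\assignment$ that reaches a state $q=\delta^*_M(\pi\cdot\assignment)$. Concretely, I would show that for every nonempty prefix $\pi\cdot\assignment$, letting $q=\delta^*_M(\pi\cdot\assignment)$, the monitor-state correctness property~(\ref{eq:monitor-correctness}) and the implied-state correctness property~(\ref{eq:imp-correctness}) hold for $q$ with respect to the decomposition into $\pi$, $\assignment$, and an arbitrary suffix $\rho$. Since $\delta^*_M$ is a function, this covers every reachable (prefix, state) pair and is exactly the universally quantified statement to be proved.

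For the base case $|\pi\cdot\assignment|=1$ we have $\pi=\varepsilon$ and $\delta^*_M(\assignment)=\qinit$. By the definition of the initial state, $\formF^0_D\cup\formE^0_D$ partitions the conjuncts of $\Phi_D$ for each $D\in\agset$, so $\stateformula(\qinit)=\Phi_\as\rightarrow\Phi_\ga=\Phi$; hence~(\ref{eq:monitor-correctness}) reduces to the tautology $\assignment\cdot\rho\models\Phi$ iff $\assignment\cdot\rho\models\Phi$. Moreover $\Imp_\as=\Imp_\ga=\emptyset$ in $\qinit$, so both $\toform{\Imp_\as}$ and $\toform{\Imp_\ga}$ are the empty conjunction $\true$ and~(\ref{eq:imp-correctness}) holds vacuously. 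If the construction additionally normalizes $\qinit$ by one application of $\applyRules$, then \Cref{lem:proprules} preserves both properties, so the base case holds in either reading.

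For the inductive step, assume $|\pi\cdot\assignment|\geq 2$ and write $\pi=\tilde\pi\cdot\assignment_1$. Let $q'=\delta^*_M(\tilde\pi\cdot\assignment_1)$; by the induction hypothesis both properties hold for $q'$, in particular for the suffix $\assignment\cdot\rho$. By the definition of $\delta^*_M$ we have $q=\delta(q',a)=\applyRules(\nextState(q',a))$ with $a=\{p\in\preds\mid\langle\assignment_1,\assignment\rangle\models p\}$. I would first invoke \Cref{lem:propnextstate} to transfer both properties from $q'$ to the intermediate state $\nextState(q',a)$; matching its notation by $\nu_1=\assignment_1$, $\nu_2=\assignment$ and $\rho'=\rho$ shifts the trace decomposition so that the prefix becomes $\pi$, the current letter becomes $\assignment$, and the suffix becomes $\rho$. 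I would then invoke \Cref{lem:proprules} to transfer the properties from $\nextState(q',a)$ to $\applyRules(\nextState(q',a))=q$ under the same decomposition, which closes the induction.

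The genuine content is packaged into the two transfer lemmas, so this lemma is essentially an assembly step: \Cref{lem:propnextstate} rests on the expansion-correctness result \Cref{lem:liftexpansion} (hence on \Cref{lem:expansion}), while \Cref{lem:proprules} rests on \Cref{thm:rule-soundness}, which certifies that each transformation rule is a sound monitor-state transformation in the sense of \Cref{def:sound-transform}. The only delicate point I expect is the bookkeeping of how the trace is re-decomposed when passing from $q'$ to its successor: one must ensure that the single ``$\assignment\cdot\rho$'' suffix used at $q'$ becomes the ``$\assignment$ then $\rho$'' split demanded at $q$, and that the predicate set $a$ read on the transition is exactly $\{p\in\preds\mid\langle\assignment_1,\assignment\rangle\models p\}$, so that $\expand$ and $\propagate$ are evaluated against the valuation actually taken by the play. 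Beyond this bookkeeping I do not anticipate any obstacle, as the semantic work -- first-order reasoning, the fixpoint-based generation rules, and preservation of $\stateformula$ under the rules -- has already been discharged by \Cref{thm:rule-soundness} and \Cref{lem:liftexpansion}.
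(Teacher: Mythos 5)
Your proposal matches the paper's proof essentially verbatim: the same induction on the length of the prefix, the same base case using $\stateformula(\qinit)=\Phi$ and the emptiness of $\Imp_\as,\Imp_\ga$, and the same inductive step that transfers both properties through $\nextState$ via \Cref{lem:propnextstate} and then through $\applyRules$ via \Cref{lem:proprules}. The bookkeeping concern you flag about re-decomposing the trace and reading off $a=\{p\in\preds\mid\langle\assignment_1,\assignment\rangle\models p\}$ is exactly what the paper's inductive case handles, so there is nothing missing.
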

\begin{proof}
We prove this statement by induction over the length of $\pi \in \assignments{\progvars \cup \inputs}^*$ in both statements.
Note that we keep $\rho$ and $q$ in the inductive statement quantified. 

\textbf{Case} $|\pi| = 1$ where $\pi = \epsilon \cdot \nu$ for $\nu \in  \assignments{\progvars \cup \inputs}$: 
By definition, $\delta_M^*(\epsilon \cdot \nu) = \qinit$.
As $\stateformula(\qinit) = \Phi$ the monitor-state correctness property trivially holds. 
Furthermore, as initially for $\qinit$, $\Imp_\as$ and $\Imp_\ga$ are empty, the implied-state correctness property holds.

\textbf{Case} $|\pi| > 1$ where $\pi = \pi' \cdot \nu_1 \cdot \nu_2$ for $\pi' \in \assignments{\progvars \cup \inputs}^*$ and $\nu_1, \nu_2 \in \assignments{\progvars \cup \inputs}$ :
Let $a := \{p \in \preds\mid \langle \assignment_1,\assignment_2\rangle \models p\}$ be then, by definition
$\delta_M^*(\pi) = \delta(q', a) = \applyRules(\nextState(q', a))$ 
where $q' = \delta^*_M(\pi'\cdot\assignment_1)$.
By induction hypothesis the monitor-state correctness property and implied-state correctness property hold from $q'$.
By \Cref*{lem:propnextstate} and \Cref*{lem:proprules} the also hold for $q$.
\end{proof}

\begin{lemma}\label{lem:propertyaftertransform}
Let $M$ be a monitor, such that the monitor-state correctness property~(\ref*{eq:monitor-correctness}) and the implied-state correctness property~(\ref*{eq:imp-correctness}) hold for all $q \in Q$.
Then after applying the processing in \Cref*{sec:liveness}, the monitor-state correctness property~(\ref*{eq:monitor-correctness}) sill holds for all $q \in Q$.
\end{lemma}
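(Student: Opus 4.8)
The plan is to show that the post-processing of \Cref{sec:liveness} only \emph{weakens} the formula $\stateformula(q)$ associated with each state, and that this weakening is semantically harmless on exactly the runs the monitor must classify. First I would record three structural observations. The transformation is \emph{local}: it touches only the $\formF_\ga$ component of states $q_0 \in A_{\LTLglobally\LTLeventually\beta}$, leaving the transition function, all $\formE_D$ and $\Imp_D$ components, and the monitor's reachability structure unchanged. Consequently the implied-state correctness property~(\ref{eq:imp-correctness}) is preserved verbatim and remains available, so it suffices to re-establish~(\ref{eq:monitor-correctness}). Moreover, since a non-nested conjunct $\LTLglobally\LTLfinally\beta$ of $\formF_\ga$ is replaced by $\true$, the new state formula is implied by the old one, i.e.\ $\stateformula_{\mathrm{old}}(q_0) \models \stateformula_{\mathrm{new}}(q_0)$. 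Finally, $A_{\LTLglobally\LTLeventually\beta}$ is closed under reachability in the monitor, so every state visited along a run starting in such a $q_0$ again lies in $A_{\LTLglobally\LTLeventually\beta}$, hence in $A_{\LTLeventually\beta}$.

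Fixing $\beta$ and $q_0\in A_{\LTLglobally\LTLeventually\beta}$, and any $\pi\cdot\assignment\cdot\rho$ with $\delta^*_M(\pi\cdot\assignment)=q_0$, I would prove the two directions of~(\ref{eq:monitor-correctness}) separately. The direction $\pi\cdot\assignment\cdot\rho\models\Phi \Rightarrow \assignment\cdot\rho\models\stateformula_{\mathrm{new}}(q_0)$ is immediate: the old property gives $\assignment\cdot\rho\models\stateformula_{\mathrm{old}}(q_0)$, which implies $\stateformula_{\mathrm{new}}(q_0)$. For the converse it suffices, again by the old property, to show $\assignment\cdot\rho\models\stateformula_{\mathrm{new}}(q_0)$ implies $\assignment\cdot\rho\models\stateformula_{\mathrm{old}}(q_0)$. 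Writing $\stateformula_{\mathrm{old}}(q_0)$ as $A \rightarrow (G'\land\LTLglobally\LTLfinally\beta)$ and $\stateformula_{\mathrm{new}}(q_0)$ as $A\rightarrow G'$, the case where the assumptions $A$ fail on $\assignment\cdot\rho$ is trivial; the real content is that whenever the run satisfies $A$ and all remaining guarantees $G'$, it also satisfies $\LTLglobally\LTLfinally\beta$.

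This liveness fact is where the definition of $A_{\LTLglobally\LTLeventually\beta}$ is used, via an ``infinitely often'' argument over the finite monitor graph. Fix a position $j$; the visited state $q_j$ lies in $A_{\LTLeventually\beta}$, so the actual run from $q_j$ reaches $Q_{\LTLeventually\beta}\cup Q_\mathit{Triv}$ at some $k\ge j$. If $q_k\in Q_{\LTLeventually\beta}$, then some $\LTLglobally(\gamma\rightarrow\LTLeventually\beta)\in\Imp_\ga(q_k)$ satisfies $\Curr_\ga(q_k)\land\DedInv_\ga(q_k)\FOLentailsT{T}\gamma$; invoking~(\ref{eq:imp-correctness}) at $q_k$ yields that the suffix at $q_k$ satisfies $\toform{\Imp_\ga(q_k)}$, so it satisfies $\gamma$ (by $\Curr_\ga,\DedInv_\ga$) and hence $\LTLfinally\beta$, giving $\beta$ at some point $\ge j$. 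If instead $q_k\in Q_\mathit{Triv}$, I would argue that $\stateformula(q_k)=\bot$ is impossible under $A\land G'$ (a state still carrying the non-discharged obligation $\LTLglobally\LTLfinally\beta$ can only become trivially false through a contradiction among the \emph{other} obligations, which the run satisfies), while $\stateformula(q_k)=\top$ together with satisfaction of the assumptions at $q_k$ already forces $\LTLglobally\LTLfinally\beta$ on the suffix. As $j$ was arbitrary, $\beta$ recurs infinitely often, so $\assignment\cdot\rho\models\LTLglobally\LTLfinally\beta$, completing the converse direction.

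The main obstacle I anticipate is the bookkeeping that licenses the appeal to~(\ref{eq:imp-correctness}) at the intermediate states $q_k$: one must show that ``$\assignment\cdot\rho$ satisfies the assumptions and all guarantees except the liveness conjunct'' descends to ``the suffix at $q_k$ satisfies the local sets $\formE_\as(q_k)\cup\formE_\ga(q_k)$'', whose satisfaction is the hypothesis of~(\ref{eq:imp-correctness}). I would obtain this by a separate induction along the run, iterating the pointwise expansion of \Cref{lem:liftexpansion} and using that, under $A$, the tracked assumption obligations are met at every step, so the guarantee conjunction $\toform{\formF_\ga\cup\formE_\ga}$ (minus the persistent liveness conjunct) is carried forward; here \Cref{def:sound-transform}(a) and (c) guarantee that $\applyRules$ preserves the relevant conjunctions while only strengthening the $\formE_D$ sets. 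The delicate remaining points are the careful separation of the $Q_\mathit{Triv}$ sub-cases and checking that applying the substitution simultaneously at \emph{all} states of $A_{\LTLglobally\LTLeventually\beta}$ does not disturb this propagation, which holds precisely because the substitution leaves the transitions and the $\formE_D$, $\Imp_D$ components intact.
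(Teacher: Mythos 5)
Your proposal follows essentially the same argument as the paper's (much terser) proof: use the implied-state correctness property~(\ref{eq:imp-correctness}) at states of $Q_{\tiny\LTLeventually\beta}$ to extract $\LTLeventually\beta$, and use the reachability and invariance structure of $A_{\tiny\LTLeventually\beta}$ and $A_{\tiny\LTLglobally\LTLeventually\beta}$ to conclude that $\LTLglobally\LTLeventually\beta$ is already entailed on the relevant traces, so that replacing it by $\true$ preserves~(\ref{eq:monitor-correctness}); your elaboration is in fact more explicit than the paper's four-sentence proof, which neither separates the two directions of the equivalence nor discusses $Q_\mathit{Triv}$. The one step I would not accept as written is your parenthetical reason why $\stateformula(q_k)=\bot$ cannot occur under $A\land G'$ --- rule \substunsatF\ can derive $\bot$ precisely from the liveness obligation clashing with the implied invariants, so the impossibility of reaching such a state must instead be obtained from~(\ref{eq:imp-correctness}) together with the $E$-set propagation along the run that you already single out as the main remaining obstacle.
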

\begin{proof}
In the first computation step, by \Cref*{eq:imp-correctness} we get for $Q_{\tiny\LTLeventually \beta}$ indeed only state where $\LTLeventually \beta$ holds on all traces where the assumptions hold and $\formE_\ga$ hold.
By \Cref*{eq:monitor-correctness} those hold if and only if the traces hold on $\Phi$
As $\LTLeventually \beta$ holds on all trace suffixes, by the second and third step $\LTLglobally\LTLeventually \beta$ is indeed true on all traces where $\Phi$ holds.
Hence, removing it maintains \Cref*{eq:monitor-correctness}.
\end{proof}

\restateMonitorCorrectness*
\begin{proof}
We already argue in~\Cref*{sec:verdict-labelling} why the constructed monitor is well-formed.
\Cref*{lem:propertyovermonitor} and \Cref*{lem:propertyaftertransform} imply that the monitor-state correctness property holds for all states $q \in Q$ of the monitor.
Hence, it remains to show that we assign the correct verdict according to \Cref*{def:monitor-formula}.

Let $\pi \in {(\assignments{\progvars\cup\inputs})}^*$, $\nu \in \assignments{\progvars\cup\inputs}$, and let $q=\delta^*_M(\pi \cdot \nu)$.
Furthermore, let $\rho \in (\assignments{\progvars\cup\inputs})^\omega$.

If $\verdict(q) = \UNSAT$ then $\stateformula(q) = \bot$, and hence, $\nu \cdot \rho \not\models \stateformula(q)$.
As the monitor-state correctness holds, we can conclude $\pi \cdot \nu \cdot \rho \not\models \Phi$, which proves the condition.

If $\verdict(q) = \SAFETY$ and for $q_i := \delta^*_M(\pi \cdot \nu \cdot \rho[0,i])$ such that $q_0 = q$ and for all $i \in \Nat$, $\verdict(q_i) \neq \UNSAT$ then by \Cref*{def:monitor} and the definition of $\verdict$, $\stateformula(q_i)$ is syntactic safety and $\stateformula(q_i) \neq \bot$.
Hence, expansion by \Cref*{lem:liftexpansion} implies that $\nu \cdot \rho[0,i]$ is not a bad prefix of the safety language $\lang{\stateformula(q)}$. 
Hence, by the property of a safety language, $\nu \cdot \rho \in \lang{\stateformula(q)}$.
As the monitor-state correctness holds, we can conclude $\pi \cdot \nu \cdot \rho \in \lang{\phi}$ which proves this case.

\end{proof}

\end{document}